\newtheorem{proposition}{Proposition}%
\DeclareMathOperator{\spn}{span}
\DeclareMathOperator{\Exp}{Exp}
\newcommand{\Cov}{\mathrm{Cov}}
\newcommand{\Cor}{\mathrm{Cor}}
\newcommand{\norm}[1]{\left \lVert #1 \right \rVert}
\theoremstyle{remark}
\newtheorem{corollary}{Corollary}
\DeclareMathOperator*{\argmin}{arg\,min}
\DeclareMathOperator{\Var}{Var}
\definecolor{mygreen}{rgb}{0, 0.82 , 0.0}
\definecolor{myorange}{rgb}{0.85, 0.5, 0.0}
\definecolor{mymagenta}{rgb}{1,0,1}
\definecolor{mycyan}{rgb}{0.0, 1.0, 1.0}
\definecolor{mypink}{rgb}{0.97, 0.68, 0.64}
\definecolor{myblue}{rgb}{0.57, 0.61, 0.90}
\definecolor{mylightgreen}{rgb}{0.66, 1, 0.66}
\definecolor{mygrey}{rgb}{0.6, 0.6, 0.6}
\title{Covariance Decomposition for Distance Based Species Tree
  Estimation}
\author[1]{Georgios Aliatimis}
\author[2]{Ruriko Yoshida}
\author[3]{Burak Boyac\i}
\author[4]{James A.  Grant}
\affil[1]{{STOR-i Centre for Doctoral Training}, {Lancaster University}, { {Lancaster}, {LA1 4YW}, {UK}}}
\affil[2]{{Department of Operations Research}, {Naval Postgraduate School}, {{1411 Cunningham Road}, {Monterey}, {93943}, {CA}, {USA}}}
\affil[3]{{Management School}, {Lancaster University}, {{Lancaster}, {LA1 4YX},  {UK}}}
\affil[4]{{School of Mathematical Sciences}, {Lancaster University}, {{Lancaster}, {LA1 4YF},  {UK}}}
\begin{document}

\maketitle

\abstract{
In phylogenomics, species-tree methods must contend with two major sources of noise;
stochastic gene-tree variation under the multispecies coalescent model (MSC) and 
finite-sequence substitutional noise. Fast agglomerative methods such as GLASS, STEAC, and METAL
combine multi-locus information via distance-based clustering. We derive the exact covariance matrix of these pairwise distance estimates under a joint MSC-plus-substitution
model and leverage it for reliable confidence estimation, and we algebraically decompose it into components attributable to coalescent variation versus sequence-level stochasticity. 
Our theory identifies parameter regimes 
where one source of variance greatly exceeds the other. For both very low and very high
mutation rates, substitutional noise dominates, while coalescent variance is the primary 
contributor at intermediate mutation rates. Moreover, the interval over which coalescent variance dominates becomes narrower as the species-tree height increases. These results 
imply that in some settings one may legitimately ignore the weaker noise source when 
designing methods or collecting data. In particular, when gene-tree variance is dominant, 
adding more loci is most beneficial, while when substitution noise dominates, longer sequences or imputation
are needed. Finally, leveraging the derived covariance matrix, we implement a Gaussian-sampling procedure to 
generate split support values for METAL trees and demonstrate empirically that this approach yields more reliable 
confidence estimates than traditional bootstrapping.
}

\section{Introduction}

A core problem in phylogenomics is the reconstruction of a species tree from gene alignments. 
A species tree encodes the evolutionary relationships and divergence times among given species, while gene trees represent the evolutionary histories of individual genes. The structure of a gene tree may differ from the species tree due to variation induced by processes such as incomplete lineage sorting (ILS),  horizontal gene transfer and hybridization \cite{maddison1997gene}. Finally, gene alignments are comparisons of DNA, RNA, or protein sequences across different species, from which gene trees may inferred. The task of species tree reconstruction is to infer the species tree from gene alignments. \\

\begin{figure}[!h]
    \centering
    \includegraphics[width=0.8\linewidth]{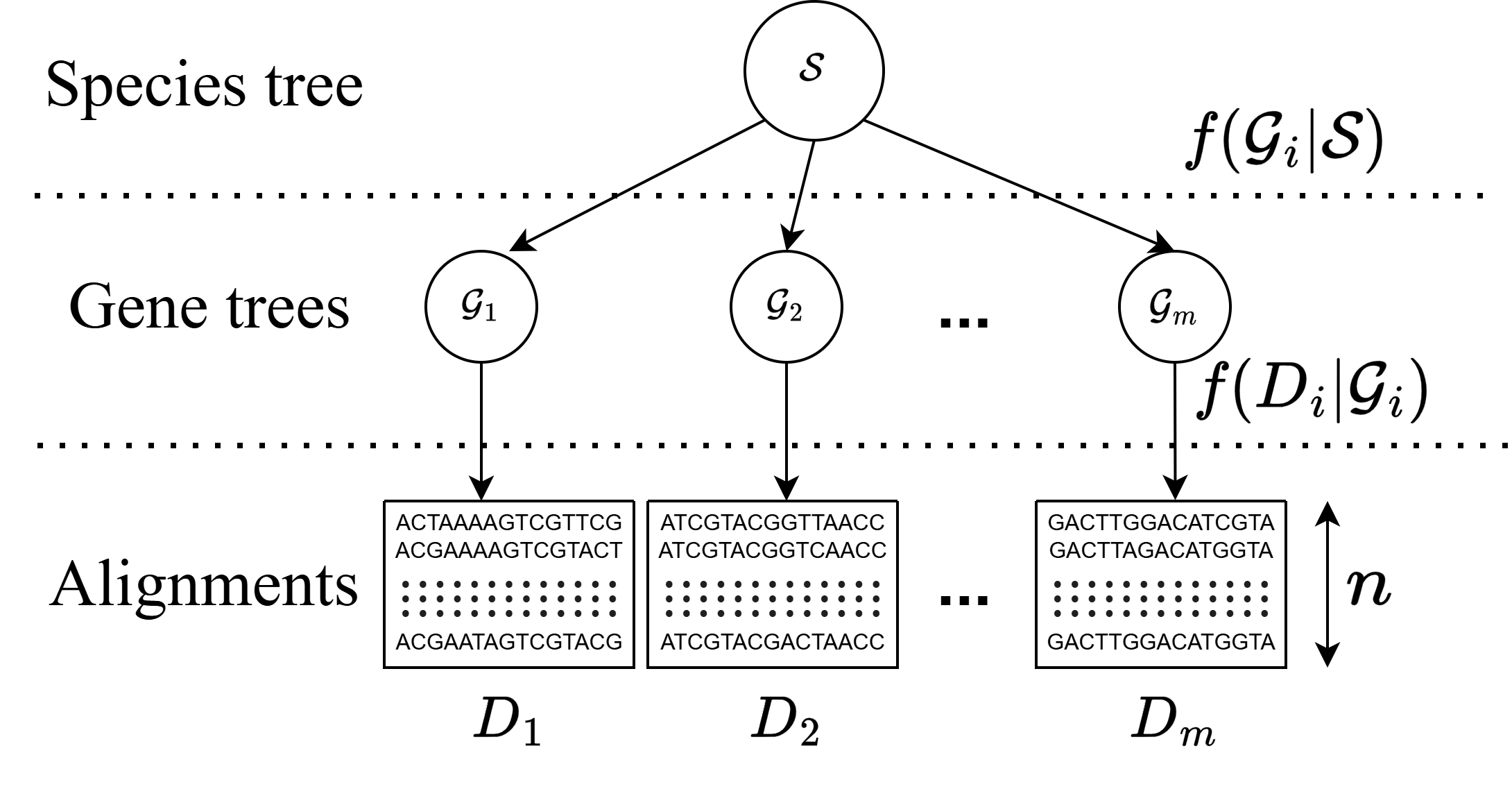}
    \caption{In phylogenetics, the goal is to reconstruct gene trees \(G_i\) for \(i \in [m]\) (intermediate level) from \(m\) gene alignments (bottom level) through a
    nucleotide substitution model $f(D|G)$.
    In phylogenomics, the aim is to infer the species tree \(\mathcal{S}\) (top level) from which these gene trees arose through the MSC model $f(G|\mathcal{S})$.}
    \label{fig:phylogenomics_as_hierarchical_model}
\end{figure}
\par
Hence, species tree reconstruction is commonly formulated as a hierarchical statistical model with three levels, as illustrated in Fig.~\ref{fig:phylogenomics_as_hierarchical_model}. At the top level, the species tree \(\mathcal{S}\) is the parameter of ultimate interest, to be estimated either as a point estimate or via a (approximate) posterior distribution. 
Given \(\mathcal{S}\), the second level consists of \(m\) gene trees \(\{G_i : i \in [m]\}\), whose likelihoods \(f(G_i \mid \mathcal{S})\) are defined by the multispecies coalescent (MSC) model, which captures the stochasticity introduced by incomplete lineage sorting (ILS) under a given species tree~\cite{degnan2009gene}. 
At the bottom level are the gene alignments \(\{D_i : i \in [m]\}\), whose likelihoods \(f(D_i \mid G_i)\) depend only on the corresponding gene trees, and are computed using nucleotide substitution models such as the Jukes-Cantor (JC69) model \cite{jukes1969evolution}, the 
Hasegawa-Kishino-Yano (HKY) model \cite{hasegawa1985dating}, and the Generalized Time Reversible (GTR) model \cite{tavare1986some}.
In practice, we observe only the gene alignments and must infer both the gene trees \(\{G_i: i\in[m]\}\) and more crucially for phylogenomics, the species tree \(\mathcal{S}\).\\
\par
Viewing a species tree reconstruction via a hierarchical model has several conceptual advantages: 
it clarifies the distinct sources of randomness (under coalescent and substitution models),
provides a foundation for Bayesian approaches \cite{yang2012molecular}, and facilitates the development of statistical methods for quantifying uncertainty at each level.
Moreover, it enables the integration of multiple data types and model components within a coherent inferential framework, as seen in widely used tools such as MrBayes \cite{ronquist2012mrbayes}, BEAST \cite{bouckaert2019beast},
and BPP \cite{flouri2018species}. 

\subsection{Species Tree Reconstruction Methods}

In contrast to computationally intensive methods for species tree reconstruction, distance-based approaches based on summary statistics---such as METAL~\cite{dasarathy2014data}, GLASS~\cite{mossel2008incomplete}, and STEAC~\cite{liu2009estimating}---offer simpler and more computationally efficient alternatives. These methods rely on estimated coalescent times or Hamming distances between species pairs, and then define summary statistics over these estimates across all genes. GLASS, which uses the \textit{minimum} coalescent times across genes as its summary statistic, can provide more accurate species tree estimates when gene tree uncertainty is minimal. In contrast, METAL and STEAC, which rely on \textit{averages} as their summary measure, tend to be more robust in the presence of greater variance in gene tree estimates or forms of gene tree heterogeneity other than ILS. \\
\par
Both STEAC and GLASS were originally developed as methods that take gene trees as input
and produce an inferred species tree as output. Their application in phylogenomics 
thus depends on the accurate reconstruction of gene trees from gene sequence data. 
In contrast, 
METAL was specifically designed to infer species trees directly from gene sequences
without requiring intermediate gene-tree estimation.
As a result, METAL 
typically achieves accurate species tree reconstruction with fewer loci.
Notably, when working directly with sequence data, METAL requires only 
 $\mathcal{O}(f^{-2})$ genes to recover the correct species tree topology with a 
a specified high probability,
where $f$ is the shortest internal branch length in the species tree. 
By comparison, GLASS and STEAC require $\mathcal{O}(f^{-3})$ and $\mathcal{O}(f^{-4})$ 
genes respectively~\cite{dasarathy2014data}. 
However, GLASS can outperform METAL in regimes when 
the number of bases per gene is large; as substitutional noise diminishes,
the advantage of GLASS's minimum distance criterion become more pronounced.\\
\par 
Despite the computational efficiency and practical appeal of the aforementioned methods, 
a rigorous understanding of how the two primary sources of uncertainty-- coalescent variation versus
substitutional noise-- combine to affect distance-based species tree inference has been lacking. 
In particular, it remains unclear under which parameter regimes methods like GLASS, STEAC or METAL
will be most reliable, since the relative magnitude of substitutional noise versus gene-tree variance
can shift dramatically with mutation rate, sequence length, and tree height. \\
\par 
Although bootstrap resampling of concatenated sequences is the de facto standard for 
assessing split support, it conflates coalescent and substitutional uncertainties and can yield misleading confidence values under extreme variance regimes. By deriving the full covariance matrix 
of METAL's pairwise distances, we introduce 
a Gaussian-sampling procedure that produces more accurate split support estimates than 
traditional bootstrapping.

\subsection{Contributions}
\par 
Considering the hierarchical nature of species tree reconstruction, 
uncertainty arises from two distinct sources; (1) variation in tree topologies 
governed by MSC, and (2) sequence-level stochasticity introduced by the substitution process used to infer gene trees from alignments (e.g., as JC69). By explicitly deriving the covariance matrices corresponding to each source, we 
obtain a quantitative framework for understanding how uncertainty propagates through 
distance-based estimates. 
In particular, we demonstrate that METAL outperforms GLASS
when substitutional variance dominates the total uncertainty, and vice versa. \\
\par
This decomposition enables informed decisions about study design, such as whether 
to prioritize collecting longer sequences or more independent loci. When coalescent
variance dominates, increasing the number of loci is most effective; when 
substitutional noise is dominant, longer alignments (perhaps via imputation strategies)
are preferable. \\
\par
In Section~\ref{sec:reconstruction_from_gene_trees}, we analyze species tree reconstruction from known gene trees under METAL, GLASS, and STEAC. We compute the covariance structure of METAL and STEAC, and show that METAL effectively interpolates between STEAC and GLASS as a function of mutation rate. In Section~\ref{sec:reconstruction_from_bases}, we study species tree reconstruction from gene base sequences, emphasizing the statistical behavior of METAL and STEAC under finite sequence lengths.
Section~\ref{sec:theory} presents our theoretical results: the decomposition of the total covariance of METAL and STEAC into coalescent and substitutional components, analysis of the spectrum of the METAL covariance matrix, and asymptotic expressions quantifying their relative magnitudes across a range of model parameters.
Section~\ref{sec:sim_results} contains two empirical applications. Section~\ref{sec:STE} compares the performance of METAL, GLASS, and STEAC across regimes where substitutional or coalescent variance dominates, showing that the relative error of GLASS versus METAL depends critically on the proportion of variance attributable to substitutional noise. In Section~\ref{sec:BPC}, we introduce a Gaussian-sampling procedure for computing support values for METAL trees, enabled by our covariance derivations. We show empirically that this method provides more reliable split confidence than traditional bootstrapping.
Finally, Section~\ref{sec:discussion} concludes the paper with a discussion of the implications, limitations, and future directions of this work.

\section{Species tree reconstruction from gene trees} \label{sec:reconstruction_from_gene_trees}
First, we introduce the notation that will be used in the rest of the paper.
The task is to successfully reconstruct the species tree $S = (V,E)$, where 
$V$ and $E$ is the set of nodes and edges in the tree respectively. 
The leaf set $L \subseteq V$ corresponds to the $n=|L|$ extant species examined.
We define $\tau_e$ to be the length of $e \in E$ in coalescent units and the smallest branch length as 
$f = \min_{e \in E} \{\tau_e \}$ and the diameter of the tree as 
$\Delta = \max_{a,b \in L} \tau_{ab} $. \\
\par
For simplicity, we sample a single individual per species, following the approach implemented in agglomerative, distance-based frameworks such as METAL \cite{braun2024testing}. 
We consider $m$ loci with associated gene trees
$G^{(i)}, i \in [m]$ 
with the same leaf set $L$, and define $g_{ab}^{(i)}$ as twice the time
in coalescent units to the most recent ancestor of $a$ and $b$ in $G^{(i)}$.
This is the evolutionary distance between the two leaves.
In this section, we assume that the gene trees are known. 
Each of the three methods examined in this paper uses a dissimilarity map
$d: L^2 \to \mathbb{R}_ +$ and subsequently performs hierarchical clustering on that
distance matrix. The resulting dendrogram is an estimate of the species tree.
Pseudocode for these algorithms can be found in Appendix~\ref{sec:algorithms}, Algorithm~\ref{alg:species_tree_reconstruction}.\\
\par 
The dissimilarity maps are defined as follows:
\begin{align}
    d^{\rm (GLASS)}(a,b) & = \min_{i \in [m]} g_{ab}^{(i)}, \label{eq:glass_def} \\
    d^{\rm (STEAC)}(a,b) & = \frac{1}{m} \sum_{i=1}^{m} g_{ab}^{(i)}, \label{eq:steac_def} \\
    d^{\rm (METAL)}(a,b) & = \frac{1}{m} \sum_{i=1}^{m} p_{ab}^{(i)} = 
    \frac{3}{4}\left( 
        1 - \frac{1}{m} \sum_{i=1}^{m}  \exp\left( -  \mu g_{ab}^{(i)} \right) 
    \right), \label{eq:metal_def} 
\end{align}
for all pairs $(a,b) \in \binom{L}{2}$  
where $p_{ab}^{(i)}$ is the probability that the bases of species $a$ and $b$ differ at any given 
site in locus $i$
under the Jukes-Cantor evolutionary model, and $\mu$ is rate of substitution. \\
\par 
Note that METAL does not require prior knowledge of the mutation rate~$\mu$ when applied to empirical sequence alignments. In practice, one simply computes the normalized Hamming distance $\hat{p}_{ab}^{(i)}$ between alignments of species~$a$ and~$b$ for each locus~$i$, 
which is the proportion of sites that differ between those two alignments,
and then averages over loci to obtain
\[
\hat{d}^{\rm (METAL)}(a,b) = \frac{1}{m} \sum_{i=1}^{m} \hat{p}_{ab}^{(i)}.
\]
As the number of sites per locus tends to infinity, $\hat{p}_{ab}^{(i)} \to p_{ab}^{(i)}$, and thus $\hat{d}^{\rm (METAL)}(a,b) \to d^{\rm (METAL)}(a,b)$. Under the Jukes--Cantor model~\cite{jukes1969evolution}, the $p$-distance $p_{ab}$ 
is related to the Jukes-Cantor distance $d_{\rm JC}(a,b)$, which is measured in expected number
of substitution time units, through
\[
    p_{ab } = \frac{3}{4} \left( 1- \exp\left(-\frac{4}{3} d_{\rm JC}(a,b) \right)\right),
\]
which, yields Equation~\eqref{eq:metal_def} by noting that 
$d_{\rm JC}(a,b) = \frac{3}{4} \mu g_{ab}$.\\
\par 
Proposition~\ref{thm:metal_to_steac_and_glass} shows that, under perfect gene‐tree knowledge, the METAL distance “interpolates’’ between STEAC and GLASS: as the substitution rate $\mu\to0$, it reduces to STEAC’s average coalescent time, whereas as $\mu\to\infty$, it collapses to GLASS’s minimum‐coalescent‐time criterion.\\

\begin{proposition}
    \label{thm:metal_to_steac_and_glass}
    Let $\Hat{T}_{\rm METAL}(\mu), 
    \hat{T}_{\rm STEAC}, 
    \hat{T}_{\rm GLASS}$ be the reconstructed species tree topologies of each method as described in Algorithm \ref{alg:species_tree_reconstruction}, 
    assuming that all methods use the same hierarchical clustering method, and that all the minima attained by
    GLASS are unique.
    Then, 
    \begin{align*}
        \lim_{\mu \to 0} \Hat{T}_{\rm METAL}(\mu) &= \hat{T}_{\rm STEAC}, \\
        \lim_{\mu \to \infty} \Hat{T}_{\rm METAL}(\mu) &= \hat{T}_{\rm GLASS}.
    \end{align*}
    For the second limit, it is assumed that the hierarchical clustering 
    method is single or complete linkage clustering.
\end{proposition}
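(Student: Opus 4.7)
The plan rests on two structural observations about hierarchical clustering on a finite leaf set: (i) for any linkage rule, the output topology is invariant under multiplication of the whole distance matrix by a positive scalar; and (ii) for single-linkage and complete-linkage clustering, the output topology depends only on the relative ordering of the pairwise distance entries. Observation (i) is immediate from the recursive definition of linkage distances. Observation (ii) follows, in the single-linkage case, from the minimum-spanning-tree characterization, and, in the complete-linkage case, by induction on the agglomeration steps, since at each step the merged pair is selected by comparing particular entries of the distance matrix and their order-statistics. Given these two facts, the proof reduces to uniform asymptotic control of the METAL distances as $\mu \to 0$ and $\mu \to \infty$.

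For $\mu \to 0$, I would Taylor expand each exponential in Equation~\eqref{eq:metal_def} to obtain
\[
d^{\rm (METAL)}(a,b) \;=\; \tfrac{3\mu}{4}\,d^{\rm (STEAC)}(a,b) \;+\; O(\mu^{2}),
\]
where the $O$ term is uniform over the finitely many pairs $(a,b) \in \binom{L}{2}$. Consequently, $\mu^{-1}\,d^{\rm (METAL)} \to (3/4)\,d^{\rm (STEAC)}$ entrywise, so the METAL distance matrix is a positive scalar multiple of the STEAC matrix to leading order. By observation (i) and the fact that hierarchical clustering is locally constant at a generic (tie-free) distance matrix, the METAL topology agrees with the STEAC topology for all sufficiently small $\mu > 0$, which yields the first limit for \emph{any} linkage rule.

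For $\mu \to \infty$, I would exploit the uniqueness hypothesis on the per-pair minima. Writing $g^*_{ab} := d^{\rm (GLASS)}(a,b)$, attained at a unique locus $i^*_{ab}$, and isolating the dominant term of the exponential sum gives
\[
\tfrac{3}{4}-d^{\rm (METAL)}(a,b) \;=\; \tfrac{3}{4m}\,\exp\!\bigl(-\mu\,d^{\rm (GLASS)}(a,b)\bigr)\bigl(1+o(1)\bigr) \quad (\mu \to \infty),
\]
so for all sufficiently large $\mu$ the strict order of the METAL distances across pairs coincides with the strict order of the GLASS distances. By observation (ii), the METAL topology equals the GLASS topology under single- or complete-linkage clustering, yielding the second limit. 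The main obstacle is precisely this rank-only property for complete-linkage, which I must justify carefully; this restriction is sharp, since an analogous computation shows that average-linkage METAL would be driven by the smallest $g^*_{ab}$ across each block $C_1 \times C_2$, thereby mimicking single-linkage (not average-linkage) GLASS and explaining why the proposition restricts attention to single or complete linkage.
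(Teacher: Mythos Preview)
Your proposal is correct and follows essentially the same approach as the paper: a first-order Taylor expansion plus scale invariance for the $\mu\to 0$ limit, and isolation of the dominant exponential plus the rank-only property (what the paper calls ``monotone admissibility'') of single/complete linkage for the $\mu\to\infty$ limit. Your closing remark explaining why average linkage fails is a nice touch that the paper does not include.
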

\begin{proof}
    All proofs can be found in Appendix \ref{sec:proofs}.
\end{proof}

Proposition~\ref{thm:positive_cov} then establishes that, for any four leaves $a,b,c,d\in L$, the pairwise distances produced by STEAC and by METAL are both nonnegatively correlated, with STEAC correlations always at least as large as those of METAL.  The covariance matrix of METAL pairwise distances—assembled from the individual covariance components derived in the proof of Proposition~\ref{thm:positive_cov}—will be used for uncertainty quantification in Section~\ref{sec:theory} and to formulate the total covariance in our simulation results (Section~\ref{sec:sim_results}).

\begin{proposition}
    \label{thm:positive_cov}
    Let $a,b,c,d \in L$ be leaves in the species tree. Then, under MSC, 
    \begin{equation*}
        \label{eq:positive_cov}
        \Cor\left(g_{ab}^{(1)}, g_{cd}^{(1)} \right) \geq 
        \Cor\left(e^{t g_{ab}^{(1)}}, e^{t g_{cd}^{(1)}} \right)
        \geq 0,
        \forall t\leq 0,
    \end{equation*}
    where $\Cor$ denotes the correlation between two random variables. 
    Moreover, $\Cor\left(e^{t g_{ab}^{(1)}}, e^{t g_{cd}^{(1)}} \right)$ as a function of 
    $t$ is continuous and strictly increasing on $(-\infty,0)$ with limit
    \begin{align*}
        \lim_{t \to 0} \Cor\left(e^{t g_{ab}^{(1)}}, e^{t g_{cd}^{(1)}} \right) &=
    \Cor\left(g_{ab}^{(1)}, g_{cd}^{(1)} \right). \\
    \end{align*}
    Finally, 
    \begin{align*}
         \Cor\left( d^{\rm (STEAC)}(a,b), d^{\rm (STEAC)}(c,d) \right)&= 
         \Cor\left(g_{ab}^{(1)}, g_{cd}^{(1)} \right) \\
         \Cor\left( d^{\rm (METAL)}(a,b), d^{\rm (METAL)}(c,d) \right)
         &= \Cor\left(e^{t g_{ab}^{(1)}}, e^{t g_{cd}^{(1)}} \right)
    \end{align*}
\end{proposition}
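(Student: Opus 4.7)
The plan is three successive reductions: first reduce both STEAC and METAL correlations to single-locus quantities, next establish nonnegativity and the limit at $t=0$, and finally prove strict monotonicity of $t \mapsto \Cor(e^{tg_{ab}^{(1)}}, e^{tg_{cd}^{(1)}})$ on $(-\infty,0)$, which together with the limit yields the stated inequality.

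\textbf{Reduction to one locus.} Under MSC the gene trees $G^{(1)},\dots,G^{(m)}$ are i.i.d., hence the pairs $(g_{ab}^{(i)}, g_{cd}^{(i)})$ are i.i.d.\ copies of $(g_{ab}^{(1)}, g_{cd}^{(1)})$. Bilinearity of covariance together with cross-locus independence give $\Cov(d^{\rm STEAC}(a,b), d^{\rm STEAC}(c,d)) = \tfrac{1}{m}\Cov(g_{ab}^{(1)}, g_{cd}^{(1)})$ and $\Var(d^{\rm STEAC}(a,b)) = \tfrac{1}{m}\Var(g_{ab}^{(1)})$, so the $1/m$ factors cancel in the correlation. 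Similarly, writing $d^{\rm METAL}(a,b) = \tfrac{3}{4} - \tfrac{3}{4m}\sum_i e^{-\mu g_{ab}^{(i)}}$, affine invariance of correlation (with matching signs) together with the same i.i.d.\ argument gives $\Cor(d^{\rm METAL}(a,b), d^{\rm METAL}(c,d)) = \Cor(e^{tg_{ab}^{(1)}}, e^{tg_{cd}^{(1)}})$ with $t = -\mu$.

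\textbf{Nonnegativity, continuity, and limit at $0$.} Under the standard MSC construction, both $g_{ab}^{(1)}$ and $g_{cd}^{(1)}$ can be written as coordinatewise nondecreasing functions of a common vector of independent exponential waiting times attached to the species-tree edges. The FKG / positive-association inequality for monotone functions of independent random variables then gives $\Cov(g_{ab}^{(1)}, g_{cd}^{(1)}) \geq 0$; applying the same theorem to the coordinatewise decreasing map $g \mapsto e^{tg}$ for $t \leq 0$ preserves positive association, so $\Cov(e^{tg_{ab}^{(1)}}, e^{tg_{cd}^{(1)}}) \geq 0$. Continuity on $(-\infty,0)$ follows from dominated convergence, since the relevant MGFs are finite on a neighbourhood of the origin. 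For the limit as $t \to 0$, the Taylor expansion $e^{tg} = 1 + tg + \tfrac{t^2}{2}g^2 + O(t^3)$ shows that the covariance in the numerator and both variances in the denominator equal $t^2$ times their STEAC counterparts plus $O(t^3)$, so the ratio converges to $\Cor(g_{ab}^{(1)}, g_{cd}^{(1)})$.

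\textbf{Monotonicity and the main inequality.} The inequality follows once strict monotonicity on $(-\infty,0)$ is combined with the limit at $0$. My approach is to differentiate the correlation in $t$ and rewrite $\tfrac{d}{dt}\Cor$ as a nonnegative combination of MGF-derived quantities by exploiting the MSC decomposition of each coalescent time into a sum of exponential waiting times along a path through the species tree; the segments shared between the $(a,b)$ and $(c,d)$ paths contribute a positively correlated term while the disjoint segments contribute independent noise, and after factorization of the joint MGF the derivative reduces to a Chebyshev-type covariance that is manifestly nonnegative. This is the main obstacle, since for generic positively dependent $(X,Y)$ the map $t \mapsto \Cor(e^{tX}, e^{tY})$ need not be monotone; the argument must exploit the particular independent-exponential structure of MSC, possibly by enumerating the finite list of four-leaf gene-tree topologies and verifying the inequality in each case, or by reducing to a one-dimensional comparison between the MGF of the shared coalescent contribution and that of the total coalescent time.
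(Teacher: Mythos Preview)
Your reduction to a single locus and the Taylor argument for the limit $t\to 0$ are correct, and in fact the latter is cleaner than the paper's route, which simply reads the limit off from explicit formulas. The continuity claim is also fine.

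The FKG step, however, is not justified as stated. Under the MSC the pair $(g_{ab},g_{cd})$ is \emph{not} a coordinatewise monotone function of a vector of independent exponential waiting times: the coalescent process involves, at each merger, an independent \emph{uniform choice} of which pair of lineages coalesces, and the value of $g_{ab}$ depends on that discrete choice in a way that is not monotone in any natural sense. For example, in the three-leaf case after all lineages have entered the common ancestral population, whether $(a,b)$ or $(a,c)$ coalesces first changes which of the two exponential waiting times contributes to $g_{ab}$. One cannot integrate out the uniform choices and still apply FKG to what remains without doing essentially the case analysis you are trying to avoid.

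The monotonicity argument has the same structural problem: your proposed decomposition into ``shared'' and ``disjoint'' path segments with independent contributions is a gene-tree picture, but under MSC the gene tree is random and no such fixed decomposition exists. You correctly flag this as the main obstacle and mention the fallback of ``enumerating the finite list of four-leaf gene-tree topologies and verifying the inequality in each case.'' That fallback is exactly what the paper does: it splits into the four species-subtree shapes (two leaves, three leaves, four-leaf cherry, four-leaf comb), computes the joint MGF $M_{vw,xy}(t)=\mathbb{E}\exp\bigl(t(g_{vw}+g_{xy}-S_{vw}-S_{xy})/2\bigr)$ in each case by conditioning on the first coalescence time and pair, extracts closed-form expressions for $\Cor(e^{tg_{ab}},e^{tg_{cd}})$, and then checks monotonicity and nonnegativity directly from those formulas (the cherry case requires an explicit derivative bound). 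So the core technical content of the proof is precisely the computation you list only as a contingency; without it, neither the nonnegativity nor the strict monotonicity is established.
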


\section{Gene tree reconstruction from base sequences} \label{sec:reconstruction_from_bases}
In the previous section, we considered properties of the distribution 
of gene trees $G_i | S$ under the MSC model.  
In this section, we investigate statistical properties of gene base sequence
conditional on the gene tree or $\chi^{i}| G_i$.
Let $\chi_{a}^{ij} \in \{A,C,T,G\}$ be the $j^{\rm th}$ base of the $i^{\rm th}$ locus of species $a \in L$, where $i \in [m]$, $j \in [K_i]$, $m$ is the number of loci studied and 
$K_i$ is the sequence length of loci $i$. 
For notational simplicity in the formulae that follow, we assume that all 
loci have the same number of bases $K = K_i, ~ \forall i \in [m]$.
The analysis can be readily extended to the case of loci with varying sequence lengths.
The normalized Hamming distance, $\hat{p}^{(i)}_{ab}$ and Jukes-Cantor distance $\Hat{\nu}_{ab}^{(i)}$ between species $a, b\in L$ are then defined as 
\begin{align*}
    \Hat{p}_{ab}^{(i)} &= \frac{1}{K} \sum_{j=1}^K \mathbbm{1}\left( 
        \chi_{a}^{ij} \neq  \chi_{b}^{ij}
        \right), \text{ and}\\
    \nu_{ab} &= \frac{3}{4}\mu\Hat{g}_{ab}^{(i)} = - \frac{3}{4} \log\left( 1 - \frac{4}{3} \Hat{p}_{ab}^{(i)} \right).
\end{align*}

METAL computes the average of the normalized Hamming distances, while STEAC and GLASS consider
the average and minimum Jukes-Cantor distances respectively. 
However, a problem with using the Jukes-Cantor distances directly is that they are infinite if
$\Hat{p}_{ab}^{(i)} \geq \frac{3}{4}$ for any $a,b \in L$, $i \in [m]$, which has a non-zero
chance of occurring. 

\subsection{Hamming Distances}
\par Given the gene tree $G^{(i)}$ of locus $i$, the distribution of 
$\Hat{p}^{(i)}_{ab}$ is
\begin{align*}
    \Hat{p}_{ab}^{(i)} = \frac{1}{K} \sum_{j=1}^K \mathbbm{1}\left( 
        \chi_{a}^{ij} \neq  \chi_{b}^{ij}
        \right)
    &\sim 
    {\rm Bin}\left(K, p_{ab}^{(i)}\right)
     , \text{ where}\\
    p_{ab}^{(i)} &= \frac{3}{4}\left( 1 - 
    \exp\left(-\mu g_{ab}^{(i)}  \right) \right).
\end{align*}
Effectively, $p_{ab}^{(i)}$ is the true probability that species 
$a$ and $b$ would share the same base in locus $i$ and 
$\Hat{p}_{ab}^{(i)}$.
The number of bases, $K$, is typically large and so we can apply the common Normal approximation of the Binomial distribution, 
\begin{equation*}
    \Hat{p}_{ab}^{(i)}\sim  
    \mathcal{N}\left( p_{ab}^{(i)}, \frac{p_{ab}^{(i)} \left(1-p_{ab}^{(i)}\right) 
     }{K}\right).
\end{equation*}

\begin{figure}[!h]
    \centering
    \includegraphics[width=0.4\linewidth]{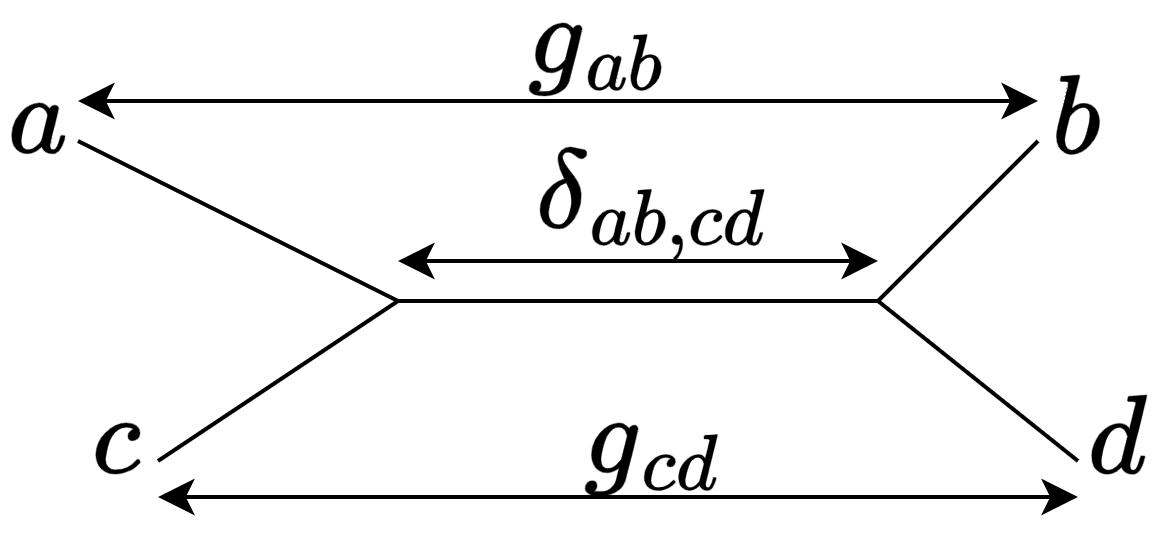}
    \caption{The evolutionary tree used in proposition \ref{thm:evol_cov}. Here $\delta_{ab,cd} = \delta_{ac,bd}$ 
    is the length of the intersection of the 
    shortest path from $a$ to $b$ and the shortest 
    path from $c$ to $d$. 
    All trees with 4 distinct leaves $a,b,c,d \in L$ can be 
    drawn in this way. For cherry trees, the pairs $(a,c), (b,d)$ 
    are sisters/cherries and the root of the 4-leaf tree is 
    contained in the $\delta$-segment.
    For comb trees, without loss of generality, the topology is 
    $(( (a,c),b),d) $ with $a,c$ being sisters and the root of the
    tree contained in the shortest path from $b$ to $d$. 
    Note that since the paths from $a$ to $c$ and $b,d$ are disconnected $\delta_{ac,bd} =0 $. }
    \label{fig:evolutionary_tree}
\end{figure}

While the normal approximation above applies to each pairwise distance $\hat p_{ab}^{(i)}$ individually, we require a joint multivariate description of all $\binom{L}{2}$ distances at locus $i$.  In fact, the multivariate 
extension of that result is
\[
\hat{p}_{\binom{L}{2}}^{(i)} \;\sim\; \mathcal{N}\Bigl(p_{\binom{L}{2}}^{(i)},\,\tfrac{1}{K}\,\Sigma^{(i)}\Bigr),
\]
where $\hat p_{\binom{L}{2}}^{(i)},\,p_{\binom{L}{2}}^{(i)}\in\mathbb{R}^{\binom n2}$ are vectors containing the estimated and true pairwise distances respectively and $\Sigma^{(i)}\in\mathbb{R}^{\binom n2\times\binom n2}$ is the covariance matrix whose entries are of the form given by Proposition~\ref{thm:evol_cov}.  \\

\begin{proposition} 
    \label{thm:evol_cov}
    Let $a,b,c,d \in L$ 
    be leaves of a gene tree, where the pairs $a,c$ and $b,d$ are sisters
    as shown in Fig.~\ref{fig:evolutionary_tree}. Letting $\delta_{ab,cd}$ be the length of the intersection of the shortest path from $a$ to $b$ the shortest path from $c$ to $d$, 
    the covariance is 
    \begin{equation}
        \label{eq:cov_bases}
        \Cov \left(\mathbbm{1}\left(\chi_a^{(i)} \neq \chi_b^{(i)} \right),
        \mathbbm{1}\left(\chi_c^{(i)} \neq \chi_d^{(i)} \right) \right) = 
        \frac{3}{16} 
        e^{-\mu \left(g_{ab}^{(i)} + g_{cd}^{(i)} \right)}
        \left(e^{2\mu \delta_{ab,cd}^{(i)}} + 2 e^{\mu \delta_{ab,cd}^{(i)}} - 3
        \right)
    \end{equation}
\end{proposition}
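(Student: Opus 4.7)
The plan is to exploit the fact that the Jukes--Cantor substitution process is equivariant under the Klein four-group $G=(\mathbb{Z}/2)^2$, which acts regularly on the four nucleotide states. Identifying the bases with elements of $G$, the JC69 transition kernel along an edge $e$ of length $\tau_e$ becomes convolution with an independent shift $S_e\in G$ having $\Pr(S_e=0)=\tfrac{1}{4}(1+3e^{-\mu\tau_e})$ and $\Pr(S_e=g)=\tfrac{1}{4}(1-e^{-\mu\tau_e})$ for each $g\neq 0$. Edgewise independence then gives $\chi_a-\chi_b=X_{ab}:=\sum_{e\in\mathrm{path}(a,b)}S_e$ in $G$, so $\chi_a\neq\chi_b$ iff $X_{ab}\neq 0$.

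The next step is to split the two paths at their intersection. Write $\mathrm{path}(a,b)=E_U\sqcup E_W$ and $\mathrm{path}(c,d)=E_V\sqcup E_W$, where $E_W$ consists of the shared edges of total length $\delta:=\delta_{ab,cd}^{(i)}$, and set $U=\sum_{e\in E_U}S_e$, $V=\sum_{e\in E_V}S_e$, $W=\sum_{e\in E_W}S_e$. These are mutually independent, each following the same shift-distribution shape with parameter the respective total length ($g_{ab}^{(i)}-\delta$, $g_{cd}^{(i)}-\delta$, $\delta$), and $X_{ab}=U+W$, $X_{cd}=V+W$. In particular, when $\delta=0$ the two path sums are independent, foreshadowing that the covariance must carry $\delta$ as a factor.

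Using $-w=w$ in $G$, condition on $W$ to get
\[
\Pr(X_{ab}=0,\,X_{cd}=0)=\sum_{w\in G}\Pr(W=w)\,\Pr(U=w)\,\Pr(V=w),
\]
a four-term sum that collapses to two distinct values by symmetry of the shift distribution. With the shorthand $q=e^{-\mu(g_{ab}^{(i)}-\delta)}$, $r=e^{-\mu(g_{cd}^{(i)}-\delta)}$, $s=e^{-\mu\delta}$, substitution yields a closed-form polynomial in $q,r,s$. The covariance is then obtained from
\[
\Cov\bigl(\mathbbm{1}(\chi_a\neq\chi_b),\,\mathbbm{1}(\chi_c\neq\chi_d)\bigr)=\Pr(X_{ab}\neq 0,\,X_{cd}\neq 0)-p_{ab}^{(i)}p_{cd}^{(i)}
\]
via inclusion--exclusion, together with the marginals $\Pr(X_{ab}=0)=\tfrac{1}{4}(1+3qs)$ and $p_{ab}^{(i)}=\tfrac{3}{4}(1-qs)$ (analogously for $cd$). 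The terms linear in $qs$ and $rs$ cancel, leaving $\tfrac{3qr}{16}(1+2s-3s^2)$; rewriting $qr=e^{-\mu(g_{ab}^{(i)}+g_{cd}^{(i)})}\,e^{2\mu\delta}$ and $e^{2\mu\delta}(1+2s-3s^2)=e^{2\mu\delta}+2e^{\mu\delta}-3$ delivers the stated formula.

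The main obstacle is the bookkeeping in the final simplification, which is routine but requires tracking several cancellations between the joint probability and the product of marginals. The conceptual content --- that the dependence between the two indicator variables is captured entirely by the shared segment $\delta$ through the factor $e^{2\mu\delta}+2e^{\mu\delta}-3$, which vanishes precisely at $\delta=0$ --- is already evident from the decomposition $X_{ab}=U+W$, $X_{cd}=V+W$ and the independence of $U$ and $V$.
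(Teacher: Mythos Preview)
Your argument is correct and follows the same underlying strategy as the paper: split the two paths into a shared segment (of length $\delta$) and two disjoint remainders, use independence of the substitution process on disjoint edge sets, and condition on the state change along the shared segment to compute the joint probability $\Pr(\chi_a=\chi_b,\chi_c=\chi_d)$.

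The difference is in the algebraic packaging. The paper labels the states by $\{0,1,2,3\}$, works with $N_{vw}\equiv\chi_v-\chi_w\pmod 4$, and repeatedly invokes the JC69 fact that, conditional on at least one substitution, the resulting state is uniform; this leads to a case-by-case conditioning on $N_{rq}$ and some asymmetry between the $i=0$ term and the $i\neq 0$ terms. Your identification of the state space with $(\mathbb{Z}/2)^2$ is the cleaner choice: the JC69 kernel becomes convolution with a shift that is uniform on nonzero elements, the self-inverse property $-w=w$ removes all sign bookkeeping, and the conditioning collapses to the two-term sum
\[
\Pr(X_{ab}=0,X_{cd}=0)=\tfrac{1}{64}\bigl[(1+3s)(1+3q)(1+3r)+3(1-s)(1-q)(1-r)\bigr],
\]
from which the covariance $\tfrac{3qr}{16}(1+2s-3s^2)$ drops out directly. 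Both routes reach the same formula; yours makes transparent why the covariance factors as a function of $\delta$ times $e^{-\mu(g_{ab}+g_{cd})}$ and why it vanishes exactly when $\delta=0$.
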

Using Proposition \ref{thm:evol_cov}, we see in particular, that the diagonal entries of $\Sigma^{(i)}$ coincide with the univariate variances: by 
expanding Equation~\eqref{eq:cov_bases} and noting that $\delta_{ab,ab}=g_{ab}$, we have,
\[
\Var\bigl(\mathbbm{1}(\chi_a^{(i)}\neq\chi_b^{(i)})\bigr)\;=\;
p_{ab}^{(i)}\left(1-p_{ab}^{(i)}\right)\,.
\] 
\subsection{Jukes-Cantor Distances}
\par For the STEAC method, instead of using normalized Hamming distances,
$\hat{p}_{ab}$, it uses estimates for the gene tree branch lengths
\begin{equation}
    \label{eq:steac_branch_lengths}
        \Hat{g}_{ab} = - \frac{1}{\mu} \log\left( 1- \frac{4}{3} \Hat{p}_{ab} \right),
        \forall a,b \in L.
\end{equation}

The major issue with this approach is that if at any loci $i \in [m]$ we get 
$\hat{p}_{ab}^{(i)} \geq 3/4$, then $\hat{g}_{ab}^{(i)} = \infty$ and so 
$\hat{d}_{ab}^{\rm (STEAC)} = \infty$. Clearly, this is undesirable, especially when the
number of genes is large $m \gg 1$ and the number of sites per gene $K$ is low, when 
such events are most common.
Hierarchical clustering methods such as UPGMA would not allow clusters containing
leaves $a$ and $b$ to ever merge. \\
\par 
In this paper we consider the case when $K \gg 1$, when the Delta method is typically 
used to approximate the distribution of $\Hat{g}_{ab}$ in Equation~\eqref{eq:steac_branch_lengths}, which dictates that
\begin{align}
    \hat{g}_{\binom{L}{2}}^{(i)} & \sim \mathcal{N}\left( g_{\binom{L}{2}}^{(i)}, \Sigma_{\rm STEAC}^{(i)} \right), \nonumber\end{align}
where  $\hat{g}_{\binom{L}{2}}^{(i)}$, ${g}_{\binom{L}{2}}^{(i)} \in \mathbb{R}^{\binom{n}{2}}$ are vectors containing the estimated and true Jukes-Cantor distances respectively, and the row $ab$, column $cd$ element of the covariance matrix $\Sigma_{\rm STEAC}^{(i)}$ is \begin{align}
    \Cov\left( \hat{g}_{ab}^{(i)}, \hat{g}_{cd}^{(i)} \right) \nonumber 
    &\approx
    \frac{ \Cov\left( \mathbbm{1}\left( \chi_a^{(i)} \neq \chi_b^{(i)} \right), \mathbbm{1}\left( \chi_c^{(i)} \neq \chi_d^{(i)} \right) \right) }
    {K \mu^2 \left( \frac{3}{4} - p_{ab}^{(i)} \right)\left( \frac{3}{4} - p_{cd}^{(i)} \right) }\\& = 
    \frac{e^{2\mu \delta_{ab,cd}^{(i)}} + 2e^{\mu \delta_{ab,cd}^{(i)}} - 3}{3\mu^2 K},
    \label{eq:steac_sigma}
\end{align}
where the last equality follows from Proposition~\ref{thm:evol_cov} and the definition of
$p_{ab}^{(i)}$. 
In the following section, we make use of these derived covariance matrices 
to compute the total covariance matrix of the METAL and STEAC pair distances.

\section{Theoretical results} \label{sec:theory}
Two main topics are addressed in this section. 
First, we derive and decompose the total covariance of 
the METAL and STEAC estimators into coalescent and substitution components, thereby quantifying the relative contributions of coalescent/genealogical and substitution/mutational uncertainty. 
Second, we analyze the spectral properties of these covariance matrices.

\subsection{Decomposition of Covariance}
We begin by defining the total covariance of METAL distance estimates as
\begin{equation}
    \left(\Sigma_{\rm total}\right)_{ab,cd} \coloneqq
    \Cov\left(\hat{d}^{\rm (METAL)}(a,b), \hat{d}^{\rm (METAL)}(c,d) \right) = 
    \frac{1}{m} \Cov\left(\hat{p}_{ab}^{(1)}, \hat{p}_{cd}^{(1)} \right),
    \label{eq:sigma_total_def}
\end{equation} 
where $\Sigma_{\rm total} \in \mathbb{R}^{\binom{n}{2} \times \binom{n}{2}}$.
The following proposition shows how the total covariance can be decomposed as 
a coalescent and substitution covariance. \\
\begin{proposition} \label{prop:total_covariance_decomposition}
    The total covariance can be expressed as 
    \begin{equation}
        \Sigma_{\rm total} = \frac{1}{m}
        \Sigma_{\rm coal} + \frac{1}{mK}\Sigma_{\rm sub},
        \label{eq:sigma_total_decomposition}
    \end{equation}
    where 
    \begin{align}
        \left(\Sigma_{\rm coal}\right)_{ab,cd}  &= \frac{9}{16} \Cov\left(e^{-\mu g_{ab}^{(1)}}, e^{-\mu g_{cd}^{(1)}}\right), \label{eq:sigma_coal}  \\
        \left(\Sigma_{\rm sub}\right)_{ab,cd}  &= \frac{3}{16} 
         \mathbb{E}_{G^{(1)}| \mathcal{S}} \left( 
            e^{-\mu \left(g_{ab}^{(i)} + g_{cd}^{(i)} \right)}
        \left(e^{2\mu \delta_{ab,cd}^{(i)}} + 2 e^{\mu \delta_{ab,cd}^{(i)}} - 3
        \right)
            \right) , \label{eq:sigma_evol} 
    \end{align}
    and
    $\Sigma_{\rm coal}, \Sigma_{\rm sub}, \Sigma_{\rm total} \in 
    \mathbb{R}^{\binom{n}{2} \times \binom{n}{2}}$ are positive semi-definite 
    matrices. 
\end{proposition}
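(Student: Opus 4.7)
The plan is to apply the law of total covariance, conditioning on the gene tree $G^{(1)}$, so that the "between" piece captures coalescent variation and the "within" piece captures substitutional noise. First, since the $m$ loci are i.i.d.\ under the MSC-plus-substitution model, the factor $1/m$ in \eqref{eq:sigma_total_def} is immediate, and everything reduces to computing
\[
\Cov\!\bigl(\hat{p}_{ab}^{(1)},\,\hat{p}_{cd}^{(1)}\bigr)
\;=\;
\mathbb{E}\!\left[\Cov\!\bigl(\hat{p}_{ab}^{(1)},\,\hat{p}_{cd}^{(1)} \,\big|\, G^{(1)}\bigr)\right]
\;+\;
\Cov\!\left(\mathbb{E}\!\left[\hat{p}_{ab}^{(1)}\,\big|\,G^{(1)}\right],\,
\mathbb{E}\!\left[\hat{p}_{cd}^{(1)}\,\big|\,G^{(1)}\right]\right).
\]

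For the second summand, I would use that conditional on $G^{(1)}$, the Jukes--Cantor model gives $\mathbb{E}[\hat{p}_{ab}^{(1)}\mid G^{(1)}] = p_{ab}^{(1)} = \tfrac{3}{4}\bigl(1 - e^{-\mu g_{ab}^{(1)}}\bigr)$. Bilinearity of covariance, together with the fact that additive constants drop out, yields exactly the $\tfrac{9}{16}\Cov(e^{-\mu g_{ab}^{(1)}},\,e^{-\mu g_{cd}^{(1)}})$ expression for $(\Sigma_{\rm coal})_{ab,cd}$.

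For the first summand, I would invoke the standard property of the JC69 substitution model that sites evolve independently given the tree. Hence the $K$ site-indicator vectors are conditionally i.i.d.\ given $G^{(1)}$, so
\[
\Cov\!\bigl(\hat{p}_{ab}^{(1)},\,\hat{p}_{cd}^{(1)} \,\big|\, G^{(1)}\bigr)
\;=\;
\frac{1}{K}\,
\Cov\!\bigl(\mathbbm{1}(\chi_a^{(1)}\neq\chi_b^{(1)}),\,\mathbbm{1}(\chi_c^{(1)}\neq\chi_d^{(1)}) \,\big|\, G^{(1)}\bigr),
\]
and Proposition~\ref{thm:evol_cov} applied conditional on $G^{(1)}$ gives precisely the bracketed integrand in \eqref{eq:sigma_evol}. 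Taking expectation over $G^{(1)}\mid \mathcal{S}$ and dividing by $K$ reproduces $\Sigma_{\rm sub}/(mK)$ after the overall $1/m$ prefactor, establishing \eqref{eq:sigma_total_decomposition}.

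Finally, for positive semi-definiteness, I would argue at the matrix level. $\Sigma_{\rm total}$ is a covariance matrix by construction; $\Sigma_{\rm coal}$ is (up to the scalar $9/16$) the covariance matrix of the random vector $\bigl(e^{-\mu g_{ab}^{(1)}}\bigr)_{(a,b)\in\binom{L}{2}}$, hence PSD; and $\Sigma_{\rm sub}$ is the expectation, over $G^{(1)}\mid\mathcal{S}$, of the conditional single-site indicator covariance matrix, which is PSD for every realization of $G^{(1)}$, so its expectation remains PSD. The only genuinely non-routine step is the conditional-independence argument for the sites, which is where the JC69 assumption enters; once that is in hand, everything else is bilinearity and the tower property.
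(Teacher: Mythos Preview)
Your proposal is correct and follows essentially the same route as the paper: the law of total covariance conditioning on $G^{(1)}$, the Jukes--Cantor conditional mean for the ``between'' term, Proposition~\ref{thm:evol_cov} together with site-wise i.i.d.\ sampling for the ``within'' term, and the same PSD arguments (covariance of a random vector for $\Sigma_{\rm coal}$, expectation of a pointwise-PSD conditional covariance for $\Sigma_{\rm sub}$). If anything, you are slightly more explicit than the paper about why the $1/K$ factor appears.
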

Expressions for $\Sigma_{\rm coal}$ defined in Equation~\eqref{eq:sigma_coal}
have been derived in the proof of Proposition~\ref{thm:positive_cov}. 
Specific expressions for $\Sigma_{\rm sub}$ defined in Equation~\eqref{eq:sigma_evol}, for all tree topology cases can be 
found in Appendix \ref{sec:cov_calculations}. \\

The decomposition in Proposition~\ref{prop:total_covariance_decomposition} sets the stage for 
Proposition~\ref{prop:covariance_coal_sub_total}, in which we specialize to the case of a star-shaped
species tree and derive explicit formulae and asymptotics for each component of covariance.
These results allow use both to compare the magnitude of coalescent versus substitution uncertainty
and to study how their ratio varies with the mutation rate $\mu$ and tree diameter $\Delta$. \\


\begin{proposition} \label{prop:covariance_coal_sub_total}
    Suppose that the species tree is a star tree with diameter $\Delta$ 
    i.e. $\tau_{ab} = \Delta, \forall a,b \in L=  [n], a\neq b$.
    Then
    \begin{equation*}
        (\Sigma_{\rm mode})_{ab,cd} = \begin{cases}
        \sigma^{(2)}_{\rm mode}(\mu,\Delta), & \mbox{if } |\{a,b\} \cap \{c,d\}| = 2 \\
        \sigma^{(1)}_{\rm mode}(\mu,\Delta), & \mbox{if } |\{a,b\} \cap \{c,d\}| = 1 \\
        \sigma^{(0)}_{\rm mode}(\mu,\Delta), & \mbox{if } |\{a,b\} \cap \{c,d\}| = 0 ,
    \end{cases}
    \end{equation*}
    for ${\rm mode} \in \{ {\rm coal, sub, total} \}$ with
    \begin{equation*}
        \sigma_{\rm total}^{(i)}(\mu,\Delta) = \frac{1}{m} \left( 
            \sigma_{\rm coal}^{(i)}(\mu,\Delta) + \frac{1}{K} \sigma_{\rm sub}^{(i)}(\mu,\Delta)
        \right), \, \forall i \in \{0,1,2\},
    \end{equation*}
    where $m$ is the number of loci and $K$ is the number of sites per loci.\\
    Moreover, $\sigma_{\rm mode}^{(2)} \geq \sigma_{\rm mode}^{(1)}
        \geq \sigma_{\rm mode}^{(0)} > 0$, and 
    \begin{align*}
        \sigma_{\rm coal}^{(i)} &= \begin{cases}
            \mathcal{O}(\mu^2), & \text{ as } \mu \to 0, \\
            \mathcal{O}(\mu^{-3+i} e^{-2\mu\Delta}), & \text{ as } \mu \to \infty
        \end{cases} \\
        \sigma_{\rm sub}^{(i)} &= \begin{cases}
            \mathcal{O}(\mu), & \text{ as } \mu \to 0, \\
            \mathcal{O}(\mu^{-2+i} e^{(-2+i)\mu\Delta}), & \text{ as } \mu \to \infty
        \end{cases} \\
    \end{align*}
    for all $i \in \{0,1,2\}$ and consequently
    \begin{align*}
        \frac{\sigma_{\rm sub}^{(i)} }{\sigma_{\rm coal}^{(i)}} &= 
        \mathcal{O}(\mu^{-1}) \,\,\, \text{ as } \mu \to 0, \forall i \in \{0,1,2\},
        {\rm and }\\ 
        \frac{\sigma_{\rm sub}^{(i)} }{\sigma_{\rm coal}^{(i)}} &= 
        \mathcal{O}(\mu  e^{i \mu \Delta } ) \,\,\, \text{ as } \mu \to \infty, \forall i \in \{0,1,2\}. \\ 
    \end{align*}
\end{proposition}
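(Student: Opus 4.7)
The plan is to exploit two structural properties of Kingman's coalescent on a star species tree: exchangeability of the $n$ leaves, which ensures the covariance depends only on $|\{a,b\}\cap\{c,d\}|$ and hence justifies the three-case expression; and sub-sampling consistency, which reduces the computation of $\sigma_{\rm mode}^{(i)}$ to a coalescent on $4-i$ leaves (so $n=2,3,4$ for $i=2,1,0$ respectively). In each reduced case, one writes $g_{ab}/2 = \Delta + S_{ab}$, where $S_{ab}$ is a sum of at most three independent waiting times $T_k \sim \mathrm{Exp}\!\left(\binom{k}{2}\right)$, with the pair identity at each merger drawn uniformly at random.

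First I would derive closed-form expressions for $\sigma_{\rm coal}^{(i)}$ and $\sigma_{\rm sub}^{(i)}$ by conditioning on the sequence of coalescent events: each conditional expectation factorises over the independent $T_k$'s and is a rational function of $\mu$ times $e^{-\alpha\mu\Delta}$ with $\alpha\in\{2,4\}$. In particular, $\mathbb{E}[e^{-\mu g_{ab}}] = e^{-2\mu\Delta}/(1+2\mu)$ for every $n$, by consistency. The same conditioning yields the joint law of $(g_{ab}, g_{cd}, \delta_{ab,cd})$ needed for $\sigma_{\rm sub}^{(i)}$; for three-leaf configurations one finds, for instance, $\max(g_{ab},g_{ac}) = 2(\Delta+T_1+T_2)$ regardless of topology, which drastically simplifies the expectation.

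The $\mu\to 0$ asymptotics then follow from Taylor expansion. Writing $e^{-\mu g} = 1-\mu g + \mathcal{O}(\mu^2)$ gives $(\Sigma_{\rm coal})_{ab,cd} = \tfrac{9}{16}\mu^2\,\Cov(g_{ab},g_{cd}) + \mathcal{O}(\mu^3) = \mathcal{O}(\mu^2)$, while expanding $e^{2\mu\delta}+2e^{\mu\delta}-3 = 4\mu\delta + \mathcal{O}(\mu^2)$ yields $(\Sigma_{\rm sub})_{ab,cd} = \tfrac{3}{4}\mu\,\mathbb{E}[\delta_{ab,cd}^{(1)}] + \mathcal{O}(\mu^2) = \mathcal{O}(\mu)$, since $\mathbb{E}[\delta]$ is finite. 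The ordering $\sigma_{\rm mode}^{(2)} \geq \sigma_{\rm mode}^{(1)} \geq \sigma_{\rm mode}^{(0)} > 0$ follows from Proposition~\ref{thm:positive_cov} for $\Sigma_{\rm coal}$ and from a coupling argument for $\Sigma_{\rm sub}$; strict positivity of $\sigma^{(0)}$ comes from the fact that even disjoint pairs share the earliest coalescence waiting time in the ancestral population, inducing positive correlation.

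The hardest part will be the $\mu\to\infty$ regime, which demands careful tracking of cancellations between the joint MGF and the product of marginals. The dominant contribution to $\mathbb{E}[e^{-\mu g_{ab}}]$ concentrates on events where the relevant coalescence occurs first, so that $g_{ab}$ approaches its infimum $2\Delta$. A direct Cauchy-Schwarz bound via $\Var(e^{-\mu g_{ab}}) = \mathcal{O}(\mu^{-1}e^{-4\mu\Delta})$ already yields $\sigma_{\rm coal}^{(i)} = \mathcal{O}(\mu^{-1}e^{-4\mu\Delta})$ for all $i$, which is strictly stronger than the stated $\mathcal{O}(\mu^{-3+i}e^{-2\mu\Delta})$ bound; the sharper polynomial prefactor $\mu^{-3+i}$ would come from case-by-case leading-order cancellations in the rational MGF expressions, which I would verify by direct expansion. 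For $\sigma_{\rm sub}^{(i)}$, the factor $e^{(-2+i)\mu\Delta}$ arises because $\mathbb{E}[e^{-\mu(g_{ab}+g_{cd})}e^{2\mu\delta_{ab,cd}}]$ equals the MGF of a non-negative path length whose minimum value is $(4-2i)\Delta$, while the prefactor $\mu^{-2+i}$ arises from the residual $(1+c\mu)^{-1}$ factors after this exponential extraction. The most intricate piece is the $i=0$ case, which requires full enumeration of the three rooted four-leaf gene-tree topologies, their coalescent probabilities, and the corresponding values of $\delta_{ab,cd}$, with the claimed asymptotic verified in each scenario.
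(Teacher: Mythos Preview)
Your overall strategy coincides with the paper's: exploit the star-tree symmetry so that each covariance depends only on $|\{a,b\}\cap\{c,d\}|$, reduce to the 2-, 3-, and 4-leaf MGF computations already carried out (in the proof of Proposition~\ref{thm:positive_cov} and in the substitution-covariance appendix), write down the resulting closed-form rational-times-exponential expressions for $\sigma^{(i)}_{\rm coal}$ and $\sigma^{(i)}_{\rm sub}$, and then read off the ordering and asymptotics directly from those formulas. Your Taylor argument for $\mu\to 0$ is equivalent to what the paper does.

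There is, however, one concrete error in your $\mu\to\infty$ paragraph. In this proposition $\Delta$ is the \emph{diameter} of the star tree, i.e.\ the pairwise species distance $\tau_{ab}=\Delta$, so that $g_{ab}\ge \Delta$ (not $2\Delta$) and $\mathbb{E}[e^{-\mu g_{ab}}]=e^{-\mu\Delta}/(1+2\mu)$; your line ``$g_{ab}/2=\Delta+S_{ab}$'' and the formula $\mathbb{E}[e^{-\mu g_{ab}}]=e^{-2\mu\Delta}/(1+2\mu)$ implicitly treat $\Delta$ as the root depth instead. With the correct convention,
\[
\Var\bigl(e^{-\mu g_{ab}}\bigr)
= e^{-2\mu\Delta}\Bigl(\tfrac{1}{1+4\mu}-\tfrac{1}{(1+2\mu)^{2}}\Bigr)
= \frac{4\mu^{2}\,e^{-2\mu\Delta}}{(1+4\mu)(1+2\mu)^{2}}
= \mathcal{O}\bigl(\mu^{-1}e^{-2\mu\Delta}\bigr),
\]
not $\mathcal{O}(\mu^{-1}e^{-4\mu\Delta})$. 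This is \emph{exactly} the stated $i=2$ bound, so Cauchy--Schwarz only yields $\sigma_{\rm coal}^{(i)}=\mathcal{O}(\mu^{-1}e^{-2\mu\Delta})$ for all $i$, which is \emph{weaker}, not stronger, than the claimed $\mathcal{O}(\mu^{-3+i}e^{-2\mu\Delta})$ for $i=0,1$. The extra factors $\mu^{-1}$ and $\mu^{-2}$ genuinely require the algebraic cancellations in the explicit three- and four-leaf MGF formulas --- precisely the ``direct expansion'' you list as a fallback, and which is what the paper in fact does (it records $\sigma_{\rm coal}^{(1)}=\tfrac{3+2\mu}{2}\,\sigma_{\rm coal}^{(0)}$ and $\sigma_{\rm coal}^{(2)}=(3+4\mu)\,\sigma_{\rm coal}^{(1)}$, from which both the ordering and the $\mu\to\infty$ prefactors are immediate). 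The same convention slip propagates to your ``minimum value $(4-2i)\Delta$'' claim for $\sigma_{\rm sub}^{(i)}$; it should read $(2-i)\Delta$. Once this is corrected, your plan goes through and matches the paper's proof.
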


Proposition~\ref{thm:metal_to_steac_and_glass} shows that in the low‐mutation regime 
(\(\mu\ll1\)), STEAC and METAL agree to first order, since 
\(
p_{ab} \;=\;\tfrac{3}{4}\bigl(1-e^{-\mu g_{ab}}\bigr)\approx \tfrac{3}{4}\,\mu\,g_{ab}.
\)
However, Proposition~\ref{prop:steac_infinite_variance} demonstrates that as soon as \(\mu\ge\tfrac14\), 
the STEAC estimates have a strictly positive probability of diverging.  
Thus, even at moderate mutation rates, STEAC can yield unbounded 
branch‐length estimates—and with many loci this becomes highly probable, 
since a single locus producing an infinite estimate is enough to drive the STEAC average to infinity.
Although one can truncate or discard these infinite estimates, doing so introduces 
downward bias by systematically removing the highest inferred branch lengths.
We conjecture that, in the presence of substitution uncertainty—i.e.\ when gene trees 
must be estimated rather than known a priori—STEAC is uniformly inferior to METAL for all \(\mu\); 
indeed, STEAC was originally developed under the assumption of almost perfect gene‐tree knowledge.  
We assess this conjecture empirically in Section~\ref{sec:STE}. \\

\begin{proposition} \label{prop:steac_infinite_variance}
    For any two distinct leaves $a,b \in L$,
    the variance of the STEAC estimate 
    \(\hat{g}_{ab}\), as defined in Equation~\eqref{eq:steac_branch_lengths} and approximated 
    via the Delta method in Equation~\eqref{eq:steac_sigma}, is finite if and only if 
    \(\mu < \tfrac{1}{4}\). Consequently, when \(\mu \ge \tfrac{1}{4}\), there is a positive 
    probability 
    \[
        p \;=\; \mathbb{P}\bigl(\hat{g}^{(i)}_{ab} = \infty \bigr)
        \;=\; \mathbb{P}\bigl(\hat{p}_{ab}^{(i)} \ge \tfrac{3}{4}\bigr) \;>\; 0.
    \]
    In that case, the probability that all \(m\) STEAC gene branch lengths 
    \(g_{ab}^{(i)}\) for \(i \in [m]\) contributing to \(\hat{d}^{\rm (STEAC)}_{ab}\) are finite is
    \[
        \mathbb{P}\bigl(\hat{d}^{\rm (STEAC)}_{ab} < \infty \bigr)
        \;=\; (1 - p)^{m} \to 0 \quad {\rm as } \quad m \to \infty,
    \]
    where $m$ is the number of genes used to infer the STEAC estimate.
\end{proposition}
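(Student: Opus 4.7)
The proposition consists of three distinct claims, and my plan is to handle them in the natural order: (i) the iff-condition $\mu<\tfrac14$ for finiteness of the Delta-method variance of $\hat g_{ab}^{(i)}$, (ii) the strictly positive probability that an individual locus yields $\hat g_{ab}^{(i)}=\infty$, and (iii) the geometric decay of $\mathbb{P}\bigl(\hat d^{\rm (STEAC)}_{ab}<\infty\bigr)$ with $m$. The first is an analytic calculation, the second is a Binomial positivity argument, and the third is a short independence argument.

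For (i), I would apply the law of total variance,
\[
\Var\bigl(\hat g_{ab}^{(i)}\bigr)=\mathbb{E}\bigl[\Var(\hat g_{ab}^{(i)}\mid G^{(i)})\bigr]+\Var\bigl(\mathbb{E}[\hat g_{ab}^{(i)}\mid G^{(i)}]\bigr).
\]
The outer-variance term reduces (to leading order from the Delta method) to $\Var(g_{ab}^{(i)})$, which is always finite under MSC. Specializing Equation~\eqref{eq:steac_sigma} to $c=a$, $d=b$, so that $\delta_{ab,ab}^{(i)}=g_{ab}^{(i)}$, the first term becomes
\[
\frac{1}{3\mu^2 K}\,\mathbb{E}\!\left[e^{2\mu g_{ab}^{(i)}}+2e^{\mu g_{ab}^{(i)}}-3\right],
\]
whose finiteness is controlled by $\mathbb{E}\bigl[e^{2\mu g_{ab}^{(i)}}\bigr]$. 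Writing $g_{ab}^{(i)}=2\tau_{ab}+2T$ with $T\sim\mathrm{Exp}(1)$ (recall that $g_{ab}$ is \emph{twice} the coalescent time and two lineages coalesce at rate $1$), a one-line MGF calculation gives
\[
\mathbb{E}\bigl[e^{2\mu g_{ab}^{(i)}}\bigr]=\frac{e^{4\mu\tau_{ab}}}{1-4\mu}
\]
for $\mu<\tfrac14$, and $+\infty$ otherwise, yielding the iff.

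For (ii), the definition~\eqref{eq:steac_branch_lengths} makes it immediate that $\hat g_{ab}^{(i)}=\infty$ exactly when $\hat p_{ab}^{(i)}\ge\tfrac34$. Conditional on $G^{(i)}$, $K\hat p_{ab}^{(i)}\sim\mathrm{Bin}(K,p_{ab}^{(i)})$ with $p_{ab}^{(i)}\in(0,\tfrac34)$ almost surely, so the conditional probability of the event is bounded below by the probability that all $K$ sites differ, namely $(p_{ab}^{(i)})^K>0$; integrating against the MSC distribution of $G^{(i)}$ preserves strict positivity and gives $p>0$. Note that this holds for any $\mu>0$; the proposition simply records the case $\mu\ge\tfrac14$ to pair it with the divergent variance from (i).

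For (iii), gene trees are independent under MSC conditional on $\mathcal{S}$, and the per-site substitution processes are independent across loci given the gene trees; consequently the events $\{\hat g_{ab}^{(i)}<\infty\}$ are mutually independent, each with probability $1-p$. Since the STEAC sum diverges whenever any single summand does, $\mathbb{P}\bigl(\hat d^{\rm (STEAC)}_{ab}<\infty\bigr)=(1-p)^m\to 0$ as $m\to\infty$. The one step I would verify carefully---and the only real subtlety---is the convention $g_{ab}=2\times(\text{time to MRCA})$: this factor of two is precisely what shifts the MGF threshold from $\tfrac12$ down to $\tfrac14$. All remaining work is routine integration and probabilistic bookkeeping, so I do not anticipate a serious obstacle.
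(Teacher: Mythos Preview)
Your proof is correct. Parts (i) and (iii) are essentially identical to the paper's: both use the law of total variance conditioning on the gene tree, identify $\mathbb{E}[e^{2\mu g_{ab}}]$ as the term that diverges at $\mu=\tfrac14$ via the MGF of the exponential waiting time, and both invoke independence across loci for the geometric decay in (iii).

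The one genuine difference is in (ii). You give a direct Binomial positivity argument: the all-sites-differ event has conditional probability $(p_{ab}^{(i)})^K>0$, and integrating over $G^{(i)}$ preserves strict positivity. The paper instead argues by contrapositive: if $\mathbb{P}(\hat p_{ab}^{(i)}\ge\tfrac34)=0$, then by discreteness of the normalized Hamming distance one has $\hat p_{ab}^{(i)}\le\tfrac34-\epsilon$ almost surely for some fixed $\epsilon>0$, so $\hat g_{ab}^{(i)}$ is uniformly bounded and hence has finite variance; the contrapositive together with the infinite variance established in (i) then forces $p>0$. Your route is shorter and, as you note, applies for every $\mu>0$; the paper's route makes the word ``Consequently'' in the statement literal by deriving (ii) from (i) rather than proving it independently. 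One small notational slip: in the paper's conventions $\tau_{ab}$ (equivalently $S_{ab}$) is already the doubled species-tree distance, so $g_{ab}^{(i)}=\tau_{ab}+2T$ rather than $2\tau_{ab}+2T$; this does not affect the $1/(1-4\mu)$ factor or the threshold.
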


\subsection{Principal components of variance}
As we have seen in Proposition~\ref{prop:covariance_coal_sub_total},
the coalescent, substitution and total covariance matrices of the METAL
estimator for the star species tree only depend on 
$|\{a,b\} \cap \{c,d\}| $. In this subsection we study the spectrum of 
these matrices. Specifically, for a covariance matrix
$C \in \mathbb{R}^{\binom{n}{2} \times \binom{n}{2}}$, with entries
\begin{equation}
    C_{ab,cd} =  {\rm Cov}(\Hat{d}_{ab}, \Hat{d}_{cd}) = \begin{cases}
        \alpha , & \mbox{if } |\{a,b\} \cap \{c,d\}| = 2 \\
        \beta, & \mbox{if } |\{a,b\} \cap \{c,d\}| = 1 \\
        \gamma, & \mbox{if } |\{a,b\} \cap \{c,d\}| = 0 ,
    \end{cases}
    \label{eq:c_def}
\end{equation}
for pairs $(a,b), (c,d) \in \binom{L}{2} $
where $\alpha > \beta > \gamma > 0$, Proposition~\ref{thm:spectrum}
derives its spectrum. \\

\begin{proposition}
    \label{thm:spectrum}
    The matrix $C \in \mathbb{R}^{ {n \choose 2} \times  {n \choose 2} }, n\geq 3$, defined in 
    Equation~\eqref{eq:c_def}, 
    has eigenvalues 
    \begin{itemize}
        \item the dominant eigenvalue 
        $\gamma{n \choose 2} +
        2(\beta-\gamma)(n-1) + \gamma + \alpha - 2\beta$ 
        with multiplicity $1$ and corresponding
        first principal component ${\rm PC1} ={\bf 1}$, 
        \item $(\beta-\gamma)(n-2) +\gamma + \alpha - 2 \beta $ with 
        corresponding principal components ${\rm PC}2, \dots, {\rm PC}{n}$
        dependent on $n$ and
        independent of $\alpha,\beta,\gamma$, and
        \item $\gamma + \alpha - 2\beta$ with multiplicity $n(n-3)/2$.
    \end{itemize}
    Note that these eigenvalues are presented in decreasing magnitude
    and that $C$ is positive definite if and only if 
    $\alpha + \gamma > 2\beta$.
    It follows that the variance ratio of the principal component tends to 
    $\frac{\gamma}{\alpha}$ as $n \to \infty$. 
    The variance ratio of the first $n$ most dominant 
    eigenvectors tends to $\frac{2\beta- \gamma}{\alpha}$
    as $n \to \infty$. \\
\end{proposition}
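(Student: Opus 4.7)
The plan is to recognise that $C$ lies in the Bose--Mesner algebra of the Johnson scheme $J(n,2)$ and to exploit its joint eigenspace decomposition to read off all three eigenvalues at once. Concretely, I would write
\[
    C \;=\; (\alpha-\gamma)\,I \;+\; (\beta-\gamma)\,M_{1} \;+\; \gamma\,J,
\]
where $I$ is the identity on $\binom{L}{2}$, $M_{1}$ is the $0/1$ adjacency matrix of the ``share exactly one element'' relation on pairs, and $J$ is the all-ones matrix. Since $I$, $M_{1}$ and the ``disjoint'' matrix $M_{2} = J - I - M_{1}$ pairwise commute, they are simultaneously diagonalisable, so every eigenvalue of $C$ becomes the corresponding linear combination of the eigenvalues of $I$, $M_{1}$ and $J$ on each common eigenspace.

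Next I would exhibit the three common eigenspaces explicitly. The first is $V_{0} = \spn\{\mathbf{1}\}$. For the second, let $v_{i} \in \mathbb{R}^{\binom{n}{2}}$ be the indicator vector of pairs containing vertex $i$, and set $V_{1} = \spn\{v_{i} - v_{j} : i,j \in [n]\}$; this has dimension $n-1$, and the third eigenspace $V_{2}$ is its orthogonal complement, of dimension $\binom{n}{2} - n = n(n-3)/2$, matching the stated multiplicity. A one-line count gives $J\mathbf{1} = \binom{n}{2}\mathbf{1}$ and $M_{1}\mathbf{1} = 2(n-2)\mathbf{1}$, yielding the dominant eigenvalue $\lambda_{0} = (\alpha - \gamma) + 2(n-2)(\beta - \gamma) + \binom{n}{2}\gamma$, which rewrites as the expression in the statement. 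For $V_{1}$, a short double-counting argument establishes $M_{1} v_{i} = (n-4) v_{i} + 2\mathbf{1}$, so $M_{1}(v_{i} - v_{j}) = (n-4)(v_{i} - v_{j})$; together with $J(v_{i} - v_{j}) = 0$ this gives $\lambda_{1} = (\alpha - \gamma) + (n-4)(\beta - \gamma)$. The eigenvalue on $V_{2}$ is then pinned down either directly (via $M_{1}|_{V_{2}} = -2$, which follows from $I + M_{1} + M_{2} = J$ and the row-sum of $M_{2}$) or, more quickly, from the trace identity $\alpha \binom{n}{2} = \lambda_{0} + (n-1)\lambda_{1} + \tfrac{n(n-3)}{2}\lambda_{2}$, giving $\lambda_{2} = \alpha + \gamma - 2\beta$.

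With the three eigenvalues in hand, the ordering $\lambda_{0} > \lambda_{1} > \lambda_{2}$ under $\alpha > \beta > \gamma > 0$ follows by writing successive differences as positive combinations of $\alpha - \beta$ and $\beta - \gamma$, and positive-definiteness reduces to $\lambda_{2} > 0$, i.e.\ $\alpha + \gamma > 2\beta$. The asymptotic variance ratios then fall out of leading-order estimates: as $n \to \infty$, $\lambda_{0} \sim \gamma \binom{n}{2}$ and $\mathrm{tr}(C) = \alpha \binom{n}{2}$ give $\lambda_{0}/\mathrm{tr}(C) \to \gamma/\alpha$, while $\lambda_{0} + (n-1)\lambda_{1} \sim \binom{n}{2}(2\beta - \gamma)$ yields the top-$n$ variance-ratio limit $(2\beta - \gamma)/\alpha$.

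The main obstacle is the combinatorial bookkeeping for $M_{1} v_{i}$: one must carefully split into the cases $i \in \{a,b\}$ and $i \notin \{a,b\}$ to count pairs $\{c,d\}$ containing $i$ that meet $\{a,b\}$ in exactly one element. Once that identity is available, the remaining steps are routine linear algebra, and the fact that the eigenvectors spanning $V_{1}$ depend only on $n$ (and not on $\alpha, \beta, \gamma$) is automatic from $C$ lying in the three-dimensional Bose--Mesner algebra of the scheme.
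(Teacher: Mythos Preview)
Your proposal is correct and is, in fact, cleaner than the paper's argument, though the two share the same underlying linear-algebraic picture. The paper does not invoke the Johnson scheme explicitly; instead it works with the auxiliary matrix $A$ defined by $A_{ab,cd}=|\{a,b\}\cap\{c,d\}|$ (in your notation $A=2I+M_1$), and writes $C=(\beta-\gamma)A+(\gamma+\alpha-2\beta)I+\gamma J$. It then establishes the spectrum of $A$ in a more hands-on way: first a direct linear-dependence argument shows $\mathrm{rank}(A)\le n$, giving the large null eigenspace; then the quadratic identity $A^2=(n-2)A+4J$ is used to manufacture eigenvectors for the eigenvalue $n-2$, namely the columns of $nA-4J$, and a circulant-matrix check confirms $n-1$ of them are independent. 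Your route via $M_1 v_i=(n-4)v_i+2\mathbf{1}$ reaches the same eigenspace decomposition by a single transparent double count, and your trace argument for $\lambda_2$ is quicker than the paper's rank bound. What the paper's approach buys is explicit closed-form eigenvectors $v^{(xy)}$ indexed by pairs (these turn out to equal $n(v_x+v_y)-4\mathbf{1}$ in your notation), which may be convenient downstream; what your approach buys is that simultaneous diagonalisability, the dimensions of the eigenspaces, and the independence of the eigenvectors from $\alpha,\beta,\gamma$ are all immediate from the Bose--Mesner structure rather than proved ad hoc.
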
 

\par Proposition~\ref{thm:spectrum} shows that each covariance matrix in our model has exactly three distinct eigenvalues.  In particular, the all‐ones vector 
$
\mathbf{1} 
$
is an eigenvector corresponding to the largest eigenvalue.  From the perspective of tropical geometry, $\mathbf{1}$ plays a canonical role: the tropical projective torus 
$
\mathbb{R}^{\binom{n}{2}} \big/ \mathbb{R}\mathbf{1}
$
is precisely the quotient that identifies any vector $\mathbf{v}$ with its translate $\mathbf{v} + c\,\mathbf{1}$ for all real $c$.  Equivalently, adding a constant to every coordinate does not change the point in the toric quotient. 
Statistical analysis on the tropical projective torus has been explored in various works, e.g.~\cite{aliatimis2024tropical, barnhill2023clustering, lee2022tropical}. \\
\par Corollary~\ref{cor} (to Proposition~\ref{thm:spectrum}) further quantifies that up to one‐third of the total variance of our distance‐vector estimates lies along the $\mathbf{1}$‐direction. 
Since the tropical projective torus “modes out’’ precisely the $\mathbf{1}$‐direction, restricting our inference to $\mathbb{R}^{\binom{n}{2}} / \mathbb{R}\mathbf{1}$ removes this single principal‐component subspace. As a result, the remaining variance—now confined to the $(\binom{n}{2}-1)$‐dimensional tropical torus—drops by up to one‐third.  Working in this quotient removes a large portion of extraneous variation while preserving all information relevant to tree topology, since adding any constant multiple of \(\mathbf{1}\) to the distance vector does not alter the inferred tree topology. \\
\par
Moreover, Corollary~\ref{cor} shows that when \(\mu \ll 1\), the relative magnitude of substitution versus coalescent uncertainty scales with \(\mu K\), the expected number of substitutions per locus: as \(\mu K\) increases, substitution noise becomes negligible.  In particular, if 
\begin{equation}
    K^{-1} \ll \mu \ll 
    \frac{1}{2\Delta} \log\left(
        \frac{1+\sqrt{8K\Delta}}{2}
    \right)
    ,
    \label{eq:coalescent_interval}
\end{equation}
then the coalescent covariance term dominates the total uncertainty.  
Note that as the number of bases per gene $K$ increases, the interval widens 
on both ends, whereas increasing $\Delta$ narrows the interval from the upper bound.
Figure~\ref{fig:coal+sub=total} confirms this behavior: substitution‐driven variance is largest for very small and very large \(\mu\), whereas for intermediate \(\mu\) the coalescent variance is predominant.  Figure~\ref{fig:sub/coal} plots the ratio of Frobenius norms \(\|\Sigma_{\rm sub}\|_F/\|\Sigma_{\rm coal}\|_F\), which attains its minimum at \(\mu = \mathcal{O}(1/\Delta)\) when \(\Delta\) is large.  Finally, Figure~\ref{fig:variance_ratio} displays the cumulative variance explained by the leading \(n\) principal components of \(\Sigma_{\rm sub}\).  Remarkably, for \(\mu \ll 1\) and large \(\Delta\), these \(n\) components capture nearly all of the METAL estimator’s variance, while for larger \(\mu\) the explained fraction vanishes, in agreement with Proposition~\ref{prop:coal_infinite_tree}.\\
\par 
It is worth mentioning that Proposition~\ref{prop:coal_infinite_tree} further establishes that as \(\mu\to\infty\), the limiting coalescent correlation matrix, as defined in Equation~\eqref{eq:cor_coal},
has rank \(n-1\), and its nonzero principal components are the split-indicator vectors 
which are exactly the generators of the unique maximal cone of the ultrametric fan that contains
the species-tree ultrametric. In other words, all variation of the METAL distance vector lies within the cone corresponding to the true species‐tree ultrametric. \\

\begin{corollary} \label{cor}
    Consider the 
    matrices $\Sigma_{\rm coal}(\mu), \Sigma_{\rm sub}(\mu), \Sigma_{\rm total}(\mu) \in 
    \mathbb{R}^{\binom{n}{2} \times \binom{n}{2}}$ defined in 
    Equations~\eqref{eq:sigma_total_def}, \eqref{eq:sigma_coal}, \eqref{eq:sigma_evol}
    for a star species tree with diameter $\Delta$.
    They have the following asympototic properties
    \begin{equation*}
        \frac{ \mathrm{Tr}(\Sigma_{\rm sub} )}{ \mathrm{Tr}(\Sigma_{\rm coal} )}, 
        \sqrt[\binom{n}{2}]{ \frac{\det(\Sigma_{\rm sub} )}{ \det(\Sigma_{\rm coal} )}}, 
        \frac{\norm{\Sigma_{\rm sub}}_2 }{\norm{\Sigma_{\rm coal}}_2 },
        \frac{\norm{\Sigma_{\rm sub}}_{\rm F} }{\norm{\Sigma_{\rm coal}}_{\rm F} }= 
        \begin{cases}
            \mathcal{O}(\mu^{-1}K^{-1}), & \text{ as } \mu \to 0, \\
            \mathcal{O}(\mu e^{2\mu\Delta} K^{-1}
            ), & \text{ as } \mu \to \infty
        \end{cases} \\
    \end{equation*}
    As $\mu \to 0$ and $n \to \infty$,
    the first principal component PC1={\bf 1} explains $2/(6+3\Delta)$ of the variance of 
    $\Sigma_{\rm evol}$ and $\Sigma_{\rm total}$, 
    and $2/9$ of the variance of $\Sigma_{\rm coal}$,
    while the first $n$ principal components explain $(4+3\Delta)/(6+3\Delta)$ 
    and $4/9$ of the variance 
    respectively. 
\end{corollary}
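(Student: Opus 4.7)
The plan is to exploit the spectral decomposition from Proposition~\ref{thm:spectrum}. By Proposition~\ref{prop:covariance_coal_sub_total}, each of the matrices $\Sigma_{\rm coal}, \Sigma_{\rm sub}, \Sigma_{\rm total}$ on a star species tree has the three-value block structure of Equation~\eqref{eq:c_def} with $(\alpha,\beta,\gamma) = (\sigma^{(2)}_{\rm mode}, \sigma^{(1)}_{\rm mode}, \sigma^{(0)}_{\rm mode})$, so Proposition~\ref{thm:spectrum} gives their three distinct eigenvalues $\lambda_1 \geq \lambda_2 \geq \lambda_3$ (with multiplicities $1$, $n-1$, $n(n-3)/2$) in closed form. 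Writing the first two explicitly, $\lambda_1 = \tfrac{(n-2)(n-3)}{2}\sigma^{(0)} + 2(n-2)\sigma^{(1)} + \sigma^{(2)}$ and $\lambda_2 = (n-4)\sigma^{(1)} - (n-3)\sigma^{(0)} + \sigma^{(2)}$. Every quantity in the corollary becomes an elementary function of the three scalars $(\sigma^{(0)}, \sigma^{(1)}, \sigma^{(2)})_{\rm mode}$, reducing the proof to asymptotic estimation of these entries.

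For the four norm ratios, I would first note that $\mathrm{Tr}(\Sigma_{\rm mode}) = \binom{n}{2}\sigma^{(2)}_{\rm mode}$, so the trace ratio equals $\sigma^{(2)}_{\rm sub}/\sigma^{(2)}_{\rm coal}$ exactly. The key observation for the other three norms is that $\lambda_1, \lambda_2, \lambda_3$ all scale proportionally to $\sigma^{(2)}_{\rm mode}$ in both asymptotic regimes: as $\mu \to 0$ because the three $\sigma^{(i)}_{\rm mode}$ share a common leading power of $\mu$ ($\mu^2$ for coal, $\mu$ for sub), and as $\mu \to \infty$ because Proposition~\ref{prop:covariance_coal_sub_total} forces $\sigma^{(2)}_{\rm mode} \gg \sigma^{(1)}_{\rm mode} \gg \sigma^{(0)}_{\rm mode}$ by polynomial and exponential factors. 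Consequently $\|\Sigma_{\rm mode}\|_2 = \lambda_1$, $\|\Sigma_{\rm mode}\|_F$, and the geometric mean $\det(\Sigma_{\rm mode})^{1/\binom{n}{2}}$ all scale with $\sigma^{(2)}_{\rm mode}$ up to a constant bounded in $\mu$. Their sub/coal ratios therefore collapse to $\sigma^{(2)}_{\rm sub}/\sigma^{(2)}_{\rm coal}$ and inherit the $\mathcal{O}(\mu^{-1}K^{-1})$ and $\mathcal{O}(\mu e^{2\mu\Delta} K^{-1})$ rates of Proposition~\ref{prop:covariance_coal_sub_total}, with the $K^{-1}$ prefactor arising from the natural weighting $\tfrac{1}{m}\Sigma_{\rm coal}$ versus $\tfrac{1}{mK}\Sigma_{\rm sub}$ in the decomposition~\eqref{eq:sigma_total_decomposition}.

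For the principal-component variance fractions, dividing $\lambda_1$ and $\lambda_1 + (n-1)\lambda_2$ by $\mathrm{Tr}(\Sigma_{\rm mode})$ and letting $n \to \infty$ gives the limits $\sigma^{(0)}_{\rm mode}/\sigma^{(2)}_{\rm mode}$ and $(2\sigma^{(1)}_{\rm mode} - \sigma^{(0)}_{\rm mode})/\sigma^{(2)}_{\rm mode}$, respectively. Evaluating these at $\mu \to 0$ proceeds by Taylor expansion: plugging $e^{-\mu g} = 1 - \mu g + O(\mu^2)$ into Equation~\eqref{eq:sigma_coal} yields $\sigma^{(i)}_{\rm coal} \sim \tfrac{9}{16}\mu^2 \,\mathrm{Cov}(g_{ab},g_{cd})_i$, and a standard Kingman-coalescent computation on a star tree gives $\mathrm{Var}(g_{ab}) : \mathrm{Cov}(g_{ab},g_{ac}) : \mathrm{Cov}(g_{ab},g_{cd}) \to 9:3:2$ as $n \to \infty$, producing the stated $2/9$ and $4/9$. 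Similarly, $e^{2\mu\delta} + 2e^{\mu\delta} - 3 \sim 4\mu\delta$ in Equation~\eqref{eq:sigma_evol} gives $\sigma^{(i)}_{\rm sub} \sim \tfrac{3}{4}\mu\,\mathbb{E}[\delta_{ab,cd}]_i$, and the identities $\mathbb{E}[\delta_{ab,ab}] = \mathbb{E}[g_{ab}] = \Delta + 2$ together with the analogous expectations for $|\{a,b\}\cap\{c,d\}| \in \{0,1\}$ (averaged over the four-leaf Kingman gene-tree topology) recover the $2/(6+3\Delta)$ and $(4+3\Delta)/(6+3\Delta)$ ratios. The same ratios hold for $\Sigma_{\rm total}$ because Part~1 shows that as $\mu \to 0$ the $\tfrac{1}{mK}\Sigma_{\rm sub}$ contribution dominates $\tfrac{1}{m}\Sigma_{\rm coal}$ in~\eqref{eq:sigma_total_decomposition}.

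The spectral algebra cleanly reduces every quantity in the corollary to a function of $(\sigma^{(0)}, \sigma^{(1)}, \sigma^{(2)})_{\rm mode}$, but the precise constants—$2/9$, $4/9$, $2/(6+3\Delta)$, and $(4+3\Delta)/(6+3\Delta)$—hinge on multispecies-coalescent moment calculations on the star tree. The main technical obstacle is therefore computing $\mathrm{Cov}(g_{ab}, g_{cd})$ for each of the three intersection sizes in the large-$n$ limit, and evaluating $\mathbb{E}[\delta_{ab,cd}]$ by summing over the Kingman-weighted four-leaf gene-tree topologies (cherry versus comb) and integrating out the ancestral coalescent waiting times $T_k \sim \mathrm{Exp}(\binom{k}{2})$. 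The linear-algebra step is routine; the coalescent integrals are where the real work concentrates.
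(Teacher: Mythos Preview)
Your approach is essentially the paper's: invoke Proposition~\ref{thm:spectrum} to reduce every quantity to a function of $(\sigma^{(0)},\sigma^{(1)},\sigma^{(2)})_{\rm mode}$, then extract the $\mu\to 0$ and $\mu\to\infty$ asymptotics of these three scalars. The only real divergence is in how you obtain the leading coefficients. The paper simply reads off $\lim_{\mu\to 0}\sigma^{(i)}_{\rm coal}/\mu^2$ and $\lim_{\mu\to 0}\sigma^{(i)}_{\rm sub}/\mu$ from the closed-form rational expressions already derived in the proof of Proposition~\ref{prop:covariance_coal_sub_total} (Equations~\eqref{eq:sigma_coal_0}--\eqref{eq:sigma_sub_2}), which gives the triples $(4,\tfrac43,\tfrac89)$ and $(\tfrac83+\tfrac{4\Delta}{3},\,\tfrac43+\tfrac{2\Delta}{3},\,\tfrac89)$ by inspection. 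You instead propose to Taylor-expand Equations~\eqref{eq:sigma_coal} and~\eqref{eq:sigma_evol} directly, reducing the constants to $\mathrm{Cov}(g_{ab},g_{cd})$ and $\mathbb{E}[\delta_{ab,cd}]$ under the star-tree MSC. This is a legitimate and more conceptual route, but it duplicates work already done: the covariances $\mathrm{Cov}(g_{ab},g_{cd})$ for each intersection size are computed explicitly in the proof of Proposition~\ref{thm:positive_cov}, and the $\mathbb{E}[\delta]$ moments are implicit in the $e^{(1)},e^{(2)}$ formulas of Appendix~\ref{sec:cov_calculations}.

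One point to correct: you repeatedly refer to computing these covariances ``in the large-$n$ limit'' and to ``Kingman waiting times $T_k\sim\mathrm{Exp}\bigl(\binom{k}{2}\bigr)$'' as if the full $n$-leaf coalescent mattered. It does not. Under the MSC, the joint law of $(g_{ab},g_{cd})$ depends only on the species subtree spanned by $\{a,b,c,d\}$; for a star tree this is always the two-, three-, or four-leaf star, independent of $n$. The ratio $9:3:2$ you quote is exact for all $n$, not a limit. The $n\to\infty$ in the corollary enters only through the eigenvalue-to-trace ratios $\gamma/\alpha$ and $(2\beta-\gamma)/\alpha$ from Proposition~\ref{thm:spectrum}, which you already handled correctly.
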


\begin{figure}[ht]
    \centering
    \includegraphics[width=\linewidth]{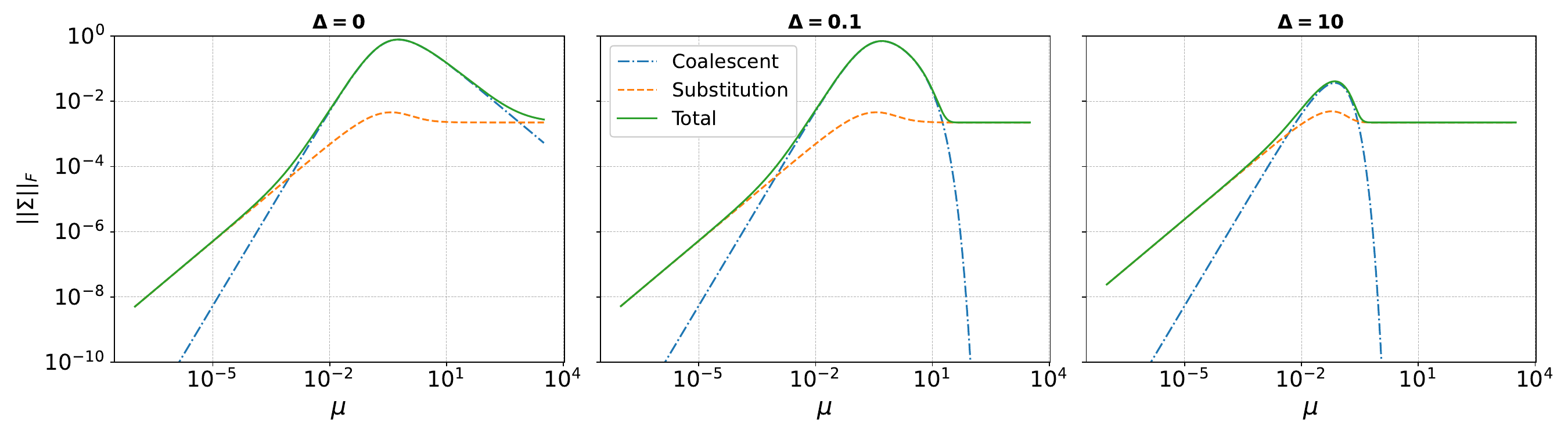}
    \caption{Frobenius norms of \(\Sigma_{\rm coal}(\mu)\), \(\Sigma_{\rm sub}(\mu)\), and \(\Sigma_{\rm total}(\mu)\) as functions of the mutation rate \(\mu\), shown for three values of the species‐tree diameter \(\Delta\) with \(K = 1000\). For \(\mu \ll 1\) and for \(\mu \gg 1\), the substitution variance \(\Sigma_{\rm sub}\) dominates; in the intermediate \(\mu\) range, the coalescent variance \(\Sigma_{\rm coal}\) prevails. As \(\Delta\) increases, the interval in which \(\Sigma_{\rm coal}\) dominates becomes narrower.}
    \label{fig:coal+sub=total}
\end{figure}

\begin{figure}[ht]
    \centering
    \includegraphics[width=.7\linewidth]{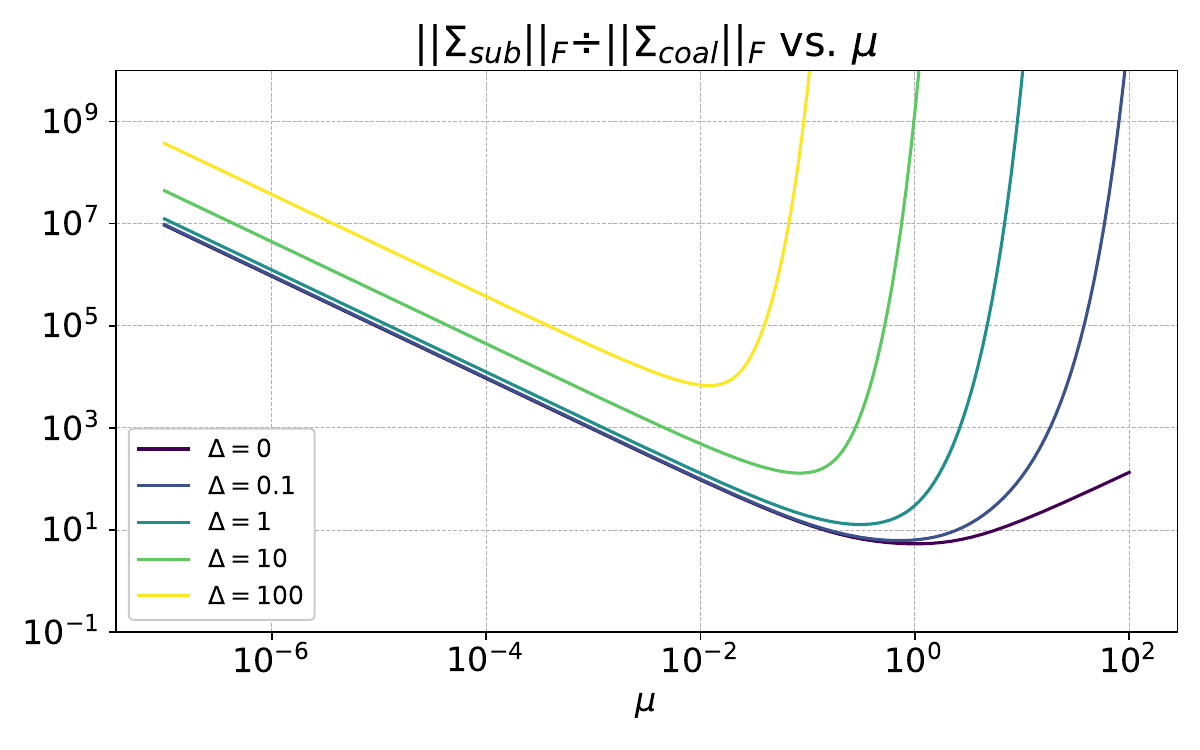}
    \caption{Ratio \(\|\Sigma_{\rm sub}(\mu)\|_F \big/ \|\Sigma_{\rm coal}(\mu)\|_F\) plotted against \(\mu\) for several values of \(\Delta\). Substitution variance (\(\Sigma_{\rm sub}\)) dominates when \(\mu \ll 1\) or \(\mu\,\Delta \gg 1\), whereas coalescent variance  dominates in the intermediate \(\mu\) regime.}
    \label{fig:sub/coal}
\end{figure}

\begin{proposition} \label{prop:coal_infinite_tree}
    Consider a species tree $S$ with positive pair distances $\tau_{i,j} = \rho_{i,j} \Delta$, 
    where $\Delta$ is the diameter of the tree and $\rho_{ij}  \in (0,1]$. 
    From the covariance matrix defined in Equation~\eqref{eq:sigma_coal}, if we
    construct the correlation matrix $C_{\rm coal}(\Delta) = D_{\rm coal}^{-1} \Sigma_{\rm coal}(\Delta) D_{\rm coal}^{-1} $, 
    where $D_{\rm coal} = \sqrt{{\rm diag}(\Sigma_{\rm coal}(\Delta))}$ and consider 
    the limiting case for large trees, then we let 
    \begin{equation}
        C_{\rm coal}^{\infty} \coloneqq \lim_{\Delta \to \infty} C_{\rm coal}(\Delta).
        \label{eq:cor_coal}
    \end{equation}
    Let $i \in V\backslash L$ 
    be an internal node of the species tree $S$, and the sets $L_i,R_i \subset L = [n]$ are the sets of leaf nodes left and right of $i$  respectively. Define the vector, 
    \begin{align*}
        v^{(i)}_{kl} & \coloneqq
        \mathbb{I} \left(
            i \text{ is the most recent common ancestor of } k \text{ and } l 
        \right)
        \\ &= 
        \mathbb{I} \left( 
            (k,l) \in (L_i \times R_i) \cup (R_i \times L_i) 
        \right), 
    \end{align*}
which are also known as split-indicator vectors.
Then, $C_{\rm coal}^{\infty} v^{(i)} = |L_i| |R_i|  v^{(i)} $   i.e. $v^{(i)}$ is an eigenvector with corresponding eigenvalues $|L_i| |R_i|$.  These are exactly $n-1$ principal components corresponding to distinct internal nodes in $V\backslash L$. 
The remaining eigenvalues are zero, and so ${\rm rank}(C_{\rm coal}^{\infty}) = n-1$.
Finally, let the species tree pairwise distance vector
$d^{\rm S} \in \mathbb{R}^{\binom{n}{2} \times \binom{n}{2}}$, 
$d^{\rm S}_{ab} = \tau_{ab},\, \forall a,b \in L$ be an ultrametric i.e.
$d \in \mathcal{U}_n$ and suppose it does not lie on topological boundaries. 
Then,
there exists $\delta_i > 0$ such that for all $\epsilon_i \in (0,\delta_i)$, 
\begin{equation*}
    d^{\rm S} + \sum_{i = 1}^{n-1} \epsilon_i v^{(i)} \in  \mathcal{U}_n
\end{equation*}
Finally, $C_{\rm evol}^{\infty} = C_{\rm total}^{\infty} = I_{\binom{n}{2}}$, where 
the correlation matrices are defined as above, and 
$I_N \in \mathbb{R}^{N\times N}$ is the identity 
matrix.
\end{proposition}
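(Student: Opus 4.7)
The plan is to first establish the coalescent limit as $\Delta \to \infty$ and then use it to derive the eigenstructure of $C_{\rm coal}^\infty$. Under the MSC, write $g_{ab}^{(1)} = \tau_{ab} + 2 T^{*}_{ab}$, where $T^{*}_{ab}$ is the extra waiting time for $a,b$ to coalesce above their species-tree MRCA. Because every $\tau_{ij}=\rho_{ij}\Delta$ grows linearly in $\Delta$, the probability of deep coalescence in any internal branch is $O(e^{-c\Delta})$ for some $c>0$. Conditional on the complementary high-probability event, each internal node $i$ sees exactly two lineages enter its ancestral branch and coalesce after a single $\mathrm{Exp}(1)$ waiting time $T_i$, independent across nodes. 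Thus in the limit $g_{ab} \to \tau_{ab} + 2\,T_{\mathrm{MRCA}(a,b)}$, so any two pairs sharing a species-tree MRCA become identical through the shared $T_i$ (asymptotic correlation $1$), while pairs with different MRCAs become independent (asymptotic correlation $0$).

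This immediately gives $(C_{\rm coal}^\infty)_{ab,cd} = \mathbf{1}\{\mathrm{MRCA}(a,b)=\mathrm{MRCA}(c,d)\} = \sum_{i\in V\setminus L} v^{(i)}_{ab} v^{(i)}_{cd}$, i.e.\ $C_{\rm coal}^\infty = \sum_i v^{(i)} (v^{(i)})^{\top}$. Since every leaf pair has a unique MRCA, the vectors $\{v^{(i)}\}$ have pairwise disjoint supports, hence are mutually orthogonal with squared norms $\|v^{(i)}\|^2 = |L_i||R_i|$. Applying $C_{\rm coal}^\infty$ to $v^{(j)}$ and using this orthogonality yields $C_{\rm coal}^\infty v^{(j)} = \|v^{(j)}\|^2 v^{(j)} = |L_j||R_j|\,v^{(j)}$. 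A binary rooted tree on $n$ leaves has exactly $n-1$ internal nodes, so we obtain $n-1$ nonzero orthogonal eigenvectors; and as $C_{\rm coal}^\infty$ is a sum of $n-1$ rank-one matrices with disjoint ranges, its rank is exactly $n-1$ and the remaining $\binom{n}{2}-(n-1)$ eigenvalues vanish.

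For the ultrametric claim I would interpret $v^{(i)}$ geometrically: adding $\epsilon_i v^{(i)}$ to $d^{\rm S}$ adds $\epsilon_i$ to every pair distance with MRCA$=i$, which is equivalent to raising the height $h_i$ of node $i$ by $\epsilon_i/2$ while leaving all other node heights fixed. Because $d^{\rm S}$ lies in the interior of its maximal cone of $\mathcal{U}_n$, we have $h_i < h_{\mathrm{parent}(i)}$ strictly at every non-root internal node. Choosing $\delta_i = h_{\mathrm{parent}(i)} - h_i > 0$ (and $\delta_{\rm root}$ an arbitrary positive constant) ensures that any $\epsilon_i \in (0,\delta_i)$ keeps the perturbed heights strictly monotone along every root-to-leaf path, so the perturbation stays in the same maximal cone of $\mathcal{U}_n$.

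Finally, for $C_{\rm sub}^\infty = C_{\rm total}^\infty = I$, I would show $\Sigma_{\rm sub} \to \tfrac{3}{16} I$: the diagonal $(\Sigma_{\rm sub})_{ab,ab} = \tfrac{3}{16}\bigl(1 + 2\,\mathbb{E}[e^{-\mu g_{ab}}] - 3\,\mathbb{E}[e^{-2\mu g_{ab}}]\bigr) \to \tfrac{3}{16}$ since $g_{ab}\to\infty$; and each off-diagonal entry splits into three terms of the form $e^{-\mu L}$ with $L = g_{ab}+g_{cd} - k\,\delta_{ab,cd}^{(1)}$, $k\in\{0,1,2\}$, a positive linear combination of non-shared path lengths in the gene tree that grows as $\Theta(\Delta)$ for any distinct pairs. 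Combined with the bound $\Sigma_{\rm coal} = O(e^{-2\mu\Delta})$ from Proposition~\ref{prop:covariance_coal_sub_total}, the substitution component dominates in $\Sigma_{\rm total}$, giving $C_{\rm total}^\infty = I$ as well. The main technical hurdle is making the coalescent limit rigorous: one must quantitatively bound how the $O(e^{-c\Delta})$ deep-coalescence corrections propagate through the exponential moments $\mathbb{E}[e^{-\mu g_{ab}}]$ and through the random gene-tree quantity $\delta_{ab,cd}^{(1)}$, and then verify that $C_{\rm coal}(\Delta)$ converges entrywise despite both its numerator and denominator tending to zero.
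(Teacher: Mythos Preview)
Your proposal is correct and, for the core eigenstructure claim, takes a cleaner route than the paper. The paper first asserts that $(C_{\rm coal}^\infty)_{ab,cd}\in\{0,1\}$ (relying implicitly on the closed-form correlations computed in the proof of Proposition~\ref{thm:positive_cov}), then checks $C_{\rm coal}^\infty v^{(i)}=|L_i||R_i|\,v^{(i)}$ by a direct entrywise computation, and finally proves rank $n-1$ by showing $\sum_i |L_i||R_i|=\binom{n}{2}$ via strong induction on subtrees together with positive semidefiniteness. Your observation that the $v^{(i)}$ have pairwise disjoint supports and that $C_{\rm coal}^\infty=\sum_i v^{(i)}(v^{(i)})^\top$ gives all of this at once: orthogonality, the eigenvalues $\|v^{(i)}\|^2=|L_i||R_i|$, and the rank, with no induction needed. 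Your MRCA formulation of the $0/1$ pattern is also more uniform than the paper's four-leaf topology statement, since it handles the two- and three-leaf overlap cases without separate treatment.

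On the limit itself: your coupling heuristic in the first paragraph is the right intuition, but the quickest way to make it rigorous is exactly what the paper leans on implicitly --- plug $t=-\mu$ into the explicit correlation formulas derived in the proof of Proposition~\ref{thm:positive_cov} and send each internal branch length $\tau\to\infty$; every off-block correlation carries a factor $e^{(2t-1)\tau}$ or $e^{-\tau}$ and vanishes, while the in-block ones tend to $1$. This sidesteps the issue you flag at the end about controlling deep-coalescence corrections inside exponential moments. Finally, note that your treatment of the ultrametric perturbation and of $C_{\rm sub}^\infty=C_{\rm total}^\infty=I$ goes beyond what the paper's proof actually supplies, and your arguments for both are sound; for the $\Sigma_{\rm coal}=O(e^{-2\mu\Delta})$ bound you cite Proposition~\ref{prop:covariance_coal_sub_total}, but that is stated only for star trees --- the general case follows from the same Proposition~\ref{thm:positive_cov} formulas, each of which carries an overall factor $e^{-\mu(S_{ab}+S_{cd})}$.
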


\begin{figure}[htbp]
    \centering
    \includegraphics[width=0.9\linewidth]{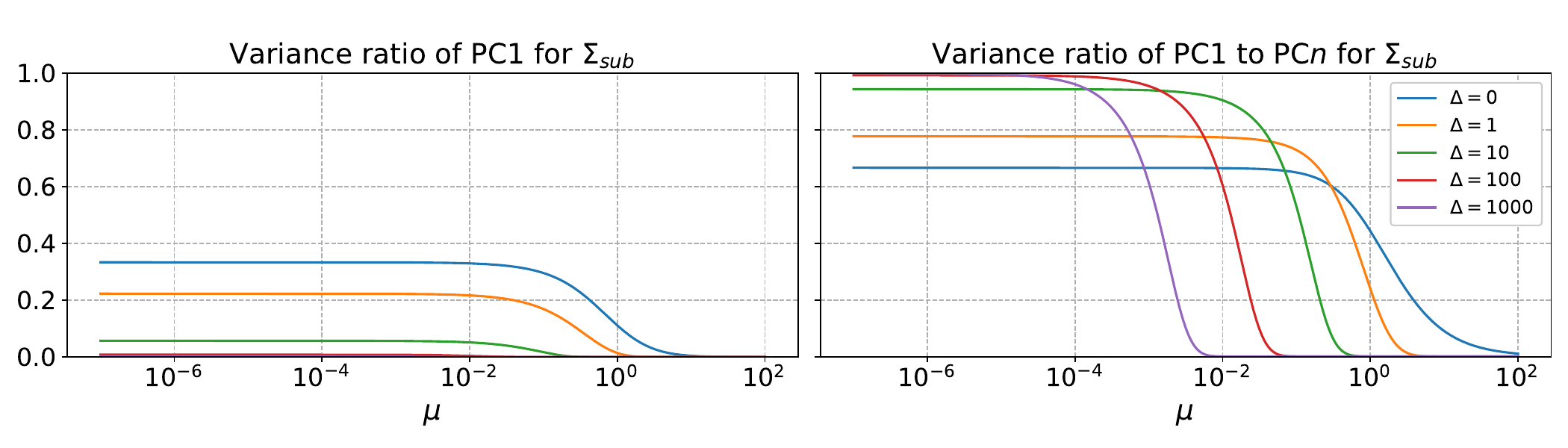} \\
    \caption{Variance ratios for \(\Sigma_{\rm sub}\) plotted against mutation rate \(\mu\) and species‐tree diameter \(\Delta\).  (Left) Fraction of total variance explained by the first principal component.  (Right) Fraction of total variance explained by the first \(n\) principal components.  For corresponding plots of \(\Sigma_{\rm coal}\) and \(\Sigma_{\rm total}\), see Figure~\ref{fig:complete_variance_ratio} in the Appendix.}
    \label{fig:variance_ratio}
\end{figure}

\section{Simulation Results} \label{sec:sim_results}
Two applications of the derived covariance matrix in phylogenetic analysis are considered in this
section. In the first application (Section \ref{sec:STE}), our derivations of the covariance matrix enable
us to quantify uncertainty from the substitution model, which predicts the relative 
performance of GLASS against METAL. In the second application (Section \ref{sec:BPC}), the covariance 
matrix is used to compute node‐support values as confidence for each bipartition of 
the estimated METAL tree. We compare our method to the standard bootstrap approach.
Our repository can be found in \cite{aliatimis2024metal}.




\subsection{Species‐Tree Estimation} \label{sec:STE}

We begin by simulating data under the multi‐species coalescent (MSC) combined with a site‐substitution model.  Using default parameters 
\[
    \mu = 1,\quad \Delta = 1,\quad K = 100,\quad n = 10,\quad m = 100,
\]
we first generate a species tree of diameter \(\Delta\) (in coalescent units) via MSC.  Then \(m\) gene trees are sampled under MSC on that species tree.  Finally, sequences of length \(K\) are evolved along each gene tree under a substitution model with rate \(\mu\), producing \(m\) alignments. \\
\par From these \(m\) alignments, we infer the species tree using three approaches (see Algorithm~\ref{alg:species_tree_reconstruction}): STEAC, METAL, and GLASS.  In STEAC and GLASS, 
each gene’s pairwise Hamming distances \(\hat p^{(i)}_{ab}\) are transformed via the inverse Jukes–Cantor formula to branch‐length estimates \(\hat g^{(i)}_{ab}\). STEAC then averages 
across all \(m\) genes before applying UPGMA. 
GLASS takes the minimum branch gene tree branch lengths
before applying UPGMA.  METAL concatenates all \(m\) alignments, computes Hamming distances on the full superalignment, and reconstructs a UPGMA tree from that distance matrix. \\
\begin{figure}[ht]
    \centering
    \includegraphics[width=0.8\linewidth]{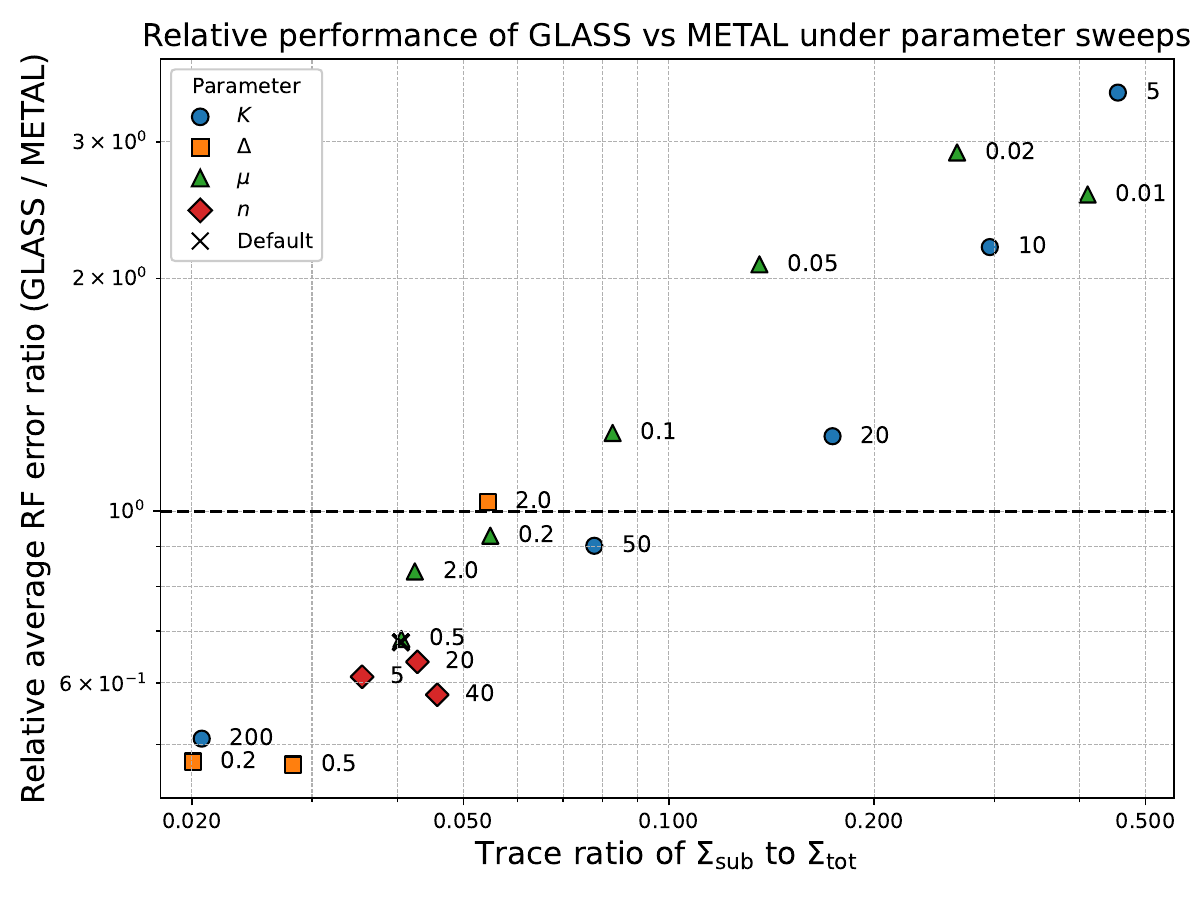}
    \caption{Comparison of GLASS and METAL performance as a function of the ratio 
    \(\mathrm{tr}(\Sigma_{\mathrm{sub}})/\mathrm{tr}(\Sigma_{\mathrm{total}})\). 
    When substitution‐model variance constitutes a large fraction of total variance, METAL outperforms GLASS.}
    \label{fig:glass_vs_metal}
\end{figure} 

\par Figure~\ref{fig:glass_vs_metal} illustrates how the relative accuracy of GLASS versus METAL depends on the fraction of variance contributed by the substitution model.  As this ratio increases, METAL (which explicitly pools substitution noise across genes) consistently outperforms GLASS. To quantify reconstruction accuracy, each inferred tree \(\widehat T_{\mathrm{method}}\) is compared against the true species tree \(T_{\mathrm{true}}\) via the Robinson–Foulds (RF) distance.
Over \(R\) replicates (here \(R=100\)), we obtain RF distances \(\mathrm{RF}_{\mathrm{STEAC}}^{(r)},\,\mathrm{RF}_{\mathrm{METAL}}^{(r)},\,\mathrm{RF}_{\mathrm{GLASS}}^{(r)}\).  The mean RF error for each method is
\[
    \overline{\mathrm{RF}}_{\mathrm{method}}
    \;=\; \frac{1}{R}\sum_{r=1}^{R} \mathrm{RF}_{\mathrm{method}}^{(r)}.
\]

\par We consistently observe 
\[
    \overline{\mathrm{RF}}_{\mathrm{STEAC}} \;>\; \overline{\mathrm{RF}}_{\mathrm{METAL}},
\]
so STEAC underperforms METAL in all tested regimes (see Figure~\ref{fig:steac_vs_metal} in the Appendix).  In contrast, 
\[
    \frac{\overline{\mathrm{RF}}_{\mathrm{GLASS}}}{\overline{\mathrm{RF}}_{\mathrm{METAL}}}
\]
increases with 
\(\mathrm{tr}(\Sigma_{\mathrm{sub}})/\mathrm{tr}(\Sigma_{\mathrm{total}})\).  When substitution variance dominates, GLASS yields higher RF error than METAL.  However, if coalescent variance from MSC is the primary source (for instance, when \(K\) is large or \(\mu\) lies in an intermediate regime), GLASS approaches the Kullback–Leibler‐optimal species‐tree estimator and outperforms METAL, i.e.
\[
    \overline{\mathrm{RF}}_{\mathrm{GLASS}} \;<\; \overline{\mathrm{RF}}_{\mathrm{METAL}}.
\] \\

\par In summary, the ratio
\[
    \frac{\mathrm{tr}(\Sigma_{\mathrm{sub}})}{\mathrm{tr}(\Sigma_{\mathrm{total}})}
\]
determines the preferable method: if this ratio exceeds a threshold (when \(\mu\) is very low or very high, \(\Delta\) is large, or \(K\) is small), METAL—by aggregating substitution noise—yields lower RF error.  Otherwise, for moderate \(\mu\) and large \(K\), the coalescent term dominates and GLASS (leveraging accurate gene‐tree estimates) gives superior accuracy.

\subsection{Bipartition Confidence} \label{sec:BPC}

The procedure begins with data generation: we first simulate a species tree, then simulate gene trees under the multi‐species coalescent (MSC) model using \texttt{DendroPy}~\cite{Moreno2024}, and finally perform sequence evolution along those gene trees using \texttt{Pyvolve}~\cite{SpielmanWilke2015}. This pipeline is illustrated in Fig.~\ref{fig:phylogenomics_as_hierarchical_model}. \\
\par Once we have simulated data, we obtain the METAL estimate by computing Hamming distances between the concatenated sequences. These distances serve as pairwise dissimilarities, and we reconstruct a tree via the UPGMA algorithm, following the METAL methodology. \\
\par Our main objective is to assign confidence scores to tree splits (bipartitions). The standard approach uses bootstrapping: sites from the concatenated alignment are resampled (with replacement) to generate pseudo‐replicate alignments. For each replicate, we compute a new METAL tree, yielding a collection of bootstrap trees. The proportion of replicates in which a given split appears provides its bootstrap support. \\
\par As an alternative, we propose a Gaussian‐sampling approach. The METAL estimator of the average pairwise distances can, for a sufficiently large number of genes, be approximated by a multivariate normal distribution. The mean of this distribution is the METAL estimate itself, and the covariance is derived from the METAL tree (in coalescent units to scale Hamming distances appropriately). The quality of this normal approximation improves as the number of genes increases. \\
\par We draw samples from this multivariate distribution to obtain synthetic vectors of pairwise distances. Each sampled distance matrix is then converted into a tree via UPGMA, exactly as in the bootstrap procedure. We estimate split support by computing the frequency of each bipartition across these “Gaussian‐sampled” trees. \\
\par Figure~\ref{fig:three_trees} provides an illustrative comparison. The leftmost tree is the METAL estimate with bootstrap‐derived support values, the middle tree shows the same METAL topology but with support values from Gaussian sampling, and the rightmost tree is the true species topology. Note that Gaussian sampling typically yields more conservative (lower) support values than bootstrapping. \\

\begin{figure}[h]
    \centering
    \includegraphics[width=\linewidth]{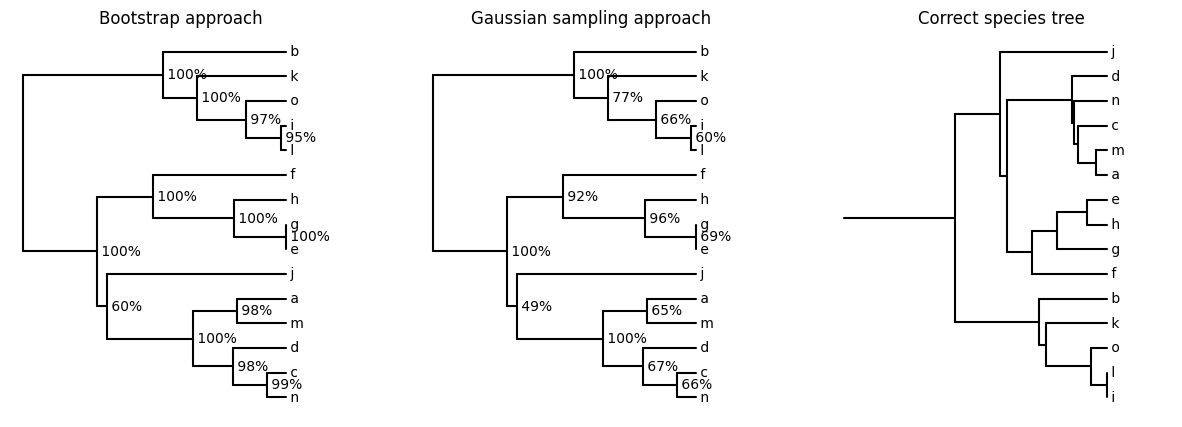}
    \caption{Example of METAL estimate (left) with bootstrap support (percentages at each node), METAL estimate (center) with Gaussian‐sampling support, and the true species tree (right). Bootstrap supports tend to be higher than Gaussian supports, but may also overestimate confidence for incorrect splits.}
    \label{fig:three_trees}
\end{figure}

To compare the two confidence‐estimation methods quantitatively, we exploit the fact that the true species tree is known in simulation. With $n = 40$ taxa, there are $n - 2 = 38$ non‐trivial splits. We label each split in the METAL estimate as correct (1) if it appears in the true tree, or incorrect (0) otherwise. Using the support scores assigned by each method, we compute the area under the ROC curve (AUC) to measure how well each set of support values discriminates correct splits from incorrect ones. \\

This experiment is repeated 100 times. In each series of 100 replicates, we vary exactly one simulation parameter (e.g., mutation rate $\mu$) while holding all other parameters (species‐tree diameter $\Delta$, number of sites per gene $K$, and number of genes $m$, number of taxa $n$) fixed. We then repeat separately for each of the other parameters ($\Delta$, $K$, $m$, and $n$), always keeping the remaining factors constant. \\

Figure~\ref{fig:auc_boxplots} shows boxplots of AUC scores for bootstrapping versus Gaussian sampling. In the left panel, we vary the number of genes $m$; in the right panel, we vary the number of sites per gene $K$. In both cases, the Gaussian‐sampling approach yields higher median AUC and lower variance than bootstrapping. 
Similar figures for the other parameters 
can be found in Appendix~\ref{sec:plots}. 
What is interesting about parameters $m$ and $K$ is that the 
concatenated sequence length has $mK$ bases which is 
As the number of genes \(m\) or the number of sites per gene \(K\) increases, the runtime of the bootstrap procedure grows in proportion to 
\(m \cdot K\), the size of the concatenated sequence. In contrast, the computational cost of the Gaussian‐sampling approach is essentially unaffected by changes in \(m\) or \(K\), since it depends only on the number of taxa \(n\). Moreover, although bootstrapping requires more time as \(m\) and \(K\) increase, its ability to correctly classify true splits does not improve—in fact, its classification accuracy becomes more erratic. \\

\begin{figure}[h]
  \centering
    \includegraphics[width=\textwidth]{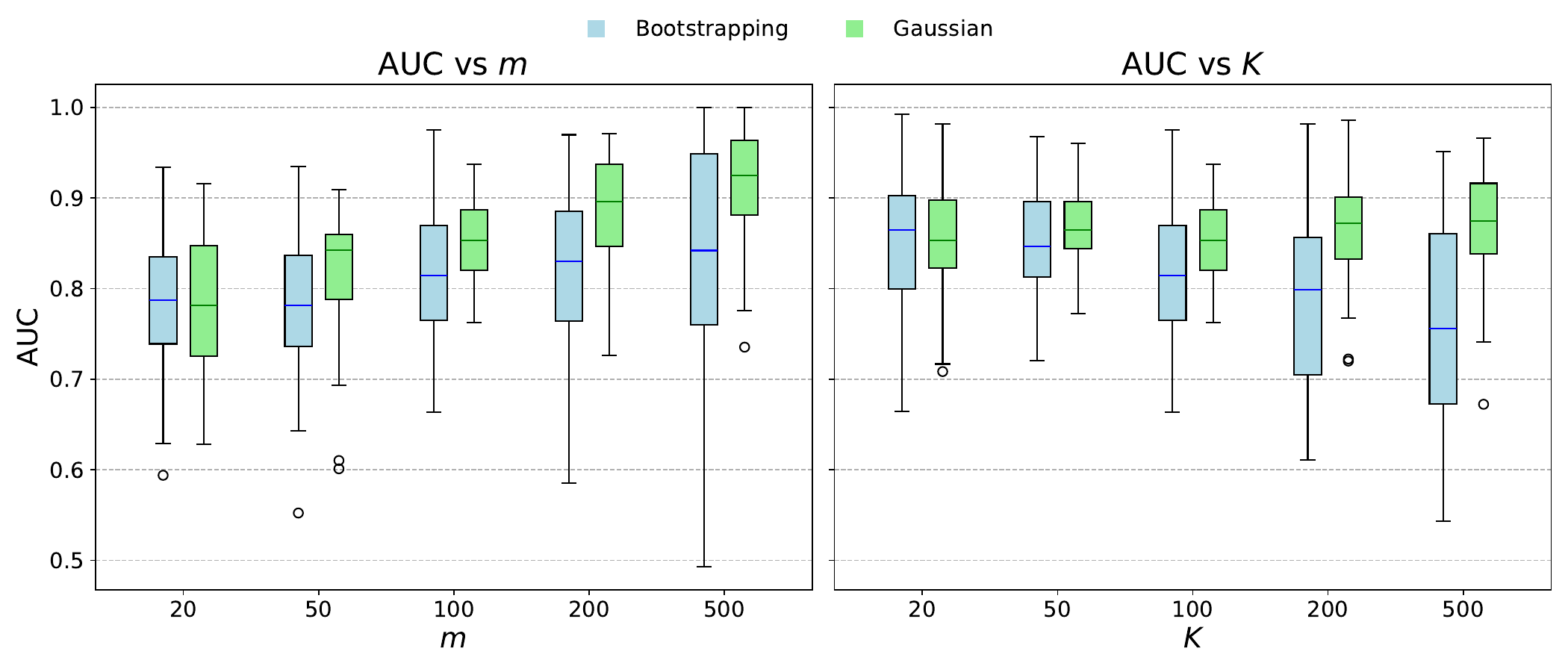}
  \caption{Boxplots of AUC values comparing bootstrap and Gaussian‐sampling support scores.  (Left:) AUC distributions as the number of genes $m$ varies (with $\mu$, $\Delta$, $n$, and $K$ held constant).  (Right:) AUC distributions as the number of sites per gene $K$ varies (with $\mu$, $\Delta$, $n$, and $m$ held constant). In both scenarios, Gaussian sampling outperforms bootstrapping in terms of median AUC and variability.}
  \label{fig:auc_boxplots}
\end{figure}

\subsubsection{Computational Complexity Comparison}
For bootstrapping, each replicate requires recomputing the METAL tree from a resampled alignment of total length $m \cdot K$. Computing all pairwise Hamming distances is $\mathcal{O}(n^2 \cdot m \cdot K)$, since there are $\binom{n}{2} = \mathcal{O}(n^2)$ taxon pairs. To ensure METAL is accurate with probability $1 - \varepsilon$, one needs 
\[
  m \;=\; O\!\bigl(\tfrac{\log(n/\varepsilon)}{f^2}\bigr),
\]
where $f$ is the shortest internal branch length. Since $f \leq \Delta / (n - 1)$, one obtains $m = \Omega(n^2 \log n)$. Hence each bootstrap replicate costs $\mathcal{O}(n^4 \log n \cdot K)$, and obtaining $B$ replicates is $B$ times that.

In contrast, the Gaussian sampling approach requires computing a covariance matrix over 
\[
  N = \binom{n}{2} \;=\; \mathcal{O}(n^2)
\]
entries, resulting in a full covariance matrix of size $N \times N$, containing $\mathcal{O}(n^4)$ elements. Computing the Cholesky decomposition of this matrix takes $\mathcal{O}(N^3) = \mathcal{O}(n^6)$ time, but only once. Each subsequent multivariate‐normal sample (to produce one synthetic distance vector) costs $\mathcal{O}(N^2) = \mathcal{O}(n^4)$. Crucially, this Gaussian method is independent of the sequence length $K$, making it far more efficient when $K$ is large.\\
\par
Nonetheless, when the number of taxa is very large $(n \gg1)$, 
the one-time cost of the Cholesky decomposition becomes prohibitive, and
bootstrapping may be more efficient. However, if the mutation rate 
is low 
$\mu \ll 1$ or if $\mu \Delta \gg 1$, we can instead leverage 
the theoretical results
of Section~\ref{sec:theory}, specifically Corollary~\ref{cor} and
Proposition~\ref{prop:coal_infinite_tree}. These results describe the
asymptotic spectrum of the total covariance matrix of the METAL estimator. 
In both of these limiting cases, we have shown 
in Proposition~\ref{prop:covariance_coal_sub_total}
that $\Sigma_{\rm coal} \ll \Sigma_{\rm sub}$, and so that 
$\Sigma_{\rm total} \approx \Sigma_{\rm sub}$.
For $\mu \ll 1$, there are three distinct eigenvalues with known variance ratios. As illustrated in Figure~\ref{fig:principal_components}, the asymptotic
predictions of the eigenvalues of Corollary~\ref{cor} 
closely match the empirical ones of a randomly generated species tree with
$\Delta=1, n =30, \mu=0.2$. 
Knowing the principal components in advance allows us to bypass the Cholesky 
decomposition entirely, eliminating the most computationally expensive step of the method.
\begin{figure}[h]
    \centering
    \includegraphics[width=0.6\linewidth]{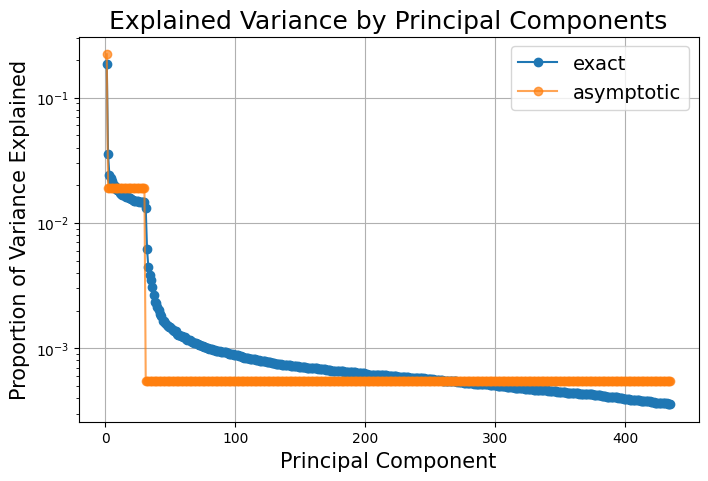}
    \caption{Explained variance by principal components of 
    the total covariance matrix 
    for a species tree with 
    $\Delta=1, n =30, \mu=0.2, K= 100$. The exact curve (blue) shows the empirical proportion of variance explained by each principal component, while the "asymptotic" curve (orange) represents theoretical predictions from Corollary~\ref{cor}. The close agreement validates the asymptotic spectrum in the low mutation rate regime ( \(\mu \ll 1\) ).}
    \label{fig:principal_components}
\end{figure}

\section{Discussion} \label{sec:discussion}

In this work, we explicitly formulate the covariant matrix of all pairwise distances under the MSC and substitution models for the species tree reconstruction problem.  
METAL constructs a distance matrix from concatenated multi-locus data~\cite{dasarathy2014data}, 
and then applies a clustering algorithm to infer the species tree. Our first main result is the
derivation of the exact covariance matrix of METAL's pairwise distance estimates. This covariance 
naturally decomposes into a term arising from coalescent genealogy variance and a term from sequence-substitution variance. We give explicit formulas for both components. We then analyze
the asymptotic regime as the number of loci grows large; we show that METAL distances concetrate
around expectations, and we study the spectrum of the limiting covariance matrix.
The spectral analysis suggests that much of the phylogenetic signal lies in the first few principal
components of the distance matrix when the number of leaves is large. \\
\par 
Our simulation study illustrates how the relative performance of distance-based methods depends on the underlying source of variation. In particular, METAL tends to outperform GLASS when the substitutional 
noise dominates, while GLASS becomes more accurate as 
coalescence variance takes over. This sensitivity reflects the 
design of each estimator; METAL effectively aggregates substitutional 
uncertainty across loci, whereas GLASS leverages gene-tree precision
when it is available. Likewise, our results show METAL retaining strong performance
even under rate variation and short branches. Finally, we introduce a new Gaussian-sampling procedure to estimate METAL split-support values, which uses the derived covariance to quantify support for inferred clades. \\
\par 
There are several natural extensions of our work. Our 
current analysis 
assumed a strict molecular clock, so that all lineages evolve at the same rate and
that the gene and species trees are equidistant. Relaxing this assumption is 
important for real data. Generalizing our covariance derivation to non-clock 
settings would be a major next step of our current analysis. \\
\par 
In conclusion, we have offered the first systematic decomposition of the covariance structure in distance-based 
methods like METAL and STEAC, separating the contributions of gene-tree (coalescent) variability from sequence-level
(substitutional) noise. Our framework not only illuminates how these two sources of uncertainty compare to each other, but also lays the groundwork for extending this analysis to a broader class of estimators. 
An important direction for future research is to determine the extent to which our findings on the balance 
between coalescent and substitutional effects hold under alternative models and estimators. 
Addressing this question will deepen our understanding of estimator performance across diverse biological scenarios.

\section*{Acknowledgments}

RY is partially funded by NSF Division of Mathematical Sciences: Statistics Program DMS 1916037. GA is funded by EPSRC through the STOR-i Centre for Doctoral Training under grant EP/L015692/1.

\begin{appendices}
\section{Proofs} \label{sec:proofs}

\begin{proof}[\bf Proof of Proposition \ref{thm:metal_to_steac_and_glass}]
    The definitions of the dissimilarity metrics can be found in Equations~\eqref{eq:glass_def}--\eqref{eq:metal_def}.
    We start by considering low mutation rates. Since, 
    \begin{equation*}
        \lim_{\mu \to 0}\frac{3}{4} \cdot \frac{1 - \exp\left( - \frac{4}{3} \mu  g_{ab}^{(i)}  \right) }{\mu} = g_{ab}^{(i)},\,\,
        \forall a,b \in L
    \end{equation*}
    and by expressing Equation~\eqref{eq:metal_def} as the partial sum of 
    terms $1 - \exp\left( - \frac{4}{3} \mu  g_{ab}^{(i)} \right)$, it follows that
    \begin{equation*}
        \lim_{\mu \to 0} \frac{d^{\rm (METAL)}(a,b)}{\mu} = \frac{1}{m} \sum_{i=1}^{m} g_{ab}^{(i)} = d^{\rm (STEAC)}(a,b), \,\,
        \forall a,b \in L .
    \end{equation*}
    Therefore, in the limit $\mu \to 0$, the pairwise METAL distances
    become proportional to the STEAC distances. 
    Since hierarchical clustering dendrograms are invariant under scalar multiplication of the distance matrix, 
    \begin{equation*} \lim_{\mu \to 0} \Hat{T}_{\rm METAL}(\mu) = \hat{T}_{\rm STEAC}.
    \end{equation*}
    For high mutation rates, note that if the GLASS minimum coalescence times are unique, i.e. 
    for any given pair of leaves $a,b \in L, a \neq b$
    there exists $i^* \in  [m]$ such that
    $\min_{j \in [m]} g_{ab}^{(j)} = g_{ab}^{(i^*)} < g_{ab}^{(i)}$ for all 
    $i \in [m]\backslash \{i^*\}$, then 
    \begin{equation*}
        \lim_{\mu \to \infty} m \frac{1- \frac{4}{3}d^{\rm (METAL)}(a,b)}
        {\exp\left(- \frac{4}{3} \mu d^{\rm (GLASS)}(a,b) \right)} = \lim_{\mu \to \infty}
        \sum_{i=1}^m \exp\left(-\frac{4}{3} \mu 
        \left(g_{ab}^{(i)}-g_{ab}^{(i^*)} \right)
        \right) = 1.
    \end{equation*}
    In other words,
    \begin{equation*}
        d^{\rm (METAL)}(a,b) = \frac{3}{4} \left( 1- 
        \left( \frac{1}{m} + o(1) \right)
        \exp\left(- \frac{4}{3} \mu d^{\rm (GLASS)}(a,b) \right) 
        \right).
    \end{equation*}
    Note that $f(x) = \frac{3}{4} \left( 1 - a \exp\left(- \frac{4}{3} \mu x \right)
    \right)$ is a strictly increasing function. 
    Hence, for a sufficiently large $\mu$, 
    \begin{equation*}
        d^{\rm (GLASS)}(a,b) < d^{\rm (GLASS)}(c,d) \Rightarrow 
        d^{\rm (METAL)}(a,b) < d^{\rm (METAL)}(c,d).
    \end{equation*}
    Since the number of combinations $a,b,c,d \in L$ is finite, there exists a sufficiently
    large $\mu$ such that the above equation holds for all $a,b,c,d \in L$. 
    Hence, the asymptotic ordering of METAL distances agrees with the order of GLASS distances. 
    Another way of reaching the same conclusion is using the monotone admissibility of 
    hierarchical clustering methods such as single linkage or complete
    linkage
    \cite{dugad1998unsupervised}, 
    which allows monotonic transformation of the elements of the distance matrix without 
    altering the clustering.
\end{proof}

\begin{proof}[{\bf Proof of Proposition \ref{thm:positive_cov}}]
Instead of writing $g_{ab}^{(1)}$ to refer to locus $1$ specifically, 
we write $g_{ab} $ for the remainder of the proof.
First, we address the trivial case where $a=b$ or $c=d$. 
Assume that wlog $a=b$, then $g_{ab}= 0$ and
$e^{t g_{ab}} = 1$, which implies that 
\[
    \Cov(e^{tg_{aa}}) = \Cov(g_{aa}) = 0,
\]
which implies that the corresponding correlations will be zero too, validating the results 
immediately in this case. 

For the rest of the proof, assume that 
$a \neq b$ and $c \neq d$.
\par
There are four cases to consider regarding the choice of the two pairs.
\begin{itemize}
    \item {\bf Two leaves only}, where wlog $a=c$ and $b=d$. 
    \item {\bf Three leaves only}, where wlog $d=a$ and 
    $S_{ab} \leq S_{ac} = S_{bc}$. 
    \item {\bf Four leaves forming a cherry tree}, where 
    $S_{ab} \leq S_{cd} \leq S_{ad} = S_{bd} = S_{ac} = S_{bc}$. 
    \item {\bf Four leaves forming a comb tree}, where 
    $S_{ab} \leq S_{ac} = S_{bc} \leq S_{ad} = S_{bd} = S_{cd}$.
\end{itemize}
These are the only cases to be considered; if $a,b,c,d$ do not satisfy them, they can always be 
permuted so that at least one of the conditions above are satisfied.

Fig. \ref{fig:all_tree_shapes} illustrates those four cases.
\begin{figure}[!h]
    \centering
    \includegraphics[width=\linewidth]{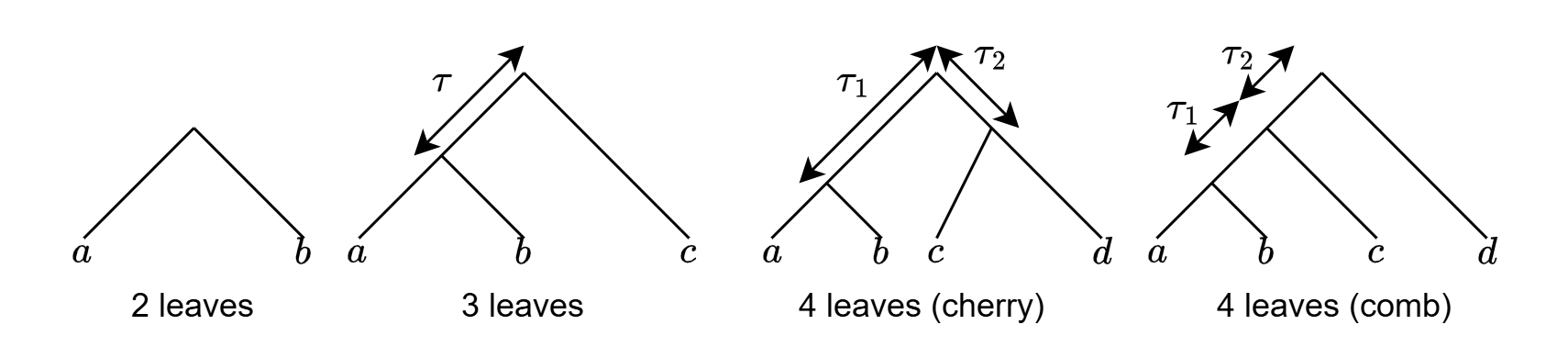}
    \caption{The four possible shapes of tree containing 
    $a,b,c,d \in L$ with $a\neq b, c\neq d$.}
    \label{fig:all_tree_shapes}
\end{figure}

{\bf Two leaf tree} \, For the case where $(a,b) = (c,d)$, the covariance expressions
becomes variances, which are positive. Under the MSC 
$g_{ab} = S_{ab} + 2 E_1$, where $E_1 \sim \Exp(1)$, and so 
\begin{align}
    \label{eq:var_gab}
    \Cov(g_{ab},g_{ab}) = \Var(g_{ab}) & = 4,\\
    \Cov(e^{t g_{ab} },e^{t g_{ab} }) = 
    \Var(e^{t g_{ab} }) &= \mathbb{E}\left( e^{2t g_{ab}} \right) -
    \left( \mathbb{E}( e^{t g_{ab}}) \right)^2
    \nonumber \\&= 
    e^{2tS_{ab}} \mathbb{E}\left( e^{2t E_1 } \right) 
    - e^{2tS_{ab}} \mathbb{E}\left( e^{t E_1 } \right)^2
    \nonumber \\&= 
    \frac{4t^2e^{2tS_{ab}}}{(1-4t)(1-2t)^2},
    \label{eq:var_etgab}
\end{align}
where for the last result we use that 
the moment generating function (MGF) of $\Exp(1)$ is 
$M(t) = \mathbb{E}\left( e^{t E_1 } \right) = 1/(1-t)$ 
and $\mathbb{E}\left( e^{2t E_1 } \right)  = 1/(1-2t)$.
\\ 
\par 
For the remaining three cases, we consider the variable
\[
E_{vw,xy} = \frac{ \left( g_{vw} + g_{xy} - S_{vw} - S_{xy} \right)}{2},
\]
where \( v, w, x, y \in L \), and its moment-generating function (MGF)
\[
M_{vw,xy}(t) = \mathbb{E}\left( \exp\left( t E_{vw,xy} \right) \right).
\]
The moment‐generating function allows access to the covariances of interest, as the two auxiliary results below (Equations~\eqref{eq:second_der},\eqref{eq:cov_mgf}) show. We now show these results. \\
\par 
For each pair \((i,j)\), define
\[
Y_{ij} := \frac{g_{ij}-S_{ij}}{2} \sim \mathrm{Exp}(1),
\]
with
\(\mathbb{E}[Y_{ij}]=1\), \(\mathrm{Var}(Y_{ij})=1\) and hence \(\mathbb{E}[Y_{ij}^2]=2\).
Writing
\[
E_{vw,xy} = Y_{vw}+Y_{xy},
\]
we have
\[
\frac{\partial^2M_{vw,xy}}{\partial t^2}(0)
= \mathbb{E}\bigl[(Y_{vw}+Y_{xy})^2\bigr]
= \mathbb{E}[Y_{vw}^2] + \mathbb{E}[Y_{xy}^2] + 2\,\mathbb{E}[Y_{vw}Y_{xy}].
\]
But,
\[
\mathbb{E}[Y_{vw}Y_{xy}]
= \Cov(Y_{vw},Y_{xy}) + \mathbb{E}[Y_{vw}]\;\mathbb{E}[Y_{xy}]
= \Cov(Y_{vw},Y_{xy}) + 1.
\]
Therefore
\[
\frac{\partial^2M_{vw,xy}}{\partial t^2}(0)
= 2 + 2 + 2\bigl(\Cov(Y_{vw},Y_{xy}) + 1\bigr)
= 6 + 2\,\Cov(Y_{vw},Y_{xy}).
\]
Noting that \(\Cov(Y_{vw},Y_{xy})=\tfrac14\,\Cov(g_{vw},g_{xy})\) recovers
\begin{equation}
\label{eq:second_der}
\frac{\partial^2M_{vw,xy}}{\partial t^2}(0)
=6+\tfrac12\,\Cov\bigl(g_{vw},g_{xy}\bigr).
\end{equation}
Similarly, one shows that
\begin{equation}
\label{eq:cov_mgf}
\Cov\bigl(e^{t g_{vw}},e^{t g_{xy}}\bigr)
= \Bigl(M_{vw,xy}(2t)-\tfrac1{(1-2t)^2}\Bigr)\,e^{t(S_{vw}+S_{xy})}.
\end{equation}
Indeed, since \(g_{ij}=S_{ij}+2Y_{ij}\) with \(Y_{ij}\sim\Exp(1)\),
\[
\mathbb{E}[e^{t g_{vw}}e^{t g_{xy}}]
= e^{t(S_{vw}+S_{xy})}\,\mathbb{E}[e^{2t(Y_{vw}+Y_{xy})}]
= e^{t(S_{vw}+S_{xy})}\,M_{vw,xy}(2t),
\]
and 
\[
\mathbb{E}[e^{t g_{ij}}]
= e^{tS_{ij}}\,(1-2t)^{-1},
\]
where $1/(1-2t)^2$ is the MGF of $2\Exp(1)$, 
so 
\(\mathbb{E}[e^{t g_{vw}}]\,
 \mathbb{E}[e^{t g_{xy}}]
= e^{t(S_{vw}+S_{xy})}(1-2t)^{-2}\).
Subtracting gives \eqref{eq:cov_mgf}. \\
\par
{\bf Three leaf tree}
Consider the case of a three-leaf tree where $S_{ab} \leq S_{ac} = S_{bc}$, 
where $\tau = \frac{S_{ac} - S_{ab}}{2}$ as shown in the second leftmost tree in 
Fig. \ref{fig:all_tree_shapes}. 
Define the joint random variable $(T,{\bf p})$, where 
$T$ is the first time that a pair coalesces and ${\bf p}$ is that pair.
In other words, 
\begin{align*}
    T &= \min_{x,y \in \binom{\{a,b,c\}}{2}} \{g_{xy}\}, \\
    {\bf p} &= \argmin_{(x,y) \in \binom{\{a,b,c\}}{2}} \{g_{xy}\}.
\end{align*}
Since the first speciation event happens at time $S_{ab}/2$, 
we require that $T \geq S_{ab}/2$. If $T \in (S_{ab}/2, S_{ac}/2)$, then ${\bf p}=(a,b)$.
The joint distribution function of $(T,{\bf p})$ is 
\begin{equation*}
    f_{T,{\bf p}}(T,{\bf p}) = \begin{cases}
        \exp\left(-\left(T-\frac{S_{ab}}{2}\right)\right), & 
        {\rm if} \,\, T \in \left( \frac{S_{ab}}{2}, \frac{S_{ac}}{2} \right), 
        {\bf p} = (a,b) \\
        \exp(-\tau) \exp\left( - 3 \left( T-\frac{S_{ac}}{2} \right) \right), & 
        {\rm if} \,\, T \geq \frac{S_{ac}}{2}, \, \forall {\bf p} \in 
        {\{a,b,c\} \choose 2}, \\ 
        0, & {\rm otherwise},
    \end{cases}
\end{equation*}
where for the remainder of this proof $\binom{\mathcal{A}}{2}$ is the set containing pairs of distinct 
elements of the set $\mathcal{A}$. 

Note that $T-\frac{S_{ac}}{2}|T>\frac{S_{ac}}{2} \sim \Exp(3)$, but since there are 
three choices of pairs, the pdf of $\Exp(3)$ is divided by 3 for each pair in the joint 
pdf above. Moreover, the conditioning 
probability $\mathbb{P}(T>\frac{S_{ac}}{2}) = \exp(-\tau)$ is included.
\par 
Conditional on $(T,{\bf p})$,
\begin{equation*}
    E_{ab,ac} |(T,{\bf p}) \overset{d}{=} \begin{cases}
        \left( T- \frac{S_{ab}}{2} \right) + \Exp(1), & 
        {\rm if} \,\, T \in \left( \frac{S_{ab}}{2}, \frac{S_{ac}}{2} \right), 
        {\bf p} = (a,b) \\
        2\left(T-\frac{S_{ac}}{2} \right) + \tau + 2 \Exp(1), & 
        {\rm if} \,\, T \geq \frac{S_{ac}}{2}, 
        {\bf p} = (b,c) \\
        2\left(T-\frac{S_{ac}}{2} \right) + \tau + \Exp(1), & 
        {\rm if} \,\, T \geq \frac{S_{ac}}{2}, 
        {\bf p} = (a,b) \,{\rm or}\, (a,c) \\
    \end{cases}
\end{equation*}
and so 
\begin{equation*}
    \mathbb{E}\left(e^{t E_{ab,ac}}|(T,{\bf p})\right) = \begin{cases}
        \frac{\exp\left(t \left(T- \frac{S_{ab}}{2}\right)\right)}{1-t}, & 
        {\rm if} \,\, T \in \left( \frac{S_{ab}}{2}, \frac{S_{ac}}{2} \right), 
        {\bf p} = (a,b) \\
        \frac{\exp{\left(2t \left( T -  \frac{S_{ac}}{2} \right) + t \tau \right)}}{1-2t}, & 
        {\rm if} \,\, T \geq \frac{S_{ac}}{2}, 
        {\bf p} = (b,c) \\
        \frac{\exp{\left(2t \left( T -  \frac{S_{ac}}{2} \right) + t \tau \right)}}{1-t}, & 
        {\rm if} \,\, T \geq \frac{S_{ac}}{2}, 
        {\bf p} = (a,b) \,{\rm or}\, (a,c) \\
    \end{cases}
\end{equation*}
Finally, combining those two results gives us the MGF, 
\begin{align*}
    M_{ab,ac}(t) &= \mathbb{E}_T \left( 
        \mathbb{E}( \exp(t E_{ab,ac} ) | T 
    \right)\\ &=
    \int_{\frac{S_{ab}}{2}}^{\infty} \sum_{{\bf p} \in \{(a,b),(b,c),(a,c)\}}
    f_{T,{\bf p}}(T,{\bf p}) \mathbb{E}\left(e^{t E_{ab,ac}|(T,{\bf p})} \right) \, dT
    \\
    & = \int_{S_{ab}/2}^{S_{ac}/2} e^{ -\left(T- \frac{S_{ab}}{2}\right)} 
    \frac{e^{t \left(T- \frac{S_{ab}}{2}\right)}}{1-t}
    \, dT  \\ & \,\,\,\,\,\, + 
    \int_{S_{ac}/2}^{+\infty} e^{-\tau}
    e^{\left( - 3 \left( T-\frac{S_{ac}}{2} \right) \right)} 
    \frac{e^{\left(2t \left( T -  \frac{S_{ac}}{2} \right) + t \tau \right)}}{1-2t}
    \, dT  \,\,\,\,\,\,\,\,\,\,\,\, ({\rm for } \,\, {\bf p} = (b,c))
    \\ & \,\,\,\,\,\, + 
    \int_{S_{ac}/2}^{+\infty} 2e^{-\tau}
    e^{\left( - 3 \left( T-\frac{S_{ac}}{2} \right) \right)} 
    \frac{e^{\left(2t \left( T -  \frac{S_{ac}}{2} \right) + t \tau \right)}}{1-t}
    \, dT  \,\,\,\,\,\,\,\,\, ({\rm for } \,\, {\bf p} \neq (b,c)) \\
    & = 
    \frac{1}{(1-t)^2} + e^{(t-1)\tau} \frac{t^2}{(1-t)^2(1-2t)(3-2t) }.
\end{align*}

Taking the second derivative of this expression with respect to $t$ at $t=0$ yields 
\begin{equation*}
    M_{ab,cd}''(0) = 6 + \frac{2}{3} e^{-\tau},
\end{equation*}
and so,
using equations 
\eqref{eq:second_der}, and \eqref{eq:cov_mgf}, it follows that 
\begin{align}
    \Cov(g_{ab},g_{ac}) &= \Cov(g_{ab},g_{bc})  = \frac{4}{3} e^{-\tau} \nonumber \\
    \Cov(e^{tg_{ab}},e^{tg_{ac}}) &= \Cov(e^{tg_{ab}},e^{tg_{bc}})
    =  e^{-\tau + 2tS_{ac} } \frac{4t^2}{(1-2t)^2 (1-4t) (3-4t)} \nonumber 
\end{align}
which are all positive. Note that the reason these results hold for the pairs $(a,b)$ and $(b,c)$ 
is that $M_{ab,bc}(t) = M_{ab,ac}(t)$, since $a$ and $b$ are interchangeable.
To get the correlations, we need the expression of the variances from Equations~\eqref{eq:var_gab}, \eqref{eq:var_etgab}, which yield 
\begin{align}
    \Cor(g_{ab},g_{ac}) &= \Cor(g_{ab},g_{bc})  = \frac{1}{3} e^{-\tau} \\
    \Cor(e^{tg_{ab}},e^{tg_{ac}}) &= \Cor(e^{tg_{ab}},e^{tg_{bc}})
    =   \frac{e^{(2t-1)\tau}}{3-4t},
\end{align}
\par 
We also need to examine $M_{ac,bc}(t)$ which is distinct from the other two cases.
Here, 
\begin{equation*}
    E_{ac,bc} |(T,{\bf p}) \overset{d}{=} \begin{cases}
        2 \Exp(1), & 
        {\rm if} \,\, T \in \left( \frac{S_{ab}}{2}, \frac{S_{ac}}{2} \right), 
        {\bf p} = (a,b) \\
        2\left(T-\frac{S_{ac}}{2} \right) + 2 \Exp(1), & 
        {\rm if} \,\, T \geq \frac{S_{ac}}{2}, 
        {\bf p} = (a,b) \\
        2\left(T-\frac{S_{ac}}{2} \right) + \Exp(1), & 
        {\rm if} \,\, T \geq \frac{S_{ac}}{2}, 
        {\bf p} \neq (a,b). \\
    \end{cases}
\end{equation*}
Following the same approach as before, 
\begin{align*}
    M_{ac,bc}(t) &=  \int_{S_{ab}/2}^{S_{ac}/2} e^{ -\left(T- \frac{S_{ab}}{2}\right)} 
    \frac{1}{1-2t}
    \, dT  \\ &+ 
    \int_{S_{ac}/2}^{+\infty} e^{-\tau}
    e^{\left( - 3 \left( T-\frac{S_{ac}}{2} \right) \right)} 
    \frac{e^{2t \left( T -  \frac{S_{ac}}{2} \right)}}{1-2t}
    \, dT  \,\,\,\,\, ({\rm for } \,\, {\bf p} = (a,b))
    \\ &+ 
    2\int_{S_{ac}/2}^{+\infty} e^{-\tau}
    e^{\left( - 3 \left( T-\frac{S_{ac}}{2} \right) \right)} 
    \frac{e^{2t \left( T -  \frac{S_{ac}}{2} \right)}}{1-t}
    \, dT  \,\,\,\,\, ({\rm for } \,\, {\bf p} \neq (a,b)) \\
    & = 
    \frac{1}{1-2t} - e^{-\tau} \frac{2t^2}{(1-t)(1-2t)(3-2t) }.
\end{align*}
Hence, it follows that 
\begin{align}
    \Cov(g_{ac},g_{bc}) &= 4 - \frac{8}{3} e^{-\tau} \geq 0 \\
    \Cov(e^{tg_{ac}},e^{tg_{bc}}) &= \frac{4t^2}{(1-2t) (1-4t)}
    e^{2tS_{ac}} \left( \frac{1}{1-2t}- e^{-\tau}
        \frac{2}{3-4t} 
    \right)\\
    & \geq
    e^{2tS_{ac}}  \frac{4t^2}{(1-2t)^2 (1-4t)(3-4t)} \geq 0
    \nonumber
    \\
    \Cor(g_{ac},g_{bc}) &= 1- \frac{2}{3} e^{-\tau} \\
    \Cor(e^{tg_{ab}},e^{tg_{ac}}) &
    =   1-  \frac{2(1-2t)}{3-4t}e^{-\tau}.
\end{align}
{\bf Cherry tree}
We continue by considering the cherry tree as shown in Figure~\ref{fig:all_tree_shapes}, with 
$S_{ab} \leq S_{cd} \leq S_{ad} = S_{bd} = S_{ac} = S_{bc} = \Delta$, 
where $\Delta$ is the diameter of the subtree containing $a,b,c,d$. We use this for notational convenience 
for the remainder of the proof and should not be confused with the diameter of the species tree contatining 
all leaves. We also defined,
$\tau_1 = \frac{\Delta-S_{ab}}{2}$ and 
$\tau_2 = \frac{\Delta-S_{cd}}{2} \leq \tau_1$. 
Once again, define the joint random variable $(T,{\bf p})$ as before. For the cherry tree case,
it has density function
\begin{equation*}
    f_{T,{\bf p}}(T,{\bf p}) = \begin{cases} 
        \exp\left(-\left(T-\frac{S_{ab}}{2}\right)\right), & 
        {\rm if} \,\, T \in \left( \frac{S_{ab}}{2}, \frac{S_{ac}}{2} \right), 
        {\bf p} = (a,b) \\
        \exp(- (\tau_1-\tau_2))
        \exp\left(-2\left(T-\frac{S_{ac}}{2}\right)\right), & 
        {\rm if} \,\, T \in \left( \frac{S_{ac}}{2}, \frac{\Delta}{2} \right), 
        {\bf p} = (a,b) \,{\rm or }\, (c,d) \\
        \exp(-\tau_1 - \tau_2)) 
        \exp\left( - 6 \left( T-\frac{S_{ac}}{2} \right) \right), & 
        {\rm if} \,\, T \geq \frac{\Delta}{2} \, \forall {\bf p} 
        \in {\{a,b,c,d\} \choose 2} \\
        0, & 
        {\rm otherwise.}
    \end{cases}
\end{equation*}
Conditional on $(T,{\bf p})$ and the species tree $S$,
\begin{equation*}
    E_{ab, cd} |(T,{\bf p}), S \overset{d}{=} \begin{cases}
        \left(T-\frac{S_{ab}}{2}\right) + \Exp(1), & 
        {\rm if} \,\, T \in \left( \frac{S_{ab}}{2}, \frac{S_{cd}}{2} \right), 
        {\bf p} = (a,b) \\
        2\left(T-\frac{S_{cd}}{2} \right) +
        \tau_1 - \tau_2+ 
        \Exp(1), & 
        {\rm if} \,\, T \in (\frac{S_{cd}}{2}, \frac{\Delta}{2}), 
        {\bf p} = (a,b), (c,d) \\
        2\left(T-\frac{\Delta}{2} \right) + \tau_1 + \tau_2 + \Exp(1), & 
        {\rm if} \,\, T \geq \frac{\Delta}{2}, 
        {\bf p} = (a,b), (c,d) \\
        2\left(T-\frac{\Delta}{2} \right) + \tau_1 + \tau_2 + 
        E_{xy,xz}|S=* , & 
        {\rm if} \,\, T \geq \frac{\Delta}{2}, 
        {\bf p} \neq (a,b), (c,d).
    \end{cases}
\end{equation*}
Note the last case where the first pair to coalesce is neither $(a,b)$ nor 
$(a,c)$. Let's assume without loss of generality that the first pair was $(a,c)$. 
Since species $a$ and $c$ have coalesced, we can treat them as one leaf $a=c$ in a 
gene tree of $3$ leaves $a=c, b,d$ that has not been resolved yet. This gene tree is 
generated by the Multispecies coalescent model assuming that the species tree is the 
star tree, denoted $S=*$, since all three species can coalesce after $T\geq \Delta/2$. 
Therefore, 
\[
    E_{ab, cd} |(T,{\bf p}) \overset{d}{=} 2(T-\Delta/2) + \tau_1 + \tau_2 + E_{ab,ad}|S=*.
\]
We have already computed the MGF of $E$ for three-leaf trees. When $S=*$, the internal 
branch length vanishes i.e. $\tau = 0$.
Hence, 
\[
    \mathbb{E}\left(e^{t (E_{xy,xz}|S=* )}\right) = 
    \frac{1}{(1-t)^2} + \frac{t^2}{(1-t)^2(1-2t)(3-2t) } 
    = \frac{3-5t}{(1-t)(1-2t)(3-2t)}, {\rm and}
\]

\begin{equation*}
    \mathbb{E}\left(e^{t E_{ab,cd}}|(T,{\bf p})\right) = \begin{cases}
        \frac{\exp\left(t \left(T- \frac{S_{ab}}{2}\right)\right)}{1-t},& 
        {\rm if} \,\, T \in \left( \frac{S_{ab}}{2}, \frac{S_{cd}}{2} \right), 
        {\bf p} = (a,b) \\
        \frac{\exp{\left(2t \left( T -  \frac{S_{cd}}{2} \right) + t (\tau_1 - \tau_2 )
        \right)}}{1-t}, & 
        {\rm if} \,\, T \in \left( \frac{S_{ac}}{2}, \frac{\Delta}{2} \right), 
        {\bf p} = (a,b) \,{\rm or }\, (c,d) \\
        \frac{\exp{\left(2t \left( T -  \frac{\Delta}{2} \right) + 
        t (\tau_1+\tau_2) \right)}}{1-t}, & 
        {\rm if} \,\, T \geq \frac{\Delta}{2}, 
        {\bf p} = (a,b) \,{\rm or }\, (c,d) \\ 
        \frac{\exp{\left(2t \left( T -  \frac{\Delta}{2} \right) + 
        t (\tau_1+\tau_2) \right)} (3-5t) }{(1-t)(1-2t)(3-2t)}, & 
        {\rm if} \,\, T \geq \frac{\Delta}{2}, 
        {\bf p} \neq (a,b) \,{\rm or }\, (c,d) .
    \end{cases}
\end{equation*}
Hence, 
\begin{align*}
    M_{ab,cd}(t) &=  \int_{S_{ab}/2}^{S_{ac}/2} e^{ -\left(T- \frac{S_{ab}}{2}\right)} 
    \frac{e^{ t\left(T-\frac{S_{ab}}{2} \right) }}{1-t}
    \, dT  \\ &+ 2 
    \int_{S_{ac}/2}^{\Delta/2} e^{-(\tau_1 - \tau_2 )} 
    e^{\left( - 2 \left( T-\frac{S_{ac}}{2} \right) \right)} 
    \frac{e^{2t \left( T -  \frac{S_{ac}}{2} \right)} 
    e^{ t(\tau_1 - \tau_2) }
    }{1-2t}
    \, dT \\ &+ 
    2\int_{\Delta/2}^{+\infty} e^{-\tau_1 - \tau_2}
    e^{\left( - 6 \left( T-\frac{\Delta}{2} \right) \right)} 
    \frac{e^{2t \left( T -  \frac{\Delta}{2} \right)}e^{ t(\tau_1 + \tau_2) }}{1-t}
    \, dT  \,\,\,\,\,\,\,\,\,\,\,\,\,\,\,\,\,\,\,\,\,\,\,\,\,\,
    ({\rm for } \,\, {\bf p} = (a,b), (c,d))
    \\ &+ 
    4\int_{\Delta/2}^{+\infty} e^{-\tau_1 - \tau_2}
    e^{\left( - 6 \left( T-\frac{\Delta}{2} \right) \right)} 
    \frac{e^{2t \left( T -  \frac{\Delta}{2} \right)}e^{ t(\tau_1 + \tau_2)}(3-5t)}
    {(1-t)(1-2t)(3-2t)}
    \, dT  \,\,\,\,\,\, ({\rm for } \,\, {\bf p} \neq (a,b), (c,d) ) \\
    & = 
    \frac{1}{(1-t)^2} + e^{-\tau_1 - \tau_2} \frac{2t^2}{(1-t)^2(1-2t)(3-2t)(3-t) }.
\end{align*}
From this we conclude that 
\begin{align}
    \Cov(g_{ab},g_{cd}) &= \frac{8}{9} e^{-(\tau_1+\tau_2)} \geq 0 \\
    \Cov(e^{tg_{ab}},e^{tg_{cd}}) &=
    e^{t(S_{ab}+S_{cd}) - (\tau_1 + \tau_2)} \frac{8t^2}{(1-2t)^2(1-4t)(3-4t)(3-2t) }
    \\
    \Cor(g_{ac},g_{bc}) &= \frac{2}{9} e^{-(\tau_1+\tau_2)} \\
    \Cor(e^{tg_{ab}},e^{tg_{ac}}) &
    =  e^{-(\tau_1+\tau_2)} \frac{2}{(3-4t)(3-2t)}.
\end{align}
A different approach is used for finding the MGF of $E_{ac,bd}$. Define the random 
variable $N \in \{0,1,2\}$ to be the number of pairs that coalesce before time $\Delta/2$. 
The only pairs that could coalesce in that time interval are ${\bf p} = (a,b)$ or $(c,d)$ 
and the probabilities of each one of these pairs coalescing is independent of the other.
Therefore,
\begin{align*}
    \mathbb{P}(N=0) &= e^{-\tau_1 - \tau_2}, \\
    \mathbb{P}(N=1) &= (1-e^{-\tau_1})  e^{-\tau_2} + (1-e^{-\tau_2})  e^{-\tau_1}, \\
    \mathbb{P}(N=2) &= (1-e^{-\tau_1})  (1-e^{-\tau_2}).
\end{align*}
If $N=0$, then $E_{ac,bd} = E_{xy,zw} | S=*$ i.e. the MGF of $E_{ac,bd}$ will be 
the same as $M_{ab,cd}(t)$ with $\tau_1 = \tau_2=0$ or
\begin{equation*}
    \mathbb{E}\left(e^{tE_{ac,bd}}|N=0\right) = 
    \frac{1}{(1-t)^2} +  \frac{2t^2}{(1-t)^2(1-2t)(3-2t)(3-t) }.
\end{equation*}
If $N=1$, either $(a,b)$ or $(c,d)$ have coalesced before time $\Delta/2$
but not both, and so 
at time $\Delta/2$ there will be three nodes that have not coalesced yet.
Consequently, $E_{ac,bd} = E_{xy,xw} | S=*$ i.e. the MGF of $E_{ac,bd}$ will be 
the same as $M_{ab,ac}(t)$ with $\tau = 0$ or
\begin{equation*}
    \mathbb{E}\left(e^{tE_{ac,bd}}|N=1\right) = 
    \frac{1}{(1-t)^2} + \frac{t^2}{(1-t)^2(1-2t)(3-2t)}.
\end{equation*}
For $N=2$, both $(a,b)$ and $(c,d)$ have coalesced, $E_{ac,bd} = 2\Exp(1)$, 
and thus
\begin{equation*}
    \mathbb{E}\left(e^{tE_{ac,bd}}|N=2\right) = 
    \frac{1}{1-2t} = \frac{1}{(1-t)^2} + \frac{t^2}{(1-t)^2 (1-2t)} .
\end{equation*}
Combining those three results, 
\begin{align*}
    M_{ac,bd}(t) &= \mathbb{E}_N(\mathbb{E}(e^{tE_{ac,bd}}|N) ) 
    = \sum_{n=0}^{2} \mathbb{E}(e^{tE_{ac,bd}}|N=n) \mathbb{P}(N=n)\\&=
    \frac{1}{(1-t)^2} + 
    \frac{2t^2 e^{-\tau_1-\tau_2}}{(1-t)^2(1-2t)(3-2t)(3-t) } \\&+ 
    \frac{t^2 \left((1-e^{-\tau_1})  e^{-\tau_2} + (1-e^{-\tau_2})  e^{-\tau_1}\right)}
    {(1-t)^2(1-2t)(3-2t)} + \frac{t^2 (1-e^{-\tau_1})  (1-e^{-\tau_2})}{(1-t)^2 (1-2t)}.
\end{align*}
The same argument can be made for the pair $(a,d)$ and $(b,c)$, since $a,b$ are 
interchangeable. We conclude that 
\begin{align}
    \Cor(g_{ad},g_{bc}) = \Cor(g_{ac},g_{bd}) &= 1 - \frac{2}{3} \left(e^{-\tau_1} 
    + e^{-\tau_2} \right) + \frac{5}{9} e^{-(\tau_1+\tau_2)} 
    \label{eq:cor_3_steac}\\
    \Cor(e^{tg_{ad}},e^{tg_{bc}}) = \Cor(e^{tg_{ac}},e^{tg_{bd}}) &
    =  1
    - \frac{2(1-2t)}{3-4t} (e^{-\tau_1} + e^{-\tau_2}) 
    \nonumber
    \\ & +
    e^{-(\tau_1+\tau_2)} \frac{(5-4t)(1-2t)}{(3-4t) (3-2t)}.
    \label{eq:cor_3_metal}
\end{align}
We now need to prove that the correlations in Equation~\eqref{eq:cor_3_steac} and \eqref{eq:cor_3_metal}
are non-negative and that the correlation of $\Cor(e^{tg_{ac}},e^{tg_{bd}})$ is an increasing 
function of $t \in (-\infty, 0]$. Note that since,
\begin{align*}
    \lim_{t \to - \infty } \Cor(e^{tg_{ad}},e^{tg_{bc}}) &= 
    (1-e^{-\tau_1}) ( 1 - e^{-\tau_2} ) \geq 0, \, {\rm and} \\ 
    \lim_{t \to 0 } \Cor(e^{tg_{ad}},e^{tg_{bc}}) &= \Cor(g_{ad},g_{bc}),
\end{align*}
proving the monotonicity of $\Cor(e^{tg_{ac}},e^{tg_{bd}})$ on $(-\infty, 0]$ is sufficient for 
the positivity of \eqref{eq:cor_3_steac} and \eqref{eq:cor_3_metal}.\\
To do that, we compute the derivative,
\begin{align*}
    \frac{d(\Cor(e^{tg_{ac}},e^{tg_{bd}}))}{dt} &= \frac{4}{(3-4t)^2} \left(
    (e^{-\tau_1} + e^{-\tau_2}) - \frac{8t^2 - 16t + 9}{(3-2t)^2} e^{-\tau_1-\tau_2}
    \right) \\& \geq
    \frac{4}{(3-4t)^2} \left(
        (e^{-\tau_1} + e^{-\tau_2}) - 2 e^{-\tau_1-\tau_2}
    \right) \geq 0 .
\end{align*}
The first inequality comes from the observation that 
\begin{equation*}
    \frac{8t^2 - 16t + 9}{(3-2t)^2} \geq 2 ~~\Leftrightarrow ~~ t < \frac{9}{8} .
\end{equation*}
The second inequality stems from the fact that 
the function \[g:[0,1]^2 \to \mathbb{R},\, g(x,y)=x+y-2xy\] has non-negative range, which can 
be proved by rewriting $g$ as 
\[
    g(x,y) = 1/2 - \left(x-\frac{1}{2} \right) \left(2y-1 \right),
\]
which is clearly minimized at $x,y = (1,1)$, and so $\min_{(x,y) \in [0,1]^2} g(x,y) = g(1,1) = 0$.
This concludes the proof for the case of the cherry tree.
\par
{\bf Comb tree} Finally, we conclude this proof with the last tree shape of Fig.
\ref{fig:all_tree_shapes}, the comb tree, where $S_{ab} 
\leq S_{ac} = S_{bc} \leq S_{ad} = S_{bd} = S_{cd}= \Delta$
and $\tau_1 = \frac{S_{ac} - S_{ab}}{2}, \, \tau_2 = \frac{\Delta - S_{ac}}{2} $. \\
The joint density function of $(T,{\bf p})$ is
\begin{equation*}
    f_{T,{\bf p}}(T,{\bf p}) = \begin{cases} 
        \exp\left(-\left(T-\frac{S_{ab}}{2}\right)\right), & 
        {\rm if} \,\, T \in \left( \frac{S_{ab}}{2}, \frac{S_{ac}}{2} \right), 
        {\bf p} = (a,b) \\
        \exp(- \tau_1)
        \exp\left(-3\left(T-\frac{S_{ac}}{2}\right)\right), & 
        {\rm if} \,\, T \in \left( \frac{S_{ac}}{2}, \frac{\Delta}{2} \right), 
        \forall {\bf p} 
        \in {\{a,b,c\} \choose 2} \\ 
        \exp(-\tau_1 - 3\tau_2)) 
        \exp\left( - 6 \left( T-\frac{\Delta}{2} \right) \right), & 
        {\rm if} \,\, T \geq \frac{\Delta}{2} \, \forall {\bf p} 
        \in {\{a,b,c,d\} \choose 2} \\
        0, & 
        {\rm otherwise.}
    \end{cases}
\end{equation*} Furthermore,
\begin{equation*}
    E_{ab, cd} |(T,{\bf p}), S \overset{d}{=} \begin{cases}
        \left(T-\frac{S_{ab}}{2}\right) + \Exp(1), & 
        {\rm if} \,\, T \in \left( \frac{S_{ab}}{2}, \frac{S_{ac}}{2} \right), 
        {\bf p} = (a,b) \\
        \left(T-\frac{S_{ab}}{2}\right) + 
        \Exp(1), & 
        {\rm if} \,\, T \in (\frac{S_{ac}}{2}, \frac{\Delta}{2}), 
        {\bf p} = (a,b) \\
        \left(T-\frac{S_{ab}}{2}\right) + E_{xy,xz} | S=\mathcal{A}_{\Delta/2 - T}, & 
        {\rm if} \,\, T \in (\frac{S_{ac}}{2}, \frac{\Delta}{2}), 
        {\bf p} = (a,c), (b,c)\\
        2\left(T-\frac{\Delta}{2} \right) + \tau_1 + \tau_2 + 
        \Exp(1) , & 
        {\rm if} \,\, T \geq \frac{\Delta}{2}, 
        {\bf p} = (a,b), (c,d) \\ 
        2\left(T-\frac{\Delta}{2} \right) + \tau_1 + \tau_2 + 
        E_{xy,xz}|S=* , & 
        {\rm if} \,\, T \geq \frac{\Delta}{2}, 
        {\bf p} \neq (a,b), (c,d).
    \end{cases}
\end{equation*}
where the three-leaf tree $\mathcal{A}_{\tau}$, for $\tau>0$, 
is drawn in Fig. \ref{fig:three_leaf_tree}.

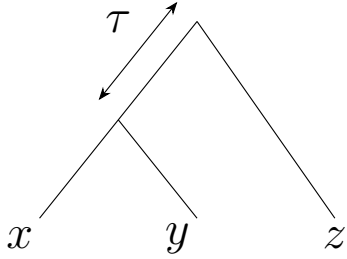
\begin{figure}[!h]
\centering
\resizebox{0.4\textwidth}{!}{
\begin{circuitikz}
\tikzstyle{every node}=[font=\LARGE]
\draw [short] (3.75,12.75) -- (5.75,15.25);
\draw [short] (5.75,15.25) -- (7.5,12.75);
\draw [short] (4.75,14) -- (5.75,12.75);
\node [font=\LARGE] at (3.5,12.5) {$x$};
\node [font=\LARGE] at (5.5,12.5) {$y$};
\node [font=\LARGE] at (7.5,12.5) {$z$};
\draw [<->, >=Stealth] (4.5,14.25) -- (5.5,15.5);
\node [font=\LARGE] at (4.75,15.25) {$\tau$};
\end{circuitikz}
}
\caption{Three-leaf tree $\mathcal{A}_{\tau}$ with internal branch length $\tau$. 
Note that $A_{0} = *$.}
\label{fig:three_leaf_tree}
\end{figure}

The MGF of $E_{xy,xz} | S=\mathcal{A}_{\tau}$ and $E_{xy,xz} | S=*$ has already been 
computed before (three-leaf tree) as
\begin{align*}
    M_{xy,xz|\mathcal{A}_{\tau}}(t) &=
    \frac{1}{(1-t)^2} + e^{(t-1)\tau} \frac{t^2}{(1-t)^2(1-2t)(3-2t) }, \\
    M_{xy,xz|*}(t) &= M_{xy,xz|\mathcal{A}_{0}}(t) = 
    \frac{3-5t}{(1-t)(1-2t)(3-2t)}.
\end{align*}

Hence, 
\begin{align*}
    M_{ab,cd}(t) &=  \int_{S_{ab}/2}^{S_{ac}/2} e^{ -\left(T- \frac{S_{ab}}{2}\right)} 
    \frac{e^{ t\left(T-\frac{S_{ab}}{2} \right) }}{1-t}
    \, dT  \\ &+ 
    \int_{S_{ac}/2}^{\Delta/2} e^{-\tau_1} 
    e^{\left( - 3 \left( T-\frac{S_{ac}}{2} \right) \right)} 
    \frac{e^{t \left( T -  \frac{S_{ac}}{2} \right)} 
    e^{ t\tau_1 }
    }{1-t}
    \, dT \,\,\,\,\, ({\rm for } \,\, {\bf p} = (a,b))\\
    &+ 
    \int_{S_{ac}/2}^{\Delta/2} e^{-\tau_1} 
    e^{\left( - 3 \left( T-\frac{S_{ac}}{2} \right) \right)} 
    e^{2t \left( T -  \frac{S_{ac}}{2} \right)} 
    e^{ t\tau_1 }
    M_{xy,xz| \mathcal{A}_{\Delta/2-T}}(t) 
    \, dT \,\,\,\,\, ({\rm for } \,\, {\bf p} = (a,c),(b,c))\\
    &+ 
    2\int_{\Delta/2}^{+\infty} e^{-\tau_1 - 3\tau_2}
    e^{\left( - 6 \left( T-\frac{\Delta}{2} \right) \right)} 
    \frac{e^{2t \left( T -  \frac{\Delta}{2} \right)}e^{ t(\tau_1 + \tau_2) }}{1-t}
    \, dT  \,\,\,\,\, ({\rm for } \,\, {\bf p} = (a,b), (c,d))
    \\ &+ 
    4\int_{\Delta/2}^{+\infty} e^{-\tau_1 - 3\tau_2}
    e^{\left( - 6 \left( T-\frac{\Delta}{2} \right) \right)} 
    e^{2t \left( T -  \frac{\Delta}{2} \right)}e^{ t(\tau_1 + \tau_2)}
    M_{xy,xz|*}(t)
    \, dT  \,\,\,\,\, ({\rm for } \,\, {\bf p} \neq (a,b), (c,d) ) \\
    & = 
    \frac{1}{(1-t)^2}+ e^{(t-1) (\tau_1+ \tau_2)}
    \frac{t^2 \left(
        (3-t) - (1-t) e^{-2 \tau_2}
    \right)}{(1-t)^2(1-2t)(3-2t)(3-t)},
\end{align*}
and so
\begin{align}
    \Cor(g_{ab},g_{cd}) &=\frac{1}{3} e^{-\tau_1-\tau_2} -
    \frac{1}{9} e^{-\tau_1 - 3\tau_2} \in \left(0, \frac{1}{3} \right), \\
    \Cor(e^{tg_{ab}},e^{tg_{cd}})&
    = \frac{e^{(2t-1) (\tau_1+\tau_2)}}{3-4t} \left(
        1 - \frac{1-2t}{3-2t} e^{-2\tau_2}
    \right).
\end{align}
Note that $\Cor(e^{tg_{ab}},e^{tg_{cd}})$ is an increasing function of $t$ on 
$(-\infty,0)$, since the first term is increasing and $- \frac{1-2t}{3-2t}$ is also 
increasing.

\begin{equation*}
    E_{ac, bd} |(T,{\bf p}), S \overset{d}{=} \begin{cases}
        E_{xy,xz} | S = \mathcal{A}_{\tau_2} , & 
        {\rm if} \,\, T \in \left( \frac{S_{ab}}{2}, \frac{S_{ac}}{2} \right), 
        {\bf p} = (a,b) \\
        E_{ab,cd}|(T,{\bf p}), S, &
        {\rm if} \,\, T \geq \frac{S_{ac}}{2}.
    \end{cases}
\end{equation*}
The last case follows from the fact that $a,b,c$ are interchangeable conditional on $(a,b)$
not coalescing before time $S_{ac}/2$. Moreover, note that 
$E_{ab,cd}|T > S_{ac}/2, S$ has MGF equal to the above $M_{ab,cd}(t)$, but with 
$\tau_1=0$ i.e. 
\begin{equation*}
    M_{ab,cd|\tau_1=0}(t) = \frac{1}{(1-t)^2}+ e^{(t-1) \tau_2}
    \frac{t^2 \left(
        (3-t) - (1-t) e^{-2 \tau_2}
    \right)}{(1-t)^2(1-2t)(3-2t)(3-t)} .
\end{equation*}
Hence, it follows that 
\begin{align*}
    M_{ac,bd}(t) &= \left(1-e^{-\tau_1}\right) 
    M_{xy,xz|\mathcal{A}_{\tau_2}}(t) + e^{-\tau_2} M_{ab,cd|\tau_1=0}(t) \\
    &= \frac{1}{(1-t)^2} + \frac{t^2}{(1-t)^2(1-2t)(3-2t)} e^{(t-1)\tau_2}
    \left(
        1 - e^{-2\tau_2-\tau_1} \frac{1-t}{3-t}
    \right) .
\end{align*}
Also note that $M_{ac,bd}(t) = M_{ad,bc}(t)$. It follows that the correlations are 
as follows
\begin{align}
    \Cor(g_{ac},g_{bd}) = \Cor(g_{ad},g_{bc}) &=\frac{1}{3} e^{-\tau_2} -
    \frac{1}{9} e^{-\tau_1 - 3\tau_2} \in \left(0, \frac{1}{3} \right), \\
    \Cor(e^{tg_{ac}},e^{tg_{bd}}) = \Cor(e^{tg_{ad}},e^{tg_{bc}})&
    = \frac{e^{(2t-1) \tau_2}}{3-4t} \left(
        1 - \frac{1-2t}{3-2t} e^{-2\tau_2- \tau_1}
    \right) .
\end{align}
Similarly, the correlation is an increasing function of $t$.
This concludes the proof for the case of the comb tree.
\end{proof}

\begin{proof}[{\bf Proof of Proposition \ref{thm:evol_cov}}]
    The covariance of interest can be rewritten as follows, using the properties of indicator random variables,
    \begin{align}
        {\rm Cov}( \mathbbm{1}( \chi_a \neq \chi_b ), \mathbbm{1}( \chi_c \neq \chi_d ) )  
        &= 
        {\rm Cov}( \mathbbm{1}( \chi_a = \chi_b ), \mathbbm{1}( \chi_c = \chi_d ) ) 
        \nonumber
        \\ &=
        \mathbb{P}(\chi_a = \chi_b, \chi_c = \chi_d) - 
        \mathbb{P}(\chi_a = \chi_b)
        \mathbb{P}(\chi_c = \chi_d).
        \label{eq:def_cov_bases}
    \end{align}
    Under the Jukes-Cantor evolutionary model, when there is a substitution, it is equally likely that the current base is substituted by any one of the four 
    bases A, C, T, G. Since the probability of no substitution  
    across a branch of time length $t$ is $\exp(-\mu t)$, the probability  that 
    a randomly selected base stays the same across the same branch is 
    $$\exp(-\mu t) + \frac{1}{4}(1-\exp(-\mu t)) = \frac{1}{4} +
    \frac{3}{4} \exp(-\mu t).$$
    It follows that,
    \begin{equation}
        \label{eq:prob_common_base}
        \mathbb{P}(\chi_i = \chi_j) = 
         \frac{3}{4} e^{-\mu g_{ij}} + \frac{1}{4}, \,
         \forall i,j \in L.
    \end{equation}
    The second term of Equation~\eqref{eq:def_cov_bases} can then be computed 
    through a product of terms computed from Equation~\eqref{eq:prob_common_base}.
    We now focus on the first term of Equation~\eqref{eq:def_cov_bases}.
    Let 
    $r$ and $q$ be the internal nodes connecting $a$ and $c$ and $b$ and $d$ respectively, such that the length of 
    the path from $q$ to $r$ is equal to $\delta_{ab,cd}$.
    Similarly to Equation~\eqref{eq:prob_common_base}, 
    \begin{equation}
    \label{eq:prob_common_base_delta}
        \mathbb{P}(\chi_r = \chi_q) = \frac{3}{4} e^{-\mu \delta_{ab,cd} } 
        + \frac{1}{4}. 
    \end{equation}
    
    We replace the states $A,C,G,T $ with numbers $0,1,2,3$ respectively. 
    Define the random variable $N_{v,w} \equiv |\chi_{v} - \chi_{w}| \mod 4$
    and note that 
    \begin{align*}
        \{\chi_r = \chi_q\} &= \{ N_{rq} \equiv 0 \mod 4 \}, \\
        \{\chi_a = \chi_b\} &= \{ N_{ab} \equiv 0 \mod 4 \}, \\
        \{\chi_c = \chi_d\} &= \{ N_{cd} \equiv 0 \mod 4 \},\quad {\rm where} \\ 
        N_{ab} & \equiv N_{ar} + N_{rq} + N_{qb} \mod 4, \\
        N_{cd} & \equiv N_{cr} + N_{rq} + N_{qd} \mod 4
    \end{align*}
    Moreover, $N_{ar}, N_{rq}, N_{qb}, N_{cr}, N_{qd}$ are all independent of each 
    other since their corresponding paths are disjoint. Hence, $N_{ar} + N_{qb}$ 
    is independent of $N_{cr} + N_{qd}$.
    For the remainder of the 
    proof $\mod 4$ will be dropped when $\equiv$ is used.\\
    \par 
    Conditioning on $N_{rq}$,
    \begin{align}
        \mathbb{P}(\chi_a = \chi_b, \chi_c = \chi_d)
        &=   \sum_{i=0}^3
        \mathbb{P}(\chi_a = \chi_b, \chi_c = \chi_d | N_{rq} \equiv i) \mathbb{P}(N_{rq} \equiv i) \nonumber \\
        &=  \sum_{i=0}^3
        \mathbb{P}(N_{ar} + N_{qb} \equiv -i, N_{cr} + N_{qd} \equiv -i) \mathbb{P}(N_{rq} \equiv i ) \nonumber \\
        &= \sum_{i=0}^3
        \mathbb{P}(N_{ar} + N_{qb} \equiv -i) \mathbb{P}(N_{cr} + N_{qd} \equiv -i) \mathbb{P}(N_{rq} \equiv i )  \nonumber \\ &=
        \mathbb{P}(N_{ar} + N_{qb} \equiv 0) \mathbb{P}(N_{cr} + N_{qd} \equiv 0) \mathbb{P}(\chi_r = \chi_q ) \nonumber \\  & \quad + 
        \frac{1}{9}
        \mathbb{P}(N_{ar} + N_{qb} \not \equiv 0) \mathbb{P}(N_{cr} + N_{qd} \not \equiv 0) \mathbb{P}(\chi_r \neq \chi_q ), \label{eq:first_term}
    \end{align}
    where the third equality holds because of the independence of $N_{ar} + N_{qb}$ 
    and $N_{cr} + N_{qd}$, and the fourth equality holds because all the terms in 
    the summand are equal and each probabilities
    involving sums of $N$ being $\equiv i$
    are a third of the corresponding probabilities with $\not \equiv 0$.\\
    \par 
    Now the only thing left is to find the distribution of 
    $N_{ar}+ N_{qb}$ and $N_{cr} + N_{qd}$. Without loss of generality, we will 
    focus on the former. The probability that there are no substitutions across
    the paths from $a$ to $r$ and $q$ to $b$ is $\exp(-\mu (g_{ab} - \delta_{ab,cd}))$,
    since the length of the sum of the paths is $g_{ab} - \delta_{ab,cd}$.
    If there has been no substitution along those paths, then $N_{ar}+ N_{qb} \equiv 0$. Otherwise, if there has been at least one substitution, 
    $N_{ar}+ N_{qb} \equiv i$, where $i=0,1,2,3$
    are all equiprobable. Therefore, 
    \begin{align}
        \mathbb{P}(N_{ar}+ N_{qb} \equiv 0) &= 
        \frac{1}{4} + \frac{3}{4} e^{-\mu (g_{ab} - \delta_{ab,cd})} 
        \label{eq:nar+nqb}\\
        \mathbb{P}(N_{cr} + N_{qd} \equiv 0) &= 
        \frac{1}{4} + \frac{3}{4} e^{-\mu (g_{cd} - \delta_{ab,cd})} 
        \label{eq:ncr+nqd}
    \end{align}
    Substituting Equations~\eqref{eq:nar+nqb}--\eqref{eq:ncr+nqd} and 
    Equation~\eqref{eq:prob_common_base_delta} into Equation~\eqref{eq:first_term}
    gives 
    \begin{align*}
        \mathbb{P}(\chi_a = \chi_b, \chi_c = \chi_d) &= 
        \frac{1}{16} \left(
            1 + 3e^{-\mu (g_{ab}+g_{cd} - 2\delta_{ab,cd})} + 3e^{-\mu g_{ab}}
            \right. \\ & \left. \quad
            +
            3 e^{-\mu g_{cd}} + 6e^{-\mu (g_{ab} + g_{cd} -\delta_{ab,cd}) }
        \right)
    \end{align*}
    Substituting the above to Equation~\eqref{eq:def_cov_bases}
    yields the desired result. 

\end{proof}

\begin{proof}[{\bf Proof of Proposition \ref{prop:total_covariance_decomposition}}]
    Using the law of total covariance, 
    the total covariance between pairs of METAL distance estimates 
    for a single loci is 
    \begin{align}
        &\Cov\left(\hat{p}_{ab}^{(1)}, \hat{p}_{cd}^{(1)} \right) 
        \nonumber \\
        &=
        \mathbb{E}_{G^{(1)}| \mathcal{S}}
        \left( \Cov\left(\hat{p}_{ab}^{(1)}, \hat{p}_{cd}^{(1)} | G^{(1)}\right) \right) +
        \Cov_{G^{(1)}| \mathcal{S}}
        \left(\mathbb{E}\left(\hat{p}_{ab}^{(1)}|G^{(1)} \right), 
        \mathbb{E}\left(\hat{p}_{cd}^{(1)}|G^{(1)} \right)
        \right)  
        \nonumber 
        \\ &= 
        \frac{3}{16 K } 
         \mathbb{E}_{G^{(1)}| \mathcal{S}} \left( 
            e^{-\mu \left(g_{ab}^{(i)} + g_{cd}^{(i)} \right)}
        \left(e^{2\mu \delta_{ab,cd}^{(i)}} + 2 e^{\mu \delta_{ab,cd}^{(i)}} - 3
        \right)
            \right)+ 
        \frac{9}{16} \Cov\left(e^{-\mu g_{ab}}, e^{-\mu g_{cd}}\right).
        \label{eq:cov_decomposition}
    \end{align}
    To derive the last line we use Proposition~\ref{thm:evol_cov} for the first term and 
    the Jukes-Cantor formula
    \begin{equation*}
        \mathbb{E}\left(\hat{p}_{xy}^{(1)}|G^{(1)} \right) = p^{(1)}_{xy | G^{(1)}} = 
        \frac{3}{4} \left( 1- e^{-\mu g_{xy}^{(1)} }\right) .
    \end{equation*}
    for the second term.
    Using the definitions in Equations~\eqref{eq:sigma_total_def},~\eqref{eq:sigma_coal}, \eqref{eq:sigma_evol},
    Equation~\eqref{eq:sigma_total_decomposition} 
    is equivalent to \eqref{eq:cov_decomposition}. \\
    \par
    It is easy to see that $\Sigma_{\rm coal}$ is positive semi-definite, as 
    a covariance matrix of the random vector 
    $\exp{\left(-\mu g_{\binom{L}{2}} \right)}$. For $\Sigma_{\rm sub}$,
    observe that  
    \[
        \Sigma_{\hat{p}_{\binom{L}{2}}^{(1)} | G^{(1)} } \coloneqq
        \Cov\left(\hat{p}_{ab}^{(1)}, \hat{p}_{cd}^{(1)} | G^{(1)}\right),
    \]
    is positive semidefinite i.e. for any vector ${\bf v} \in \mathbb{R}^{\binom{n}{2}}$, ${\bf v} \Sigma_{\hat{p}_{\binom{L}{2}}^{(1)} | G^{(1)} } {\bf v}^T \geq 0$
    and so 
    \[
        {\bf v} \Sigma_{\rm sub} {\bf v}^T = \mathbb{E}\left(
            {\bf v} \Sigma_{\hat{p}_{\binom{L}{2}}^{(1)} | G^{(1)} } v^T
        \right) \geq 0
    \]
    This proves that $\Sigma_{\rm sub}$ is a positive semi-definite matrix.
\end{proof}

\begin{proof}[{\bf Proof of Proposition \ref{prop:covariance_coal_sub_total}}]
Using the formulas from the Proof of Proposition~\ref{thm:positive_cov} for the coalescent 
covariance matrix and the formulas from Appendix~\ref{sec:cov_calculations} for the substitution covariance matrix.
Observe that for the star species tree, 
the three leaf covariances agree with each other for both 
the coalescent and the substitution matrix. So do the results for the four leaf covariances.
In other words, $(\Sigma_{\rm coal})_{ab,cd}$ and $(\Sigma_{\rm sub})_{ab,cd}$ only depend 
on the total number of leaves $|\{a,b,c,d\}| \in \{2,3,4\}$.
Also, note that $|\{a,b\} \cap \{c,d\}| = 4 - |\{a,b,c,d\}| \in \{0,1,2\}$ and so 
$\sigma^{(0)},\sigma^{(1)},\sigma^{(2)}$ now refer to 4-leaf, 3-leaf, and 2-leaf covariances
respectively.\\
\par
First, we derive the formulas of $\sigma_{\rm mode}^{(i)}$ for 
${\rm mode} \in \{ {\rm coal, sub} \}, i \in \{0,1,2\}$.
\begin{align}
    \sigma_{\rm coal}^{(0)}(\mu) & = 
    \frac{8\mu^2}{(1+4\mu)(1+2\mu)^2(3+4\mu)(3+2\mu)} e^{-2\mu\Delta}
    \label{eq:sigma_coal_0} \\
    \sigma_{\rm coal}^{(1)}(\mu) & = 
    \frac{4\mu^2}{(1+4\mu)(1+2\mu)^2(3+4\mu)} e^{-2\mu\Delta}
    = \frac{3+2\mu}{2}\sigma_{\rm coal}^{(0)}(\mu)
    \label{eq:sigma_coal_1} \\
    \sigma_{\rm coal}^{(2)}(\mu) & = 
    \frac{4\mu^2}{(1+4\mu)(1+2\mu)^2} e^{-2\mu\Delta} 
    = (3+4\mu)\sigma_{\rm coal}^{(1)}(\mu)
    \label{eq:sigma_coal_2} \\[1ex]
    \sigma_{\rm sub}^{(0)}(\mu) & = 
    \frac{e^{-2\mu\Delta}}{(1+2\mu)(3+2\mu)} - 
    \sigma_{\rm coal}^{(0)}(\mu) - \frac{e^{-2\mu\Delta}}{(1+2\mu)^2} \nonumber \\
    & = \frac{4\mu (8\mu^2+ 17\mu + 6)}{3(\mu+1)(2\mu+1)(2\mu+3)(4\mu+1)(4\mu+3)} e^{-2\mu\Delta}
    \label{eq:sigma_sub_0} \\
    \sigma_{\rm sub}^{(1)}(\mu) & = 
    \frac{2}{3} \frac{e^{-3\mu\Delta/2}}{(1+2\mu)(1+\mu)} + 
    \frac{1}{3} \frac{e^{-\mu\Delta}}{1+2\mu}
    - \sigma_{\rm coal}^{(1)}(\mu) - \frac{e^{-2\mu\Delta}}{(1+2\mu)^2}  \nonumber \\
    & \geq e^{-2\mu\Delta} \frac{2\mu(8\mu^2 + 17\mu + 6)}{3(\mu+1)(4\mu+1)(4\mu+3)(2\mu+1)} 
    = \frac{3+2\mu}{2}\sigma_{\rm sub}^{(0)}(\mu)
    \label{eq:sigma_sub_1} \\
    \sigma_{\rm sub}^{(2)}(\mu) & = 
    \frac{2}{3} \frac{e^{-\mu \Delta}}{1+2\mu} +
    \frac{1}{3} - \sigma_{\rm coal}^{(2)}(\mu) - \frac{e^{-2\mu\Delta}}{(1+2\mu)^2} 
    \geq \frac{8}{3} \frac{\mu(\mu+1)}{(4\mu+1)(2\mu+1)} e^{-2\mu\Delta}
    \label{eq:sigma_sub_2}
\end{align}

The inequalities are derived by using $e^{-i \mu \Delta} \leq e^{-2 \mu \Delta}$ for all $i \leq 2$.
From here, the asymptotic results as $\mu \to 0$ and $\mu \to \infty$ are easy to derive. 
It is also clear that $\sigma_{\rm coal}^{(i)}$ is decreasing with $i \in \{0,1,2\}$, and that
$\sigma_{\rm sub}^{(1)}(\mu) \geq \sigma_{\rm sub}^{(0)}(\mu)$. We now prove that 
$\sigma_{\rm sub}^{(2)}(\mu) \geq \sigma_{\rm sub}^{(1)}(\mu)$, 
\begin{align*}
    \sigma_{\rm sub}^{(2)}(\mu) - \sigma_{\rm sub}^{(1)}(\mu) & 
    \geq \frac{1}{3} e^{-3\mu\Delta/2} \frac{2\mu(\mu+2)}{(1+2\mu) (1+\mu)}
    - 2(1 + 2\mu)\sigma_{\rm coal}^{(1)}(\mu) \\
    & \geq e^{-2\mu \Delta} \frac{2 \left(16\mu^3 + 36\mu^2 + 23\mu + 6\right)}{3 ( \mu + 1 )( 2\mu + 1 )( 4\mu + 1 )( 4\mu + 3 )} >0.
\end{align*} 
This concludes the proof.
\end{proof}

\begin{proof}[{\bf Proof of Proposition \ref{prop:steac_infinite_variance}}]
    We need to compute the variance of $\hat{g}_{ab}$ using the law of total variance by conditioning on the gene tree branch length $g_{ab}$
    \begin{align*}
        \Var(\hat{g}_{ab}) &= \mathbb{E}(\Var(\Hat{g}_{ab})| g_{ab} ) + \Var\left(\mathbb{E}\left(
        \hat{g}_{ab} | g_{ab} \right) \right) \\ 
        &= \mathbb{E}\left(\frac{e^{2\mu \delta_{ab,cd}^{(i)}} + 2e^{\mu \delta_{ab,cd}^{(i)}} - 3}{3\mu^2 K} \right) +
        \Var\left(g_{ab} \right),
    \end{align*}
    where the variance and the expectation of $\hat{g}_{ab}  |g_{ab}$ is given by the 
    Delta method in Equation~\eqref{eq:steac_sigma}, which is assumed to hold in this Proposition.
    Since $g_{ab} = S_{ab} + 2 \Exp(1)$ and the variance of $\Exp(1)$ is $1$, the second 
    term is equal to 4. 
    For the first term, note that $\delta_{ab,ab} = g_{ab}$ and since the MGF of the exponential 
    distribution $\Exp(1)$ is $M(z) = 1/(1-z)$, it follows that
    \[ \mathbb{E}\bigl(e^{2\mu(S_{ab} + 2\Exp(1))} \bigr)= \frac{e^{\mu S_ab} }{1-4\mu}\]
    is finite if and only if $\mu < 1/4$. Clearly, the same applies to $\Var(\hat{g}_{ab})$.\\
    Finally, we will prove that if 
    $ \mathbb{P}\bigl(\hat{p}_{ab}^{(i)} \ge \tfrac{3}{4}\bigr)  = 0$,
    then the variance $\Var(\hat{g}_{ab})$ is finite. 
    Note that $\hat{p}_{ab}^{(j)} \in \{j/K: i \in \{0,1,\dots,K\} \}$, and so
    \[
        \left\{\hat{p}_{ab}^{(i)} < \tfrac{3}{4} \right\} = \left\{\hat{p}_{ab}^{(i)} \leq \frac{ 
        \lceil \tfrac{3}{4} K \rceil - 1}{K}   \coloneqq \frac{3}{4} - \epsilon \right\},
    \]
    where $\lceil x \rceil$ is the ceiling of $x$ and $\epsilon > 0$. Hence, 
    \begin{equation*}
        1 =  \mathbb{P}\bigl(\hat{p}_{ab}^{(i)} < \tfrac{3}{4}\bigr) = 
        \mathbb{P}\bigl(\hat{p}_{ab}^{(i)} < \tfrac{3}{4} - \epsilon \bigr) = 
        \mathbb{P}\bigl(\hat{g}_{ab}^{(i)} < M_{\epsilon} \bigr),
    \end{equation*}
    where $M_\epsilon > 0$ is a large constant.
    Consequently, if $\mathbb{P}\bigl(\hat{p}_{ab}^{(i)} \ge \tfrac{3}{4}\bigr) =0 $,
    then$\hat{g}_{ab} \in [0,M_{\epsilon}]$ and so $\Var(\hat{g}_{ab})$ is finite.
    The contrapositive, implies that when $\mu \geq 1/4$ and the variance is infinite, 
    the probability $p =\mathbb{P}\bigl(\hat{p}_{ab}^{(i)} \ge \tfrac{3}{4}\bigr) $ is positive.
    The remainder of the theorem follows from here. 
\end{proof}

\begin{proof}[{\bf Proof of Proposition \ref{thm:spectrum}}]
    First we find the spectrum of the following matrix; 
    \begin{equation*}
        A \in \mathbb{R}^{ {n \choose 2} \times  {n \choose 2} }, 
        A_{ab,cd} = |\{a,b\} \cap \{c,d\}| \in \{0,1,2\}
    \end{equation*}
    We need to prove the following results for $A$
    \begin{enumerate}[i.]
        \item $r(A) \le n$.  
        \item Take two distinct leaves $x,y \in [n]$. 
        Consider the vector $v^{(xy)} \in \mathbb{R}^{n \choose 2}$
        with $v^{(xy)}_{xy} = 2n - 4$, $v^{(xy)}_{xz} = v^{(xy)}_{yz} = n-4$ 
        and $v^{(xy)}_{zw} = -4$ for all distinct $z,w \neq x,y$.
        Then $v^{(xy)}$ is an eigenvector of $A$ with eigenvalue $n-2$. 
        In particular, the eigenvectors $v^{(12)}, v^{(13)}, \dots, v^{(1n)}$ are linearly
        independent and so $\dim(E_{n-2}(A) ) \geq n-1$,
        where $E_{\rho}(Y)$ is the eigenspace of matrix $Y$ 
        associated with eigenvalue $\rho$.
    \end{enumerate}
    The first result shows that the dimension of 
    the eigenspace $E_{0}(A)$ is at least 
    ${n \choose 2} - n = n(n-3)/2$ by the rank-nullity theorem. 
    The second result shows that 
    \(
    \dim\left(E_{n-2}(A)\right) \geq n-1
    \). The dominant eigenvalue is $2n - 2$ (with corresponding eigenvector 
    ${\bf 1}$) and the corresponding eigenspace 
    has dimension at least $1$. 
    Since the sum of the dimensions of 
    eigenspaces is ${n \choose 2}$, it follows that 
    \begin{align*}
        \dim\left(E_{2n-2}(A)\right) & = 1, \\ 
        \dim\left(E_{n-2}(A)\right) & = n-1, \\ 
        \dim\left(E_{0}(A)\right) & = \frac{n(n-3)}{2}.
    \end{align*}
    \newline
\textit{Proof of i.} 
First, observe that the sum of rows is a multiple of the all-ones vector 
${\bf 1} = (1,\dots, 1)$, since
\begin{equation*}
    \sum_{a<b} A_{ab, cd} = \sum_{a<b} |\{a,b\} \cap \{c,d\}| = 2(n-1), \, \, 
    \forall (c,d) \in \binom{[n]}{2}.
\end{equation*}
Hence, it suffices to prove that the space spanned
by the first $n-1$ rows $A_{1l, .}$ 
for $l \in \{2,\dots,n\}$ along with the 
row $(1,\dots,1)$ 
(which is the sum of all the rows of $A$) 
contains all the other rows
$A_{xy, .}$ for $x, y \neq 1$. 
To do that, our goal is to write $A_{xy, .}$ as a linear combination 
of the first $n-1$ rows the vector $(1,\dots,1)$ for any $x,y >1$, and proving 
element-wise that it holds.
Consider the coefficients 
$\lambda^{}_j = \lambda + 
\left| \{x,y\} \cap \{j\} \right|, j \in \{2,\dots,n\}  
$, 
corresponding to row $A_{1j, .}$, 
where $\lambda$ is a free parameter 
to be determined. 
Also let $ -2 \lambda $ be the coefficient 
corresponding to the newly created row ${\bf 1} = (1,1,\dots,1)$. 
Then, for all components $zw$ with $z \neq w$ and $z,w \in [n]$,
\begin{align*} 
    \sum_{j=2}^n \lambda^{}_j A_{1j, zw} + (-2\lambda) {\bf 1}_{zw} &= 
    - 2 \lambda +
    \sum_{j=2}^n \left( \lambda + \left|\{j\} \cap \{x,y\} \right| \right)
     \left|\{1,j \} \cap  \{z,w\} \right| 
    \\&=
    - 2 \lambda +
    \sum_{j=2}^n \left( \lambda + \left|\{j\} \cap \{x,y\} \right| \right)
     \left( \left|\{1\} \cap  \{z,w\} \right| + \left|\{j\} \cap  \{z,w\} \right| \right)
    \\&=
    - 2 \lambda - 2 \lambda \left|\{1\} \cap  \{z,w\} \right| 
    \\& \,\,\,\, + 
    \sum_{j=1}^n \left( \lambda + \left|\{j\} \cap \{x,y\} \right| \right)
     \left( \left|\{1\} \cap  \{z,w\} \right| + \left|\{j\} \cap  \{z,w\} \right| \right)
     \\&=
     - 2 \lambda - 2 \lambda \left|\{1\} \cap  \{z,w\} \right|
     (\lambda n + 2)  \left|\{1\} \cap  \{z,w\} \right| + 2 \lambda 
     \\& \,\,\,\, + 
    \sum_{j=1}^n  
    \left| \{ j \} \cap \{z,w\}  \right|
    \cdot \left|\{j\} \cap \{x,y\} \right|
    \\ &=
    \left|\{1\} \cap  \{z,w\} \right| \left( \lambda (n-2) + 2 \right) + 
    \left|\{z,w\} \cap \{x,y\}\right| = A_{xy,zw}
\end{align*}
For the last equality to hold, note that first term vanishes
if we choose $\lambda = \frac{-2}{n-2}$ 
and the summation term is equal to $\left|\{z,w\} \cap \{x,y\}\right|$. 
Hence, we have proved the following result
\begin{equation*}
    \sum_{j=2}^n \lambda^{(x,y)}_j A_{1j, .} + \lambda_1^{(x,y)} {\bf 1} = A_{xy, .}, 
    \forall x,y \in \{ 2, 3 , \dots, n \}, x \neq y
\end{equation*}
which means that for all distinct $x,y \in [n]$ we have 
$A_{xy,.} \in  \spn\{ A_{12,.}, \dots, A_{1n,.}, {\bf 1} \}$. 
Hence, the rank of 
$A$ cannot exceed $n$. \\ 
\newline
\textit{Proof of ii.} 
We now need to verify that $A v^{(xy)} = (n-2) v^{(xy)}$. First, observe that 
\begin{equation*}
    (A^2)_{ab,cd} = \sum_{x<y} A_{ab,xy} A_{xy,cd} = \begin{cases}
        2n,  & {\rm if} |\{a,b\} \cap \{c,d\} | = 2,  \\
        n+2, & {\rm if}  |\{a,b\} \cap \{c,d\} | = 1, \\
        4, & {\rm if}  |\{a,b\} \cap \{c,d\} | = 0 \\
    \end{cases} 
\end{equation*}
and so it follows that 
\begin{equation*}
    A^2 = (n-2) A + 4  \cdot {\bf 1} {\bf 1}^T ,
\end{equation*}
where ${\bf 1}  = (1, \dots,1)$ is a column vector i.e. ${\bf 1} {\bf 1}^T \in \mathbb{R}^{ \binom{n}{2} \times \binom{n}{2}}$ is a matrix of all-ones.
Given that $A {\bf 1} = (2n-2) {\bf 1}$, we conclude that 
\begin{equation*}
    A( nA - 4  \cdot {\bf 1} {\bf 1}^T ) = (n-2) ( nA - 4 \cdot {\bf 1} {\bf 1}^T),
\end{equation*}
which implies that the columns of $V \coloneqq nA - 4 \cdot  {\bf 1} {\bf 1}^T$ 
are all eigenvectors of A with eigenvalue $n-2$. 
Note that $v^{(xy)} = V_{., xy}$ i.e. the vectors constructed in statement (ii) are
precisely the columns of $V$ and hence elements in the eigenspace $E_{n-2}(A)$.\\
\par 
To prove that $v^{(12)}, \dots, v^{(1n)}$ are linearly
independent, consider the square matrix $B \in  \mathbb{R}^{(n-1) \times (n-1)}$ with 
$B_{i,j} = v_{i}^{(1j)}$. It suffices to prove that $B$ is non-singular. 
Note that $B_{i,i} = 2n-4 > n-4 = B_{i,j} > 0$
for all $j \neq i$ and $i \in \{2,\dots,n\}$ and denote them as 
$B_{ii}= x, B_{ij} = y$ for all $i \in [N], j \neq i$ where $N=n-1$ and 
$x > y > 0$. Observing that $B$ is a circulant matrix, its eigenvalues are 
\begin{equation*}
    \rho_k = \sum_{j=0}^{N-1} c_j \omega^{kj}, k \in \{0,\dots, N-1\}
\end{equation*}
where $c_0 = x, \, c_1= \dots = c_{N-1} = y$, and $\omega$ is the $N^{\rm th}$-root
of unity. Hence, $\rho_0 = x+ (n-1) y > 0$ and $\rho_k = x + y  \sum_{j=1}^{N-1} \omega^{kj} = x-y > 0$ for $k \in \{1,\dots,N-1\}$. All eigenvalues are positive and
so $B$ is non-singular.
\newline 
Having found the spectrum of $A$, the task is to now find the spectrum of $C$. 
Observe that 
\[ 
    C = (\beta - \gamma) A + ( \gamma + \alpha - 2\beta) I + \gamma {\bf 1},
\]
where ${\bf 1}$ is a matrix whose entries are all 1.\\ 
Every eigenvalue and eigenvector that is not ${\bf 1}$ is transformed 
\begin{align*}
    \Tilde{ \lambda } &= (\beta - \gamma) \lambda + (\gamma + \alpha - 2\beta) \\ 
    \Tilde{v} & = v - {\bf 1}
    \frac{\gamma v^T {\bf 1}}{ (\beta-\gamma) (2n-2-\lambda) + \gamma n}
\end{align*}
In particular, if $Av = \lambda v$, then $C \Tilde{v} = \Tilde{\lambda} \Tilde{v}$. 
This argument doesn't work for $v = {\bf 1}$ since $\Tilde{v} = 0$. But it's straightforward to compute the eigenvalue directly 
\begin{equation*}
    C{\bf 1} = \left( (\beta - \gamma) (2n-2) +  \gamma + \alpha - 2 \beta + 
        {n \choose 2} \gamma
    \right)
    {\bf 1}
\end{equation*}
The eigenvalue $n-2$ of $A$ becomes $(\beta-\gamma) (n-2) + \gamma +\alpha - 2\beta$ 
and the corresponding eigenvectors do not change because $v^T {\bf 1} = 0$ for 
all the vectors in $E_{n-2}(A)$. The remaining eigenvalue is $\gamma+\alpha - 2\beta$.
\end{proof}

\begin{proof}[{\bf Proof of Corollary \ref{cor}}]
    For the trace and Frobenius norm, the results of Propositions~\ref{prop:covariance_coal_sub_total} suffice to prove them. For the determinant, the spectral norm, and the results about variance ratios, 
    the spectrum of those matrices is given by Proposition~\ref{thm:spectrum}.\\ 
    Regarding the variance ratio, note that Proposition~\ref{thm:spectrum} provides limits of 
    variance ratios as $n \to \infty$. The variance ratio of PC1 is $\gamma/\alpha$ and 
    the variance variance of the first $n$ prinipal components is $\frac{2\beta-\gamma}{\alpha}$. 
    For the covariance matrix $\Sigma_{\rm mode}$ where mode is either coal or sub, 
    $\alpha = \sigma_{\rm mode}^{(2)}, 
    \beta = \sigma_{\rm mode}^{(1)}, 
    \gamma = \sigma_{\rm mode}^{(0)}$. \\
    We start with $\Sigma_{\rm coal}$, whose components $\alpha,\beta,\gamma$ are given by
    Equations~\eqref{eq:sigma_coal_0}--\eqref{eq:sigma_coal_2}. As $\mu \to 0$,
    \begin{equation*}
        \frac{\sigma_{\rm coal}^{(i) }}{\mu^2} \to \begin{cases}
            4, & {\rm if } \quad i=2\\ 
            \frac{4}{3}, & {\rm if } \quad i=1\\ 
            \frac{8}{9}, & {\rm if } \quad i=0
        \end{cases}.
    \end{equation*}
    From that it follows that 
    \begin{align*}
        \lim_{\mu \to 0} \frac{\gamma_{\rm coal}}{\alpha_{\rm coal}} &= \frac{2}{9}\\
        \lim_{\mu \to 0} \frac{2\beta_{\rm coal}-\gamma_{\rm coal}}{\alpha_{\rm coal}} &= \frac{4}{9}
    \end{align*}
    Similarly, for $\Sigma_{\rm sub}$, as $\mu \to 0$, 
    \begin{equation*}
        \frac{\sigma_{\rm sub}^{(i) }}{\mu} \to \begin{cases}
            \frac{8}{3} + \frac{4}{3}\Delta , & {\rm if } \quad i=2\\ 
            \frac{4}{3}+\frac{2}{3}\Delta, & {\rm if } \quad i=1\\ 
            \frac{8}{9}, & {\rm if } \quad i=0
        \end{cases}.
    \end{equation*}
    From that it follows that 
    \begin{align*}
        \lim_{\mu \to 0} \frac{\gamma_{\rm sub}}{\alpha_{\rm sub}} &= \frac{2}{6+3\Delta}\\
        \lim_{\mu \to 0} \frac{2\beta_{\rm sub}-\gamma_{\rm sub}}{\alpha_{\rm sub}} &= \frac{4+3\Delta}{6+3\Delta}
    \end{align*}
    This concludes the proof of the Corollary.
\end{proof}

\begin{proof}[{\bf Proof of Proposition \ref{prop:coal_infinite_tree}}]
    Two statements need to be proved; first, that $v^{(i)}$ is an eigenvector
    of $C^{\infty}_{\rm coal}$
    for all 
    interior nodes $i$, and second, that the remaining eigenvalues are zero.
    Observe that $C^{\infty}_{\rm coal}$ only takes values $\{0,1\}$. 
    The value of $\left(C^{\infty}_{\rm coal}\right)_{ab,cd}$ 
    is $1$ if and only if the topology of the leaves $\{a,b,c,d\}$ is either 
    $( (a,c), (b,d) )$ or $( (a,d), (b,c) )$. Hence, 
    \begin{align*}
        \left( C^{\infty}_{\rm coal} v^{(i)} \right)_{ab} &= \sum_{cd} 
        \mathbb{I}\left(
            ( (a,d), (b,c) ) \text{ or } ( (a,c), (b,d) ) 
        \right)  \cdot 
        \mathbb{I}\left(
             (c,d) \in (L_i \times R_i)
        \right) \\ &= 
        \sum_{c \in L_i, d \in R_i} 
        \mathbb{I}\left(
            ( (a,d), (b,c) ) \text{ or } ( (a,c), (b,d) ) 
        \right) \\ &= 
        \sum_{c \in L_i, d \in R_i} 
        \mathbb{I}\left(
            (a,d), (b,c) 
        \right) + 
        \sum_{c \in L_i, d \in R_i} 
        \mathbb{I}\left(
            (a,c), (b,d) 
        \right) \\ &= 
        \mathbb{I}\left( 
            (a,b) \in R_i \times L_i
        \right) |L_i| |R_i|+ 
        \mathbb{I}\left( 
            (a,b) \in L_i \times R_i 
        \right) |L_i| |R_i|
        \\ & =
        |L_i| |R_i| \mathbb{I}\left( 
            (a,b) \in (L_i \times R_i) \cup (R_i \times L_i)
        \right) 
        \\ & = 
        |L_i| |R_i| v^{(i)}_{ab}
    \end{align*}
    To prove the second statement, it suffices to prove that
    $\sum_{i} |L_i| |R_i| = \text{tr}(C^{\infty}_{\rm coal})$ where the sum is over all 
    internal nodes; 
    that's because the trace is the sum of eigenvalues and $C^{\infty}_{\rm coal}$ is a
    symmetric semi-positive definite matrix. 
    The diagonal of $C^{\infty}_{\rm coal}$ only includes ones, and so 
    $\text{tr}(C^{\infty}_{\rm coal}) = {m \choose 2}$. Let $r$ be the root of the tree
    with $|L_r| + |R_r| = m$. Denote $\xi_{\mathcal{A}}$ be the sum of eigenvalues 
    (corresponding to interior nodes) for the tree formed from the leaf set $\mathcal{A} \subset [m]$.
    It follows that 
    \begin{equation*}
        \xi_{[m]} = |L_r| ( m - |L_r| ) + \xi_{L_i} + \xi_{R_i}
    \end{equation*}
    Using strong induction, we write $\xi_{L_r} = {|L_r| \choose 2}$ and $\xi_{R_r} = {m - |L_r| \choose 2}$. 
    Substituting those to the formula above yields $\xi_{[m]} = {m \choose 2}$. 
    That concludes the inductive step.
\end{proof}

\section{Covariance calculations} \label{sec:cov_calculations}


Recall that using Equation~\eqref{eq:cov_decomposition} we decompose the covariance into an MSC and substitution component. Formulae for the former can be found in the Proof of Proposition~\ref{thm:positive_cov}. 
In this section of the Appendix, the task is to compute the latter component, namely the substitution 
covariance matrix for all four subtree topology cases shown in Figure~\ref{fig:all_tree_shapes}. 
This substitution component can be further decomposed as follows
\begin{align*}
    \frac{16}{9} K \left( \Sigma_{\rm sub} \right)_{ab,cd} &= 
         \mathbb{E}_{G^{(1)}| \mathcal{S}} \left( 
            e^{-\mu \left(g_{ab} + g_{cd} \right)}
        \left(\frac{1}{3}e^{2\mu \delta_{ab,cd}} + \frac{2}{3} e^{\mu \delta_{ab,cd}} - 1
        \right)
            \right)
        \\& = 
        \frac{1}{3} e^{(2)}_{ab,cd} + \frac{2}{3} e^{(1)}_{ab,cd}
        \\& \,\,\,\,\,\,\,\,\, -\left(
        \Cov\left(e^{-\mu g_{ab}}, e^{-\mu g_{cd}}\right)
        + \frac{e^{- \mu (S_{ab} + S_{cd}))}}{(1+2\mu)^2 }
        \right), \quad {\rm where}\\
        e^{(1)}_{ab,cd} & \coloneqq \mathbb{E}_{G^{(1)}| \mathcal{S}} \left( e^{-\mu(g_{ab} + g_{cd} - \delta_{ab,cd})} \right) \\
        e^{(2)}_{ab,cd} & \coloneqq \mathbb{E}_{G^{(1)}| \mathcal{S}} \left( e^{-\mu(g_{ab} + g_{cd} - 2\delta_{ab,cd})} \right)
\end{align*}
where $\Cov\left(e^{-\mu g_{ab}}, e^{-\mu g_{cd}}\right)$ is the covariance under MSC derived in Proposition~\ref{thm:positive_cov}. Therefore, the task now is to compute $e^{(1)}_{ab,cd}$ and
$e^{(2)}_{ab,cd}$
for all different tree topologies. 
\subsection{Computing $e^{(1)}$ terms}
First, we will state all the results for $e^{(1)}_{ab,cd}$, and then provide their proofs.

\begin{itemize}
    \item {\bf Two-leaves.} If $\{a,b\} = \{c,d\}$, then $\delta_{ab,ab} = g_{ab}$ and so 
    \begin{equation*}
        e^{(1)}_{ab,ab} = \frac{e^{-\mu S_{ab} }}{ 1+2\mu}.
    \end{equation*}
    \item {\bf Three-leaf tree} with tree topology $((a,b),c)$ has 
    \begin{equation*}
        e^{(1)}_{ab,ac} = e^{(1)}_{ac,bc} = e^{(1)}_{ab,bc}= \frac{e^{-\mu\left( \Delta + \frac{S_{ab}}{2} \right) } }{(1+\mu) (1+2\mu)}
    \end{equation*}
    \item {\bf Four leaves (Cherry tree)} with tree topology $((a,b),(c,d))$ has
    \begin{align*}
        e^{(1)}_{ab,cd} &= \frac{e^{-\mu( S_{ab} + S_{cd}) }}{(1+2\mu)^2} 
        + e^{-(2\mu+1) \Delta + \frac{S_{ab} + S_{cd}}{2}} 
        \frac{2\mu}{(\mu + 1)(2\mu + 1)^2(2\mu + 3)} \\
        e^{(1)}_{ac,bd} = e^{(1)}_{ad,bc} &= 
        \frac{e^{-\mu\left(\Delta + \frac{S_{ab} + S_{cd}}{2} \right) }}{(1+2\mu)(1+\mu)^2} 
        + e^{-(2\mu+1) \Delta + \frac{S_{ab} + S_{cd}}{2}} 
        \frac{\mu(\mu + 2)}{(\mu + 1)^2(2\mu + 1)(2\mu + 3)}
    \end{align*}
    \item {\bf Four leaves (Comb tree)} with tree topology $(((a,b), c),d)$ has
    \begin{align*}
        e^{(1)}_{ab,cd} &= \frac{e^{-\mu (S_{ab} + \Delta)}}{(1+2\mu)^2} 
        + e^{-\mu(\Delta + S_{ac}) - \frac{S_{ac} - S_{ab}}{2}} 
        \frac{2\mu}{(\mu + 1)(2\mu + 1)^2(2\mu + 3)} \\
        e^{(1)}_{ac,bd} = e^{(1)}_{ad,bc} &= 
        \frac{e^{-\mu\left(\Delta + \frac{S_{ab} + S_{ac}}{2} \right) }}{(1+2\mu)(1+\mu)^2} 
        + e^{-\mu(\Delta + S_{ac}) - \frac{S_{ac} - S_{ab}}{2}} 
        \frac{\mu(\mu + 2)}{(\mu + 1)^2(2\mu + 1)(2\mu + 3)}
    \end{align*}
\end{itemize}
Here and for the remainder of the Appendix, $\Delta$ is the diameter of the subtree with leaves $a,b,c,d$ and should 
not be confused with the diameter of the total tree, which may include more leaves and thus be larger. A more 
accurate notation would be $\Delta_{abcd}$, but we will write $\Delta$ for convenience.

For the {\bf 2-leaf} case, note that $\delta_{ab,ab} = g_{ab}$ and so 
\[
    e^{(1)}_{ab,ab} = \mathbb{E}(e^{-\mu g_{ab}}) = e^{\mu S_{ab}} \mathbb{E}\left(e^{-2\mu E_1} \right) = 
    \frac{ e^{\mu S_{ab}}}{1+2\mu},
\]
where $E_1 \coloneqq (g_{ab} - S_{ab})/2 \sim \Exp(1)$ under the MSC, whose moment generating function is 
$1/(1-t)$ which gives the $1/(1+2\mu)$ term in the expression above. \\
\par
For the {\bf 3-leaf} case, note that 
\[
    \Lambda \coloneqq g_{ab} + g_{ac} - \delta_{ab,ac} =  g_{ab} + g_{bc} - \delta_{ab,bc} = g_{ac} + g_{bc} - \delta_{ac,bc},
\]
which is equal to the sum of the branches of the gene tree containing the three leaves $\{a,b,c\}$, regardless of the topology of 
the gene tree. Let $T$ be the first coalescent time in the gene tree i.e. $T = \frac{1}{2} \min \{ g_{ab}, g_{ac}, g_{bc}\}$. 
Then, the total branch length of the gene tree generated by $\{a,b,c\}$ is  
\begin{equation*}
    \Lambda = \begin{cases}
    T+\Delta + 2E_1, & 
    {\rm if} \,\, T \in \left( \frac{S_{ab}}{2}, \frac{\Delta}{2} \right) \\
    3T + 2E_1, & 
    {\rm if} \,\, T \geq \frac{\Delta}{2} \,
\end{cases}
\end{equation*}
where $E_1 \sim \Exp(1)$ is independent of $T$. 
Figure~\ref{fig:three-leaf-tree-total-branch-lengths} illustrates this result.
\\
\begin{figure}[!h]
    \centering
    \includegraphics[width=0.7\linewidth]{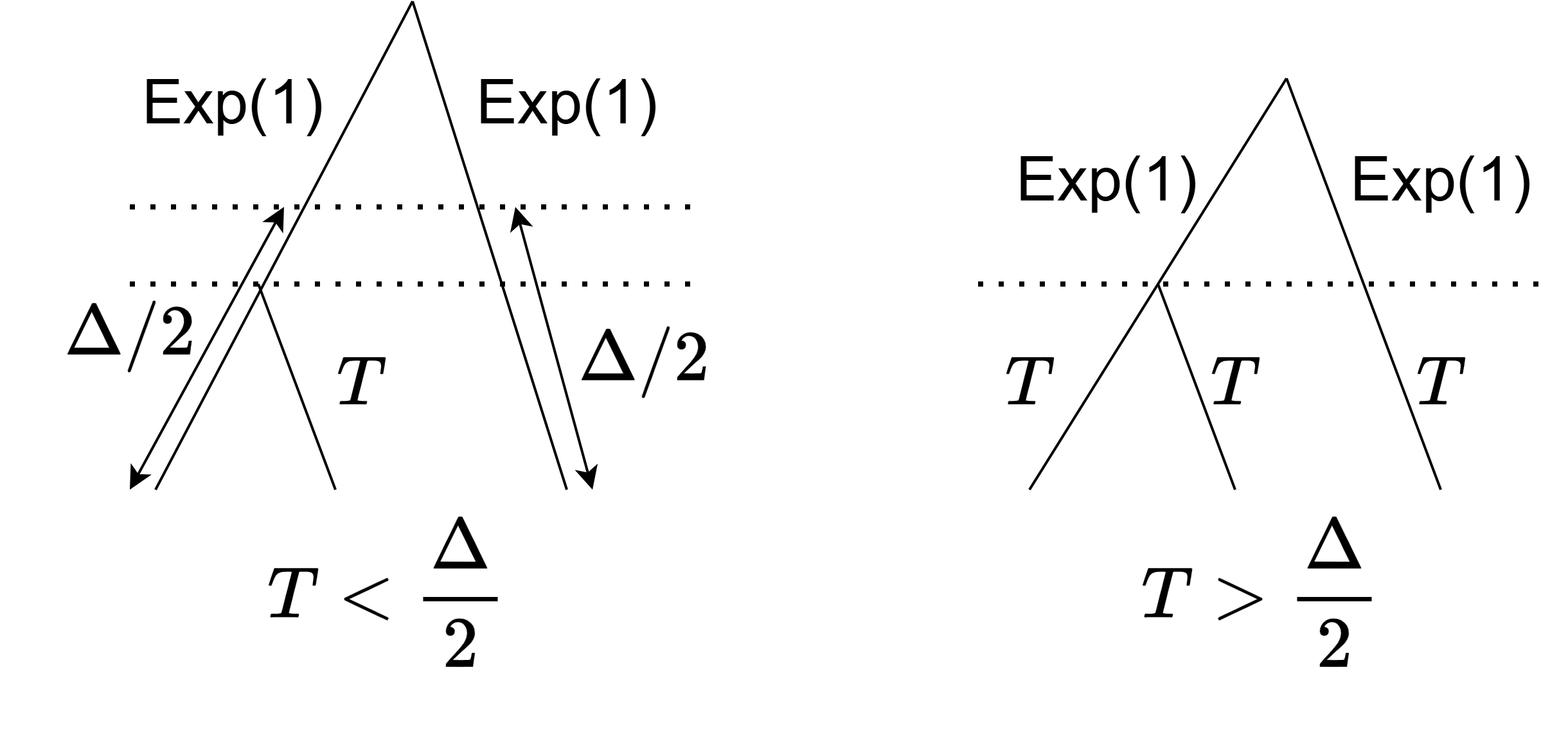}
    \caption{The total length of this gene tree is $T+2 \frac{\Delta}{2} + 2\Exp(1)$ 
    if $T < \Delta / 2$ (left) and $3T + 2\Exp(1)$ if $T \geq \frac{\Delta}{2}$ (right).}
    \label{fig:three-leaf-tree-total-branch-lengths}
\end{figure}
\par 
The pdf of $T$ is 
\begin{equation*}
f_{T}(T) = \begin{cases}
    \exp\left(-\left(T-\frac{S_{ab}}{2}\right)\right), & 
    {\rm if} \,\, T \in \left( \frac{S_{ab}}{2}, \frac{\Delta}{2} \right) \\
    3 \exp\left(- \frac{\Delta - S_{ab} }{2} - 3 \left( T-\frac{\Delta}{2} \right) \right), & 
    {\rm if} \,\, T \geq \frac{\Delta}{2} \\
    0, & {\rm otherwise}
\end{cases}
\end{equation*}
Therefore, we have that 
\begin{align}
    e^{(1)}_{ab,ac} = e^{(1)}_{ac,bc} = e^{(1)}_{ab,bc} &= 
    \mathbb{E}_T\left(\mathbb{E}(e^{t\Lambda }|T)\right) 
    \nonumber 
    \\ &= \frac{1}{1+2\mu} \left[
    \int_{\frac{S_{ab}}{2}}^{ \frac{\Delta}{2} }  
    \exp\left(-\left(T-\frac{S_{ab}}{2}\right) - \mu T - \Delta \right) \, dT  
    \right. \nonumber \\ & \left. + 
    \int_{ \frac{\Delta}{2} }^{\infty} 
    3 \exp\left(- \frac{\Delta - S_{ab} }{2} - 3 \left( T-\frac{\Delta}{2} \right) -3\mu T\right) \, dT
    \right] 
    \nonumber \\ & = \frac{e^{-\mu\left( \Delta + \frac{S_{ab}}{2} \right) } }{(1+\mu) (1+2\mu)},
    \label{eq:Lambda_MGF}
\end{align} as stated for the {\bf 3-leaf tree}.

For the {\bf 4-leaf cherry tree} with topology $((a,b), (c,d))$, again denote $T$ as the first coalescence time
and without loss of generality let $S_{ab} \leq S_{cd}$.
We also define the random variable $\Lambda_{0}$ as the sum of branch lengths of a 3-leaf gene tree, whose 
species tree is the star tree with zero diameter, and whose MGF is a special case of Equation~\eqref{eq:Lambda_MGF}
with $\Delta = S_{ab} = 0$, i.e. 
\begin{equation}
    \mathbb{E}\left(e^{-\mu \Lambda_{0}}\right) =  \frac{1}{(1+2\mu)(1+\mu)}.
    \label{eq:lambda_00}
\end{equation}

The central idea of this derivation is that after 2 leaves 
have coalesced into one, the problem is reduced to a 3-leaf case that is already computed.
Note that if $T< \frac{\Delta}{2}$, then either the first pair to coalesce is either $(a,b)$ 
or $(c,d)$. Either way, $\delta_{ab,cd} = 0$. Conditioning on $T$ and the corresponding first 
coalescent pair ${\bf p}$, 
\begin{equation*}
    g_{ab} + g_{cd} - \delta_{ab,cd}|(T,{\bf p}) \overset{d}{=} 
    \begin{cases}
        2T + \Delta + 2 E_1, &
        {\rm if} \,\, T \in \left( \frac{S_{ab}}{2}, \frac{\Delta}{2} \right)\\
        4T + 2 E_1,&
        {\rm if} \,\, T > \frac{\Delta}{2}, {\bf p} = (a,b) \text{ or }  (c,d)\\
        4T + \Lambda_{0}, & 
        {\rm if} \,\, T > \frac{\Delta}{2}, {\bf p} \neq (a,b) \text{ or }  (c,d)
    \end{cases}
\end{equation*}
where $E_1 \sim \Exp(1)$ and $\Lambda_0$ are independent of $T$. 
It follows that,
\begin{align*}
    e^{(1)}_{ab,cd} &= \mathbb{E}\left(e^{-\mu(g_{ab} + g_{cd} - \delta_{ab,cd}) } \right)
    \\&= 
    \int_{\frac{S_{ab}}{2}}^{\frac{\Delta}{2}} 
        \frac{\exp\left(
        -\left(T-\frac{S_{ab}}{2}\right) -\mu(2T + \Delta)
        \right)}{1+2\mu} \, dT 
    \\&\quad+ 
    e^{-\frac{\Delta-S_{ab}}{2} }
    \int_{\frac{S_{cd}}{2}}^{\frac{\Delta}{2}} 
        \frac{\exp\left(
        -\left(T-\frac{S_{cd}}{2}\right) -\mu(2T + \Delta)
        \right)}{1+2\mu} \, dT
    \\&\quad+ 
    e^{-\left(\Delta - \frac{S_{ab} + S_{cd}}{2} \right)} \left(
    \int_{\frac{\Delta}{2}}^{\infty} \frac{6\exp\left(
        -6(T-\Delta) - 4 \mu T)
    \right)}{1+2\mu}\, dT \cdot \mathbb{P}\left({\bf p}= (a,b),(c,d)\right) + 
    \right. \\ &\quad+ \left. \int_{\frac{\Delta}{2}}^{\infty}
        6\exp\left(
            -6(T-\Delta) - 4 \mu T)
        \right) \mathbb{E}\left(e^{-\mu \Lambda_{0}}\right) \, dT
        \cdot \mathbb{P}\left({\bf p}\neq (a,b),(c,d) \right)  
    \right)
\end{align*}
which yields the desired result, once we substitute Equation~\eqref{eq:lambda_00} and note that 
$\mathbb{P}({\bf p}  = (a,b),(c,d) ) =2/6 = 1/3$.
Similarly, note that 
\begin{equation*}
    g_{ac} + g_{bd} - \delta_{ac,bd}|(T,{\bf p}) \overset{d}{=} 
    \begin{cases}
        4T + \Lambda(\Delta-2T,S_{cd}-2T), &
        {\rm if} \,\, T \in \left( \frac{S_{ab}}{2}, \frac{S_{cd}}{2} \right), {\bf p} = (a,b)\\
        4T + \Lambda(\Delta-2T,0), &
        {\rm if} \,\, T \in \left(\frac{S_{cd}}{2}, \frac{\Delta}{2} \right), {\bf p} = (a,b), (c,d)\\
        4T + 2 E_1,&
        {\rm if} \,\, T > \frac{\Delta}{2}, {\bf p} = (a,c) \text{ or }  (b,d)\\
        4T + \Lambda_{0}, & 
        {\rm if} \,\, T > \frac{\Delta}{2}, {\bf p} \neq (a,c) \text{ or }  (b,d)
    \end{cases}
\end{equation*}
where $E_1 \sim \Exp(1)$ is independent of $T$. 
It follows that
\begin{align*}
    e^{(1)}_{ac,bd} &= \mathbb{E}\left(e^{-\mu(g_{ac} + g_{bd} - \delta_{ac,bd}) } \right)
    \\&= 
    \int_{\frac{S_{ab}}{2}}^{\frac{S_{cd}}{2}} 
        \exp\left(
        -\left(T-\frac{S_{ab}}{2}\right) -4\mu T)
        \right) 
        \frac{\exp\left(-\mu \left(\Delta+ \frac{S_{ab}}{2} \right) +3\mu T \right)}
        {(1+\mu)(1+2\mu)}
        \, dT 
    \\&\quad + 
    e^{-\frac{S_{cd}-S_{ab}}{2} }
    \int_{\frac{S_{cd}}{2}}^{\frac{\Delta}{2}} 
    2\exp\left(
        -2\left(T-\frac{S_{cd}}{2}\right) -4\mu T)
        \right) 
        \frac{\exp\left(-\mu \Delta + 2\mu T \right)}
        {(1+\mu)(1+2\mu)}
        \, dT 
    \\&\quad + 
    e^{-\left(\Delta - \frac{S_{ab} + S_{cd}}{2} \right)} \left(
    \int_{\frac{\Delta}{2}}^{\infty} \frac{6\exp\left(
        -6(T-\Delta) - 4 \mu T)
    \right)}{1+2\mu}\, dT \cdot \mathbb{P}\left({\bf p} = (a,c),(b,d)\right)
    \right. \\ &\quad + \left. \int_{\frac{\Delta}{2}}^{\infty}
        6\exp\left(
            -6(T-\Delta) - 4 \mu T)
        \right) \frac{dT}{(1+\mu)(1+2\mu)}  \cdot \mathbb{P}\left({\bf p} \neq (a,c),(b,d)\right)
    \right)
    \\&= 
    \frac{e^{-\mu\left(\Delta + \frac{S_{ab} + S_{ac}}{2} \right) }}{(1+2\mu)(1+\mu)^2} 
        + e^{-\mu(\Delta + S_{ac}) + \frac{\Delta - S_{ab}}{2}} 
        \frac{\mu(\mu + 2)}{(\mu + 1)^2(2\mu + 1)(2\mu + 3)}
\end{align*}
Note that since $a$ and $b$ are interchangeable in this tree topology, $e^{(1)}_{ad,bc} = e^{(1)}_{ac,bd}$.
This concludes the proof for the cherry tree case.\\
\par
Finally, for the {\bf 4-leaf comb tree} case with topology $(((a,b),c),d)$, and noting that
\begin{equation*}
    g_{ab} + g_{cd} - \delta_{ab,cd}|(T,{\bf p}) \overset{d}{=} 
    \begin{cases}
        2T + \Delta + 2 E_1, &
        {\rm if} \,\, T \in \left( \frac{S_{ab}}{2}, \frac{S_{ac}}{2} \right)\\
        2T + \Delta + 2 E_1, &
        {\rm if} \,\, T \in \left( \frac{S_{ac}}{2}, \frac{\Delta}{2} \right), {\bf p} = (a,b)\\
        4T + \Lambda(\Delta-2T,0) , &
        {\rm if} \,\, T \in \left( \frac{S_{ac}}{2}, \frac{\Delta}{2} \right), {\bf p} = (a,c),(b,c)\\
        4T + 2 E_1,&
        {\rm if} \,\, T > \frac{\Delta}{2}, {\bf p} = (a,c) \text{ or }  (b,d)\\
        4T + \Lambda_{0}, & 
        {\rm if} \,\, T > \frac{\Delta}{2}, {\bf p} \neq (a,c) \text{ or }  (b,d)
    \end{cases}
\end{equation*}
\begin{equation*}
    g_{ac} + g_{bd} - \delta_{ac,bd}|(T,{\bf p}) \overset{d}{=} 
    \begin{cases}
        4T + \Lambda(\Delta-2T,S_{ac}-2T), &
        {\rm if} \,\, T \in \left( \frac{S_{ab}}{2}, \frac{S_{ac}}{2} \right)\\
        2T + \Delta + 2 \Exp(1), &
        {\rm if} \,\, T \in \left( \frac{S_{ac}}{2}, \frac{\Delta}{2} \right), {\bf p} = (a,c)\\
        4T + \Lambda(\Delta-2T,0) , &
        {\rm if} \,\, T \in \left( \frac{S_{ac}}{2}, \frac{\Delta}{2} \right), {\bf p} = (a,b),(b,c)\\
        4T + 2 \Exp(1),&
        {\rm if} \,\, T > \frac{\Delta}{2}, {\bf p} = (a,c) \text{ or }  (b,d)\\
        4T + \Lambda_{0}, & 
        {\rm if} \,\, T > \frac{\Delta}{2}, {\bf p} \neq (a,c) \text{ or }  (b,d)
    \end{cases}
\end{equation*}
where $E_1 \sim \Exp(1)$ is independent of $T$, the desired results are derived by following the same process in the cherry tree case.

\subsection{Computing $e^{(2)}$ terms}
First, we will state all the results for $e^{(2)}_{ab,cd}$, and then provide key ideas 
for their proof.
\begin{itemize}
    \item {\bf Two-leaves.} If $\{a,b\} = \{c,d\}$, then $\delta_{ab,ab} = g_{ab}$ and so 
    the random variable in the exponent is zero, i.e.
    \begin{equation*}
        e^{(2)}_{ab,ab} = 1.
    \end{equation*}
    \item {\bf Three-leaf tree} with tree topology $((a,b),c)$ has 
    \begin{align*}
        e^{(2)}_{ab,ac} = e^{(2)}_{ab,bc} &= \frac{e^{-\mu S_{ac}}}{1+2\mu} \\
         e^{(2)}_{ac,bc} &= \frac{e^{-\mu S_{ab}}}{1+2\mu}
    \end{align*}
    \item {\bf Four leaves (Cherry tree)} with tree topology $((a,b),(c,d))$ has
    \begin{equation*}
        e^{(2)}_{ab,cd} = e^{(2)}_{ac,bd}= e^{(2)}_{ad,bc} = 
        \frac{e^{-\mu( S_{ab} + S_{cd}) }}{(1+2\mu)^2} 
        + e^{-(2\mu+1) \Delta + \frac{S_{ab} + S_{cd}}{2}} 
        \frac{4\mu}{(2\mu + 1)^2(2\mu + 3)} 
    \end{equation*}
    \item {\bf Four leaves (Comb tree)} with tree topology $(((a,b), c),d)$ has
    \begin{equation*}
        e^{(2)}_{ab,cd} = e^{(2)}_{ac,bd}= e^{(2)}_{ad,bc} = 
        \frac{e^{-\mu (S_{ab} + \Delta)}}{(1+2\mu)^2} 
        + e^{-\mu(\Delta + S_{ac}) - \frac{S_{ac} - S_{ab}}{2}} 
        \frac{4\mu}{(2\mu + 1)^2(2\mu + 3)} 
    \end{equation*}
\end{itemize}

To prove the three leaf case, observe that 
\begin{equation}
    g_{xy} + g_{xz} - 2 \delta_{xy,xz} = g_{yz}, \quad \forall (x,y,z) \in \binom L3
\end{equation}
In other words,
\begin{equation*}
    e^{(2)}_{xy,xz} = \mathbb{E}(e^{-\mu g_{yz}}) = \frac{e^{-\mu S_{yz}}}{1+2\mu}
\end{equation*}
For the four-leaf comb and cherry case, we observe that 
\begin{equation*}
    g_{ab} + g_{cd} - 2\delta_{ab,cd} = g_{\bf p} + g_{\bf p^c},
\end{equation*}
where ${\bf p} \in \binom{\{a,b,c,d\}}{2}$ is the first pair in the gene tree to coalesce 
and ${\bf p}^c$ is the complimentary pair; the remaining two leaves. Conditioning on $T,{\bf p}$
just like before yields the desired results.

\section{Algorithms} \label{sec:algorithms}

\par Algorithm \ref{alg:hamming} describes the procedure to compute the Hamming distance between two alignment sequences. This is a sub-routine within the distance-based approaches for species tree clustering - METAL, GLASS and STEAC - which are presented in Algorithm \ref{alg:species_tree_reconstruction}.

\begin{algorithm}[ht]
\caption{Hamming Distance Between Two Sequences}
\label{alg:hamming}
\begin{algorithmic}[1]
\Require Sequences $X$ and $Y$ of equal length $K$
\Require $X, Y \in \{\texttt{A}, \texttt{C}, \texttt{T}, \texttt{G}\}^K$
\Function{Hamming}{$X$, $Y$}
  \State $d \gets 0$
  \For{$i \gets 1$ to $K$}
    \If{$X[i] \neq Y[i]$}
      \State $d \gets d + 1$
    \EndIf
  \EndFor
  \State \Return $d / K$ \Comment{Normalize Hamming Distances}
\EndFunction
\end{algorithmic}
\end{algorithm}

\begin{algorithm}[ht]
\caption{Species‐Tree Estimation via Agglomerative Distance‐Based Methods}
\label{alg:species_tree_reconstruction}
\begin{algorithmic}[1]
\Require gene sequences $\chi_i^{(g)}$ for genes $g \in G = [m]$ and leaves/species $i \in L$
\Require method $\in \{\textbf{METAL},\textbf{STEAC},\textbf{GLASS}\}$
\Ensure species tree $T$

\If{method == \textbf{METAL}}
  \Comment{Concatenate gene sequences}
  \ForAll{leaves $i \in L$}
    \State $\chi_i \gets \mathrm{Concatenate}\left(\chi_i^{(1)}, \chi_i^{(2)}, \dots, \chi_i^{(|G|)}\right)$
  \EndFor
  \Comment{Compute distances on concatenated alignment}
  \ForAll{leaf pairs $(i,j) \in {\binom{L}{2}}$}
    \State $D[i,j] \gets \mathrm{Hamming}\left(\chi_i, \chi_j \right)$ \Comment{Normalized Hamming Distances}
  \EndFor
\Else
  \Comment{Compute gene-by-gene distances}
  \For{$g \in G$}
    \ForAll{leaf pairs $(i,j) \in {\binom{L}{2}} $}
      \State $\theta \gets \mathrm{Hamming}\left(\chi_i^{(g)} , \chi_j^{(g)}\right)$
      \If{$\theta \geq \frac{3}{4}$ {\bf \&} method == \textbf{STEAC}}
      \State $G$.delete($g$) \Comment{Delete gene}
      \State {\bf break}
      \ElsIf{method == \textbf{STEAC}}
      \State $D_g[i,j] \gets -\frac{3}{4}\log\left(1- \frac{4}{3} \theta \right)$ \Comment{Convert to coalescent units}
      \ElsIf{method == \textbf{GLASS}}
      \State $D_g[i,j] \gets \theta$ 
      \EndIf
    \EndFor
  \EndFor
  \Comment{Aggregate across genes}
  \ForAll{leaf pairs $(i,j) \in {\binom{L}{2}}$}
    \If{method == \textbf{STEAC}}
      \State $D[i,j] \gets \frac{1}{|G|}\sum_{g\in G} D_g[i,j]$
    \ElsIf{method == \textbf{GLASS}}
      \State $D[i,j] \gets \min_{g \in G} D_g[i,j]$
    \EndIf
  \EndFor
\EndIf
\State $T \gets \mathrm{HierarchicalClustering}(D)$ \Comment{Hierarchical clustering (e.g., UPGMA or NJ)}
\State {\bf return} T
\end{algorithmic}
\end{algorithm}

\section{Additional plots}\label{sec:plots}

\begin{figure}[h]
    \centering
    \includegraphics[width=0.9\linewidth]{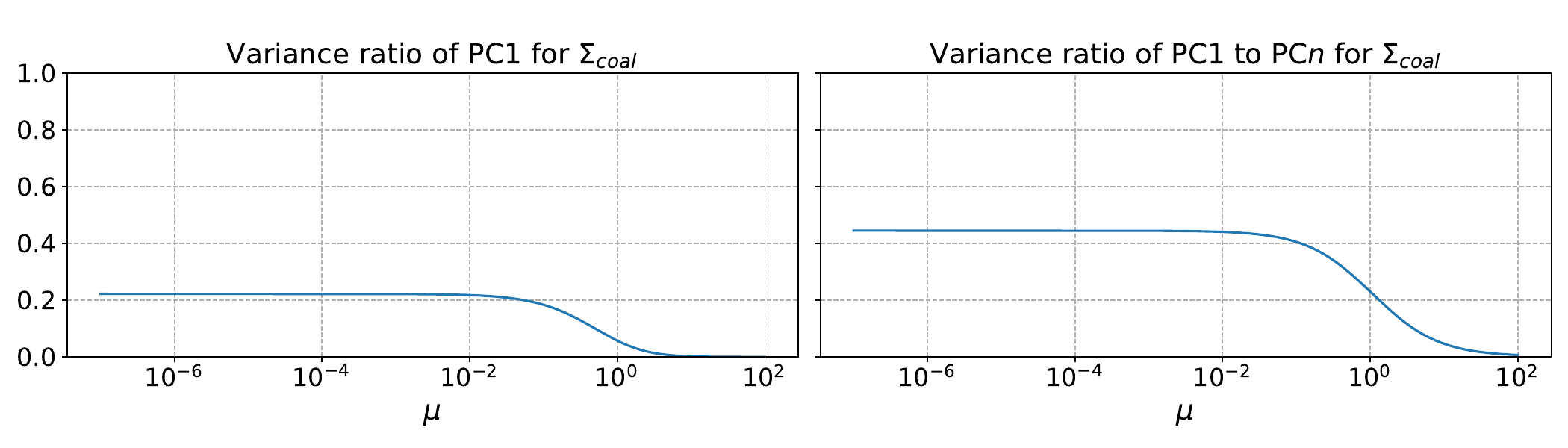} \\
    \includegraphics[width=0.9\linewidth]{variance_ratio_sub-eps-converted-to.pdf} \\
    \includegraphics[width=0.9\linewidth]{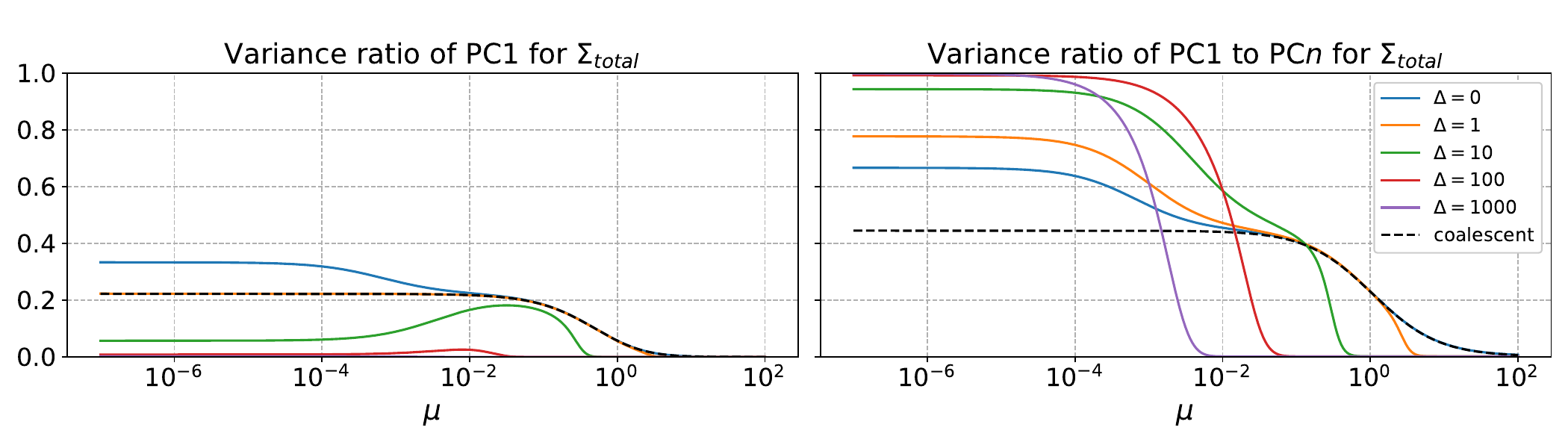} 
    \caption{Variance ratios for (top) \(\Sigma_{\rm coal}\), (middle) \(\Sigma_{\rm sub}\), and (bottom) \(\Sigma_{\rm total}\), plotted against mutation rate \(\mu\) and species‐tree diameter \(\Delta\). In each row, the left panel shows the fraction of total variance explained by the first principal component, and the right panel shows the fraction of total variance explained by the first \(n\) principal components. Here, the number of sites per gene is \(K=1000\). Since \(\Sigma_{\rm coal}\) does not depend on \(\Delta\), its row contains only a single curve. For \(\Sigma_{\rm total}\), note that when \(\mu \ll 1\) and \(\mu \Delta \gg 1\) (i.e., substitution variance dominates), its curves resemble those of \(\Sigma_{\rm sub}\), whereas for intermediate \(\mu\) (when MSC variance dominates), all three models’ curves converge to that of the coalescent covariance.}
    \label{fig:complete_variance_ratio}
\end{figure}

\begin{figure}[h]
    \centering
    \includegraphics[width=0.8\linewidth]{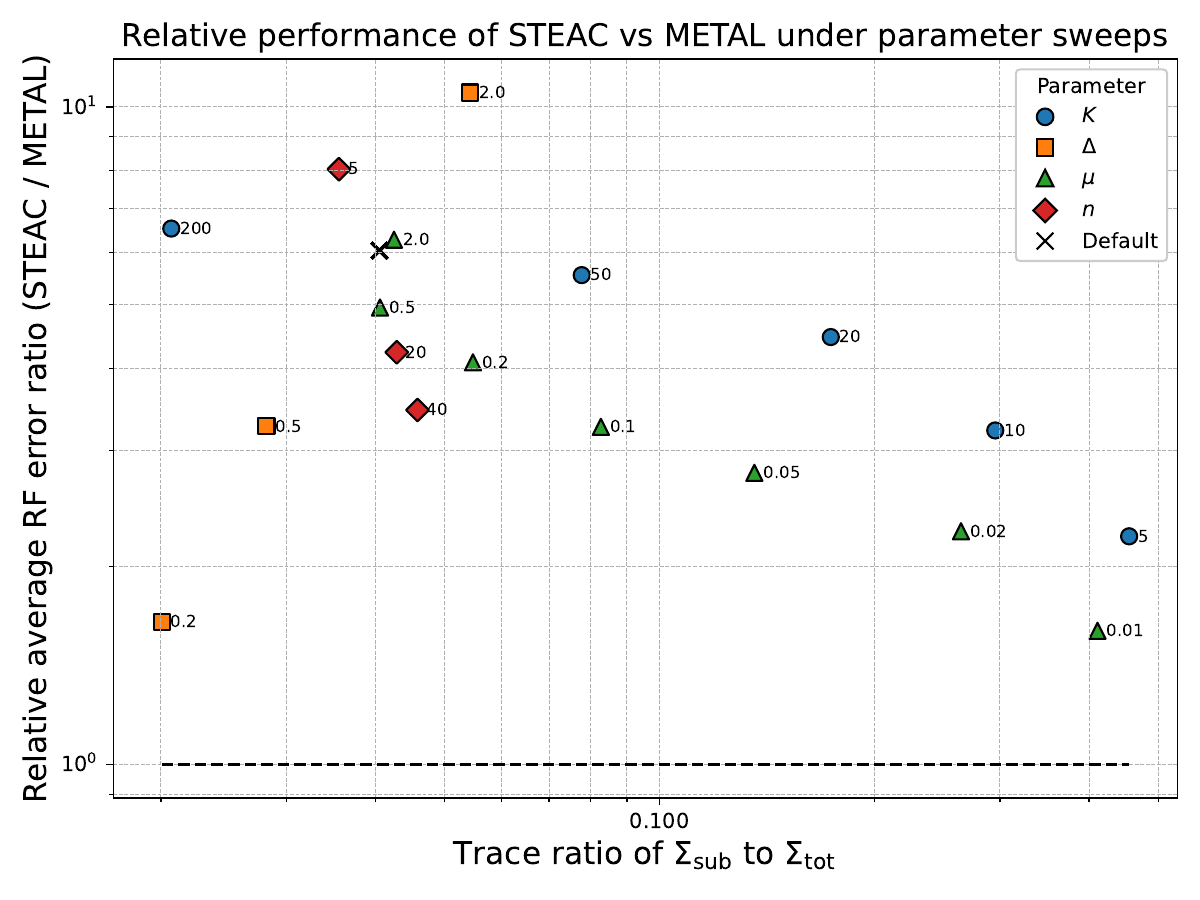}
    \caption{This Figures compares STEAC to METAL, similarly to Figure~\ref{fig:glass_vs_metal}.
    In all cases METAL outperforms STEAC}
    \label{fig:steac_vs_metal}
\end{figure}

\begin{figure}[h]
  \centering
  \begin{minipage}[b]{0.48\textwidth}
    \centering
    \includegraphics[width=\textwidth]{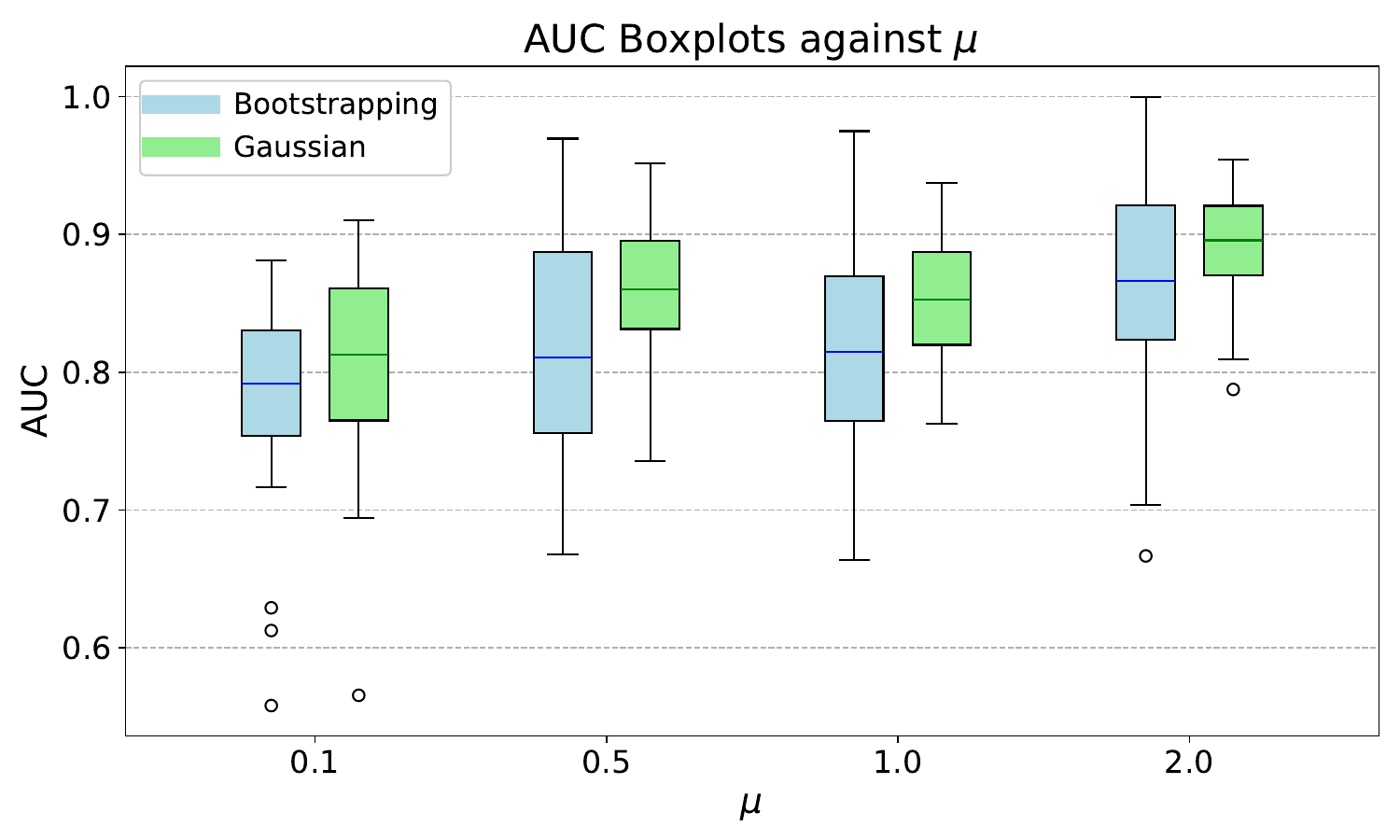}
  \end{minipage}
  \hfill
  \begin{minipage}[b]{0.48\textwidth}
    \centering
    \includegraphics[width=\textwidth]{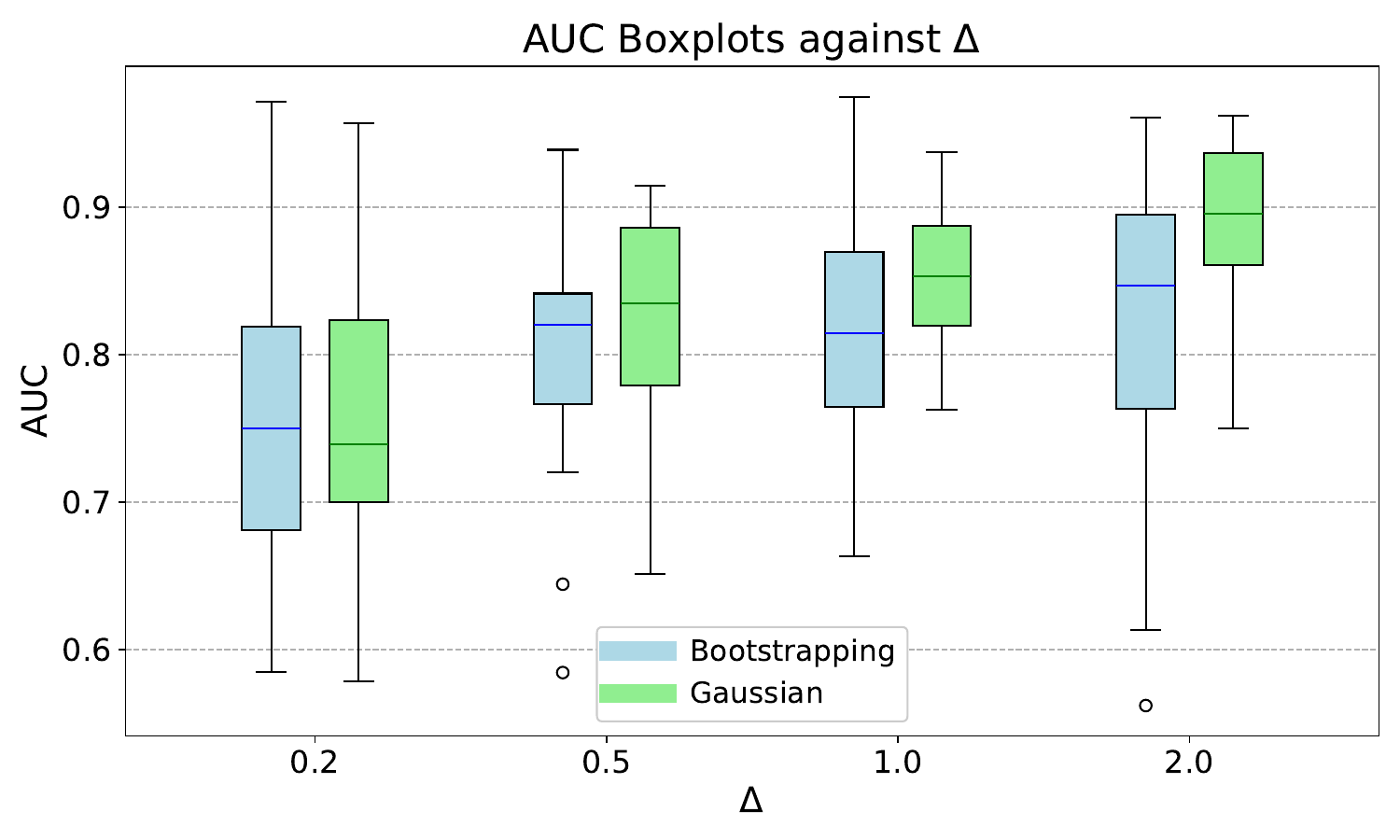}
  \end{minipage}
  \label{fig:appendix_auc_boxplots}
  \caption{Boxplots of AUC values 
  for different mutation rates $\mu$ (left), and different 
  species tree diameter $\Delta$ (right).}
\end{figure} 

\begin{figure}[h]
    \centering
    \includegraphics[width=0.7\linewidth]{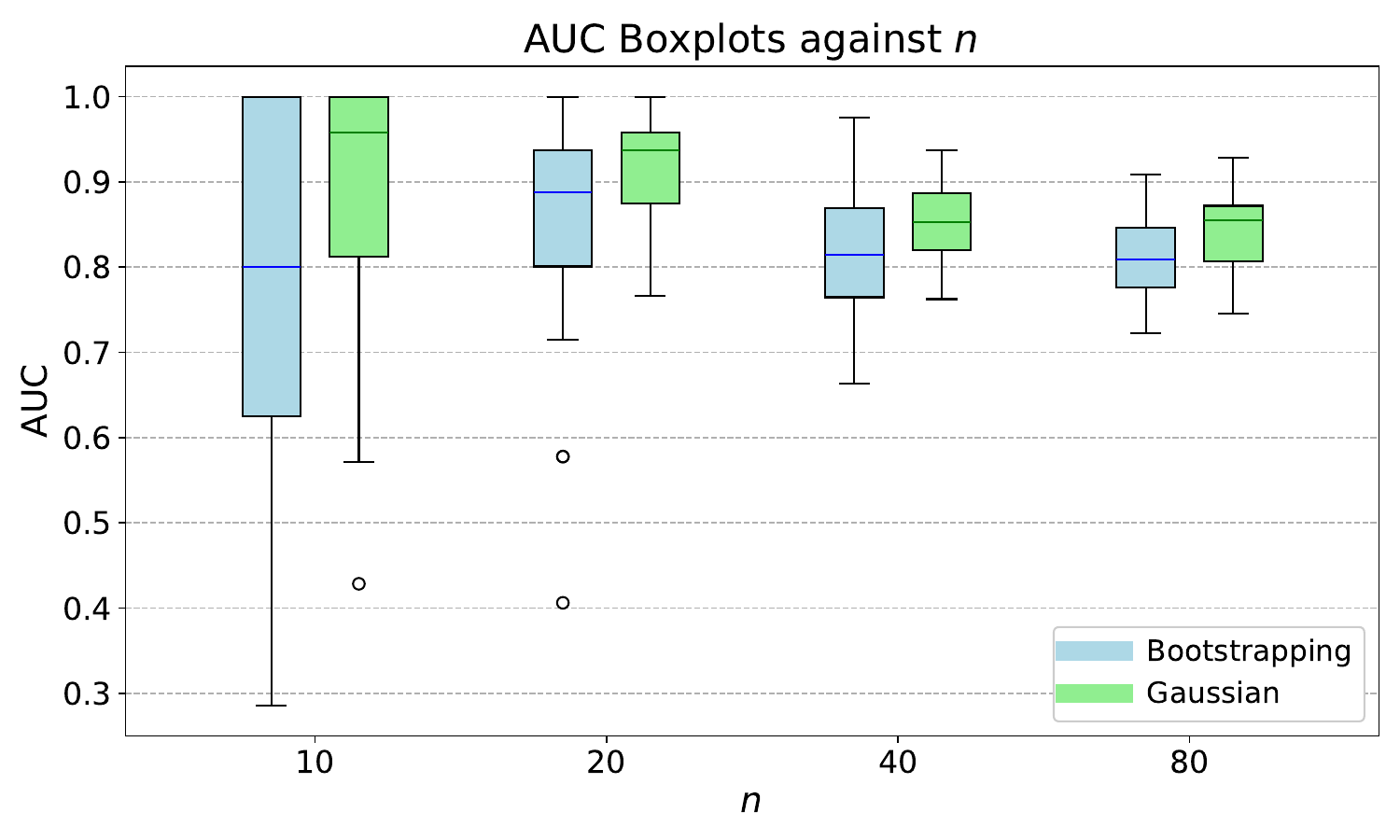}
    \label{fig:appendix_auc_boxplots_ntax}
    \caption{Boxplots of AUC values 
      for different number of taxa $n$.}
\end{figure}

\end{appendices}
\bibliographystyle{plain}
\bibliography{references}

\end{document}